\newcommand{\ignore}[1]{}
\newcommand{\ket}[1]{\left|#1\right\rangle}
\newcommand{\bra}[1]{\left\langle#1\right|}
\newcommand{\ketbra}[2]{ \left| #1 \right\rangle\left\langle #2 \right|}
\newcommand{\prnt}[1]{\left( #1 \right)}
\newcommand{\prntt}[1]{\left[ #1 \right]}\newcommand{\prnttt}[1]{\left\{ #1 \right\}}
\newcommand{\abs}[1]{\left| #1 \right|}
\newcommand\norm[1]{\left\lVert#1\right\rVert}
\newcommand\ceil[1]{\left\lceil #1 \right\rceil}
\DeclareMathOperator{\tr}{tr}
\def\th{\ensuremath{^\mathrm{th}} }
\def\BQP{\ensuremath{\textsc{BQP}}}
\def\PSPACE{\ensuremath{\textsc{PSPACE}}}
\def\PP{\ensuremath{\textsc{PP}}}
\newcommand{\Mod}[1]{\ \texttt{mod}\ #1}
\newtheorem{theorem}{Theorem}
\newtheorem{postulate}{Postulate}
\newtheorem{corollary}{Corollary}
\theoremstyle{definition}
\newtheorem{definition}{Definition}
\newtheorem{lemma}{Lemma}
\newtheorem{claim}{Claim}
\begin{document}

\bibliographystyle{unsrt}

\title{Fast-forwarding of Hamiltonians and Exponentially Precise Measurements} 

\author{Yosi Atia \& Dorit Aharonov} 
\affil{School of Computer Science and Engineering, 
The Hebrew University, Jerusalem,  Israel}
\maketitle
\begin{abstract}
In the early days of quantum mechanics, it was believed that the time 
energy uncertainty principle (TEUP) sets a bound on the 
efficiency of energy measurements, relating the duration, $\Delta t$, 
of the measurement, and the accuracy error in that measurement, $\Delta E$, by 
$\Delta E\cdot \Delta t\ge\frac{1}{2}$. 
Y. Aharonov and Bohm \cite{AB61} gave an example in which the principle 
doesn't hold; whereas Aharonov, Massar and Popescu \cite{AMP02} 
showed that under certain 
conditions the principle applies. Can we classify when and to 
what extent the TEUP is violated? 

Our main result is a 
theorem which unravels the source of such violations of the TEUP: 
such violations are equivalent to the ability to {\it fast 
forward} the associated Hamiltonian, 
namely, to simulate (using a quantum computer) its evolution for time $t$ using  significantly less than $t$ quantum gates. 
Precision measurements are thus in one to one correspondence with 
such fast-forwarding abilities, fundamentally linking quantum algorithms 
and precision measurements \cite{CPR00}. 
Our theorem is stated in terms of 
a modified TEUP, which we call the computational TEUP (cTEUP). 
In this principle the 
time duration ($\Delta t$) is replaced by the number of quantum 
gates required to perform the measurement, 
and we argue why this is 
more suitable to study if one is  
to understand the totality of physical resources required 
to perform an accurate measurement. 

The inspiration to our result is an intriguing example we provide: 
a family of Hamiltonians, based on Shor's 
algorithm \cite{Shor94}, which {\it exponentially} violates 
the cTEUP (as well as the TEUP), and which allow exponential fast forwarding.  
We further show that commuting local Hamiltonians, as well as any quadratic 
Hamiltonian of fermions on $n$ sites, can be fast forwarded. An  
important special case is Anderson localization. 
The work raises the question of finding a good physical criterion for 
fast forwarding; in particular, can   
many body localization systems be fast forwarded?  
We rule out a general fast-forwarding method for all 
physically realizable Hamiltonians (unless $\BQP=\PSPACE$). 
Connections to quantum metrology and to Susskind's  
complexification notion related to a wormhole's length are discussed, 
together with further directions.    
\end{abstract}

\section{Introduction}
In quantum mechanics, position and momentum are conjugate
variables, and it is well known that one can prove the position-momentum 
uncertainty principle $\Delta x \cdot \Delta p \ge \frac{1}{2}$ 
from the properties of the Fourier transform \cite{Peres93}.
Like position and momentum, frequency and time are also conjugate
variables, and since frequency represents energy in quantum theory, 
a natural question is whether time and energy also obey an 
uncertainty principle. 
Several obstacles arise immediately - time is not an operator but a scalar, hence it commutes with all quantum operators; the ``time'' quantity of a particle is not well defined; time doesn't evolve, and can't be a constant of motion. A common misconception of the time energy uncertainty principle 
(TEUP) is the following:\\
\textbf{TEUP Misconception} \textit{The duration $\Delta t$ of an energy measurement with accuracy error $\Delta E$ is bounded from below by }
\begin{equation}\label{eq:TEUPwrong}
\Delta E\cdot \Delta t \ge \frac{1}{2}.
\end{equation}

There are several other notions that are referred to in the literature
by the name time-energy uncertainty principle, or are sometimes 
discussed in its context 
(see, e.g., the 
survey of Busch \cite{Busch08}). In the related work section \ref{sec:related}  we will discuss our results 
in light of some of those (in particular, 
the Heisenberg limit\cite{GLM06,GLM11,ZpDK10,ZpDK12}
and the Mandelstam-Tamm relation
\cite{MT45}). 

The TEUP misconception, Equation \ref{eq:TEUPwrong} 
is indeed obeyed in many cases, such as measuring the energy of the electrons in an excited atom by detecting the emitted photon. The measurement duration depends on the lifetime of the excited state, and the lifetime is inversely proportional to the spectral linewidth which represents the uncertainty in energy (see, e.g., \cite{AMP02}).

However, already in 1961, Y. Aharonov and Bohm \cite{AB61} gave an example 
for a case in which the 
TEUP is violated: Suppose a Hamiltonian 
$H$ acts on a system $S$, and let $U_{meas.}$ be an energy measurement procedure which measures the energy 
of an eigenstate of the system $S$ with respect to $H$, to within some 
accuracy $\Delta E$. One can define a Hamiltonian $H_{meas.}$ which generates $U_{meas.}$:
\begin{equation} \label{eq:Hmeas}
U_{meas.}=e^{-iH_{meas.}\Delta t}. 
\end{equation}
Note that $H$ is the Hamiltonian  whose energy we are interested in measuring, can be viewed as the {\it input Hamiltonian}, while $H_{meas.}$ is an auxilary Hamiltonian which could be very different from $H$,  though must somehow be related to $H$ in order to perform an energy measurement of eigenstates of $H$. Aharonov and Bohm \cite{AB61} noticed that the measurement duration $\Delta t$ can be reduced arbitrarily 
if an amplified measurement Hamiltonian is used: 
\begin{equation} \label{eq:ABtrick}
U_{meas.}=e^{-iH_{meas.}'\Delta t'},
\end{equation}
with $H_{meas.}'=cH_{meas.}$ and $\Delta t'=\Delta t / c$, providing an 
arbitrarily large violation of the TEUP.  

On the other hand, Y. Aharonov, Massar and Popescu \cite{AMP02} showed (following initial results by Childs Preskill and Rene \cite{CPR00}) that
under the condition that {\it nothing is known about 
the Hamiltonian being measured}, such an amplification is impossible, and the 
TEUP does in fact hold.  
In other words, if the input Hamiltonian $H$ is treated as a {\it black box}, which the 
experimentalist can only turn on and off for some duration of time 
which he can control, but knows nothing about the inner makings of this box,
and in particular knows nothing about 
the eigenvalues and eigenvectors of the Hamiltonian, 
the TEUP holds. In fact, even lack of knowledge of only
the eigenvalues suffices
for the TEUP to hold. 

Thus, in some cases 
the TEUP is violated,  
whereas in the ``unknown Hamiltonian'' case 
it is obeyed. 
To what extent and in which situations 
can the TEUP be violated in physics? 

The main contribution of this paper is to connect this question to a notion 
which we call {\it fast forwarding} of Hamiltonians. 
Fast forwarding is the ability to simulate (using a quantum computer) 
the evolution of a given system governed by a certain Hamiltonian $H$ to within 
time $t$, but s.t. that the simulation takes time which is 
{\it much less than $t$}:  

\begin{definition} [fast forwarding a Hamiltonian] \label{def:FF} A normalized Hamiltonian $H$ ($\norm{H}=1$) acting on $n$ qubits can be 
$(T(n),\alpha(n))$-fast forwarded if for any $t\le T$, there exists a 
quantum circuit $\widetilde{U}$ with 
complexity $poly(n)$ which acts on the $n$ qubits and 
on additional $c=poly(n)$ ancilla qubits initialized to $0$,  
s.t.
\begin{equation}
\norm{(e^{-iHt}\otimes \mathbbm{1}_{2^c} -\widetilde{U})\ket{\psi}\otimes \ket{0}} \le \alpha 
\end{equation}
\end{definition}
 
It turns out that fast forwarding occurs if and only if the TEUP 
can be violated (a more precise definition of this violation will be given 
later). This suggests that the trade off between efficiency and accuracy in energy measurements can be studied using the computational language. The main object of this paper is to initiate such a study. 

We start in Section \ref{sec:cTEUP} where we introduce 
a variant of the TEUP which we call the computational TEUP (cTEUP), 
in which the time duration of the energy measurement
is replaced by the computational complexity of the 
energy measurement. We explain why we 
believe this to be a more adequate version to study in our context 
(and stress that this does not weaken our results, but just places them in 
a more accurate context).  
Section \ref{sec:ShorAsCounterexample} describes an intriguing example of 
an exponential violation of the cTEUP based on Shor's algorithm, which 
triggered this work. In this example one already sees the relation to 
fast forwarding.  
Section \ref{sec:FF} states and proves the 
equivalence between fast forwarding and exponentially precise efficient
energy measurements. We next proceed to study which Hamiltonians 
can be fast forwarded (or equivalently, their energy can be measured 
efficiently with very high accuracy). 
Section \ref{sec:FFPhysicalHamiltonians} shows that commuting local 
Hamiltonians,  
quadratic fermionic Hamiltonians, and Anderson localization are such systems. 
Section \ref{sec:2SparseFFimpBQPinPSPACE} proves that no general 
fast forwarding method exists for physical Hamiltonians, 
unless $\BQP=\PSPACE$.   
In Section \ref{sec:QAlgAndFF} we explore quantum
algorithms which can be associated with Hamiltonians, other than Shor's, 
and argue why none seem to exhibit 
fast forwarding (not even quadratic); we also remark on 
an observation raised by this study, related to the existence of a 
quantum algorithmic speed up for a cyclic version of the graph automorphism 
problem. 
Section \ref{sec:related}  
discusses the many connections between this work and other physical 
questions, including relations to metrology and sensing, the Heisenberg 
limit, and to a recent result by 
Susskind and Aaronson \cite{AS16p} regarding Susskind's conjecture on the relation between the complexity of certain quantum states and the length of 
non-traversible wormholes in quantum gravity\cite{Susskind16}. 
We end with some general conclusions and 
a few open questions in Section \ref{sec:conc}. 

We believe that this work makes an important step towards the fundamental question of characterizing and understanding 
the conditions for high resolution measurements;  
Our results suggest that studying this question 
within the framework 
of quantum information processing, and in particular  
taking into account 
the ability to apply quantum computations to aid the measurement 
process, might enable very interesting insights.
Hopefully, these insights will lead to experimental implications 
on precision measurements.

\section{A computational version of the TEUP} \label{sec:cTEUP}
Our first step is to suggest that a more adequate TEUP to investigate
is not the above postulate (Equation \ref{eq:TEUPwrong}) 
but a circuit complexity version of it.

To understand why such a modification is required, and what would that be, we return to the 
arbitrary violation of the TEUP given by \cite{AB61}. 
As Y. Aharonov and Bohm observed (see Equation 
\ref{eq:ABtrick}) 
the two resources of time and energy can be interchanged in their case, 
without affecting the accuracy of the measurement; 
increasing the norm of the interacting Hamiltonian can shorten the duration of the measurement. Since the TEUP 
doesn't take into account the interaction Hamiltonian's norm, 
there is no ``price'' for the large norm  
in the TEUP - we thus arrive at a violation.  
However, if one does take the norm into account in their case, no violation is achieved; 
see further in this section for more details.

What if we insist that the norm is kept bounded? 
It turns out that time can also be traded with some other resource.  
Consider the following trick. 
Let $U_{meas.}=e^{-iH_{meas.}t}$ be the time evolution corresponding to a Hamiltonian $H_{meas.}$ used by \cite{AB61} as the measurement Hamiltonian, whose
norm $\norm{H_{meas.}}$ can be arbitrary large. 
Writing 
\begin{equation}
U_{meas.}=\sum_j e^{-iE_j t} \ketbra{E_j}{E_j},
\end{equation}
as the time evolution by $H_{meas.}$, for time $t$, we notice that 
because the eigenvalues of $U_{meas.}$ are in the complex unit circle, 
we can always find 
an alternative Hamiltonian which generates the same
time evolution, but which is of bounded norm:   
$H'_{meas.}=\sum_j (E_j t \Mod 2\pi) \ketbra{E_j}{E_j}$.  
Evolving according to $H'_{meas.}$ for time $t'=1$ 
is equivalent to applying the unitary evolution $U_{meas.}$, 
and thus, we arrive at the same violation of the TEUP but with bounded 
norm and with finite time duration!  
Notice however, that now another resource is being heavily used: 
to apply $H'_{meas.}$, one needs to diagonalize the original 
Hamiltonian and compute its eigenvalues to extremely high precision.  
This could be highly demanding computationally. 

What is revealed by the above discussion is that once one allows 
manipulations while performing the energy measurement, such as 
increasing the norm, or modifying the Hamiltonian in other ways to 
achieve an ``equivalent'' energy measurement, the TEUP can be easily bypassed.  
Nevertheless the resources invested in the measurement have not decreased 
but were just interchanged with others.   
The ``correct'' notion that we would like to capture in the TEUP is not the 
time duration but the totality of
physical resources one is required to invest in a measurement.  
The underpinnings of the area of quantum computation 
(see \cite{BV97}) tell us exactly what is the right quantity 
to look at: the {\it computational 
complexity} of the measurement, 
namely, the size of the quantum circuit {\it simulating} 
the process of the measurement, where size is measured by 
the number of two-qubit quantum gates (see Definition in \cite{NC00}). 
This notion exactly takes into account 
all possible ways to apply a quantum process which results in a measurement 
of the energy of a state with respect to the given Hamiltonian.  
We thus postulate the following {\it computational complexity 
version} of the TEUP (which we denote by cTEUP):  

\begin{postulate}[computational TEUP (cTEUP)]    \label{conj:cTEUP} 
An energy measurement of an eigenstate of 
a Hamiltonian $H$, with accuracy error $\delta E$, satisfies
\begin{equation}
\delta E \cdot (\mathtt{measurement's~ computational~ complexity}) \in 
\Omega (1). 
\end{equation}
\end{postulate}

We note that $\delta E$ refers to  
the {\it accuracy error}, namely the difference 
between the correct eigenvalue and the outcome of the measurement. 

We need to be 
slightly careful in defining the error. First, the accuracy of the
outcome is only guaranteed with some probability, usually close to 
but not equal to $1$. We refer to this probability as the {\it confidence}.  
More precisely, we use the following notation:
 
\begin{definition} [$\eta$-accuracy]\label{def:eta-accuracy}
An energy measurement is said to have accuracy $\delta E$ with confidence
$\eta$ (we denote this as a measurement of $\eta$-accuracy $\delta E$) if 
given an eigenstate with energy $E$, the measurement outcome $E'$ satisfies

\begin{equation}\label{eq:eta}  
\Pr_{~~~E'} \prnt{\abs{E-E'}\le \delta E} \ge \eta. 
\end{equation}
\end{definition}

We usually set $\eta = 2/3$. 
Note that requiring the measurement to have accuracy error $\delta E$ with 
confidence $\eta$ is a slightly weaker requirement 
than the common requirement that the 
standard deviation is  $\delta E$. 
In particular, when the standard deviation is specified, 
it is assumed implicitly 
that the expectation of the outcome is the correct value $E$. 
However, the expectation of the outcome $E'$ of a
measurement of accuracy 
$\delta E$ and confidence $2/3$, 
might be arbitrarily far from $E$. Still, the {\it median} of 
many such measurements would be within 
$\delta E$ from $E$ with probability which approaches $1$   
exponentially fast in the number of repetitions 
(see the Confidence Amplification lemma, 
Lemma \ref{lem:ConfAmp}). The other direction does hold.  
When the 
expectation of the measurement is the correct value $E$, 
its standard deviation $std(E)$ can be seen to provide an upper  
bound on the $\nicefrac{2}{3}$-accuracy $\delta E$ (a la Definition 
\ref{def:eta-accuracy}): 
$\delta E \le \sqrt{3} std(E)$\footnote{This holds only if we assume that the values of the measurements 
are distributed by some smooth continuous distribution. 
If they are not, a slight modification of the definition is required: 
One needs to define $\delta E^\uparrow$ to be the supremum over all 
values for which $ \Pr_{E'} \prnt{\abs{E-E'}< \delta E^\uparrow} < \eta$ holds. 
Let $E$ be the energy of the Hamiltonian's eigenstate $\psi_E$. Assuming that the expectation of the energy measurement $\bar{E}$ coincides with $E$, 
\begin{equation*}
std(E)= \sqrt{\sum_{E'} \Pr(E') (E'-\bar{E})^2} \ge \sqrt{ \sum_{E': \abs{E'-\bar{E}}> \delta E^\uparrow} \Pr(E') (E'-\bar{E})^2} \ge \sqrt{ \sum_{E': \abs{E'-\bar{E}}> \delta E^\uparrow} \Pr(E') (\delta E^\uparrow) ^2}
\ge \sqrt{\frac{(\delta E^\uparrow)^2}{3}}=\frac{\delta E^\uparrow}{\sqrt{3}}
\end{equation*}
where in the last inequality we have assumed that 
$\delta E^\uparrow$ is the $2/3$ accuracy, with the modified definition. 
}.

There is one other source of error in the measurement process: 
the amount by which the measured 
state is modified, or {\it demolished}. Often 
one is interested in leaving the measured 
state in tact, as in non-demolition measurements \cite{BVT80}. 
To quantify this one can use any reasonable metric on quantum states, 
e.g. the fidelity or the trace metric \cite{NC00}. 
For now, we assume by default that the demolition of the state 
is set to be polynomially small in the number of qubits in the system.   
As we will see later, if the demolition error is that small, the 
confidence parameter can be easily amplified and thus the exact choice of 
$\eta$ doesn't matter to the question of whether a violation of the cTEUP 
is possible and to what extent\footnote{We have not investigated the case in which 
large demolition error is allowed; it might be that this 
does enable considerably more efficient measurements.}. 
 
Armed with this modified principle, Postulate \ref{conj:cTEUP}, 
we are able to clarify more adequately 
the true physical resources required for highly 
accurate and efficient measurements.
We stress that using the cTEUP instead of the TEUP 
does not make the question easier but just more 
fundamental, as it is closer to what we really want to understand, 
which is what are the physical limitations on precision measurements. 
Notice that violating the cTEUP 
implies violating the TEUP by definition; 
the other way round is not true, as in \cite{AB61} (see Section 2.1). 

\subsection{Completely known Hamiltonians} \label{sec:CompletelyKnownH}
A first natural question is: 
perhaps when taking all the resources into account, 
the cTEUP does hold? 

Indeed, the example in \cite{AB61} is no longer a counterexample. 
This is because 
the increase in the norm is 
reflected also in an increased computational complexity of the 
measurement:  
Let $f(n)$ be the time 
complexity of simulating the 
Hamiltonian $H_{meas.}$ of Equation \ref{eq:Hmeas} for one time unit. 
The naive way to simulate $cH_{meas.}$ for one time unit, in order 
to improve the accuracy by a factor of $c$,
is to concatenate $c$ copies of the circuit implementing $e^{-iH_{meas.}}$. 
This yields a total  time complexity $cf(n)$ - 
and the factor $c$ cancels with the 
one we get for the improvement in accuracy. Thus the cTEUP holds 
in the \cite{AB61} example.     

However, it turns out that a simple counterexample to the
cTEUP does exist. 
Here is a way to achieve an  
infinite violation of the cTEUP (as well as of the TEUP) 
using a simple Hamiltonian on $n$ spins (or qubits). Let
\begin{equation} \label{eq:hsigma}
H=\sum_{i=0}^{n}\sigma^z_i. 
\end{equation}
Given an eigenstate, which is a tensor product of the 
eigenstates of each of the $\sigma^z$'s, 
a measurement of each of the spins in the eigenbasis of the Pauli $\sigma^z$, 
(the computational complexity of this measurement is $O(n)$) 
reveals the eigenvalue to infinite precision, namely, with
$\delta E=0$. 
The demolition error however might be very large since most eigenstates 
are superposition of computational basis states. 
To avoid demolition altogether, 
an alternative measurement can be performed efficiently, using  
standard quantum computation tricks: 
Add a register of $\log(n)$ qubits all initiated in the state $0$, and 
apply the unitary version of the classical computation 
which computes $w(i)$, the number 
of 1's in the string $i$ of the original system, and writes it down on 
the additional register. In other words, apply the unitary operator:  
\begin{equation} 
U|i\rangle|0^{\log n}\rangle=|i\rangle|w(i)\rangle
\end{equation}
this can be done using $n$ times $poly(\log n)$ gates \cite{NC00}. 
Now measure the right register, 
which gives the correct energy with $\delta E=0$.

More generally, consider the $n$ qubit Hamiltonian 
$H = \sum_i \lambda_i \ket{\psi_i}\bra{\psi_i}$, and assume that we 
have full knowledge of its eigenstates and eigenvalues in the following 
sense: the functions
$\ket{i}\mapsto \ket{\psi_i}$ and $i\mapsto \lambda_i$ can be 
computed by a quantum computer in polynomial time in $n$.  
An infinite violation of the cTEUP can be achieved:  
One can first apply the unitary $U=\sum_i\ketbra{i}{\psi_i}$ on the state to be measured, use the function $i \mapsto \lambda_i$ to write the energy on an ancilla register, and measure the ancilla. Finally apply $U^{-1}$ to derive the original state again without any deviation. 

These infinite violations assume full knowledge of the 
eigenstates and eigenvalues of the Hamiltonian in the above sense.

\subsection{Hamiltonians with unknown Eigenvalues} 
It turns out, 
that when nothing is known about the 
Hamiltonian, or even just about its eigenvalues,
the cTEUP holds. The results of \cite{AMP02} show this for completely unknown Hamiltonians, namely, Hamiltonians which are given as black boxes. In fact, the proof applies as is also for Hamiltonians whose eigenstates are known but their eigenvalues are not. Theorem \ref{thm:TEUP23confidence} in
Appendix \ref{apndx:TEUPforUnknown} provides the exact statement and proof of 
the TEUP for Hamiltonians with unknown eigenvalues\cite{AMP02}, for 
completeness; it is slightly adapted to work 
in our terminology of Definition \ref{def:eta-accuracy} 
rather than in the mean deviation terminology of 
\cite{AMP02}. It is straight forward to argue that this theorem 
implies that also the cTEUP holds for Hamiltonians with unknown eigenvalues.  

\begin{theorem} [cTEUP for unknown Hamiltonians] \label{thm:cTEUPUnknown}
Let $H$ be a Hamiltonian whose eigenvalues are unknown (namely,  
 $H$ is taken from a set of Hamiltonians all of which 
have the same set of 
eigenvectors, but we know nothing about their eigenvalues). 
Let $G(H)$ be a quantum circuit which
applies the unitary $e^{-iH}$, given as a black box. 
Let $C(n)$ denote the computational complexity of a quantum circuit which, 
when given an input eigenstate of $H$, and which has 
access to $H$ only through the black 
box $G(H)$, performs
an energy measurement of the input state with respect to $H$ 
with accuracy $\delta E$ and confidence $2/3$. Then $C(n)$ satisfies:
\begin{equation}
\delta E \cdot C(n) \in \Omega(1).
\end{equation}
\end{theorem}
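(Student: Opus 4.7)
The plan is to derive Theorem \ref{thm:cTEUPUnknown} as a short corollary of the time-duration TEUP for unknown Hamiltonians (Theorem \ref{thm:TEUP23confidence} in Appendix \ref{apndx:TEUPforUnknown}), by observing that when $H$ is accessed only through the black-box gate $G(H) = e^{-iH}$, the total physical time the circuit spends evolving under $H$ cannot exceed its gate complexity $C(n)$. The proof therefore reduces to a simple counting argument combined with the already-established continuous-time bound.

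To make the reduction precise, I would first unpack the black-box model: any gate of the measurement circuit is either a standard two-qubit gate from the fixed universal gate set (which does not touch $H$ at all, and hence contributes zero $H$-evolution time) or an invocation of $G(H)$, which represents exactly one unit of evolution under $H$. If the circuit contains $N$ invocations of $G(H)$, then $N \le C(n)$ since each invocation itself is counted as a gate in $C(n)$. Consequently, the total $H$-evolution time accumulated by running the circuit is $\Delta t \le N \le C(n)$. Substituting this into the conclusion of Theorem \ref{thm:TEUP23confidence}, which asserts $\delta E \cdot \Delta t \in \Omega(1)$ for any $2/3$-accuracy-$\delta E$ measurement performed via black-box access to $e^{-iH t}$, immediately yields $\delta E \cdot C(n) \in \Omega(1)$, as claimed.

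The step that I expect to demand the most care is verifying that Theorem \ref{thm:TEUP23confidence} is robust enough to cover the precise form of black-box access produced by our reduction: calls to $G(H)$ are discrete (each of duration one), they are interleaved with arbitrary polynomial-size ancilla-assisted unitaries, and they may even be turned into controlled operations by standard conjugation tricks. The AMP-style argument behind Theorem \ref{thm:TEUP23confidence} is naturally robust to all of these variations, since the only information about $H$ that can leak into the computation is the accumulated phase on the known eigenstates after at most $\Delta t$ total evolution time; two candidate Hamiltonians whose eigenvalues differ by less than a suitable $\Omega(1/\Delta t)$ then induce unitaries on the full system-plus-ancilla register that are too close in operator norm to be distinguished with confidence $2/3$. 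Once this indistinguishability is in hand, nothing else is needed beyond the counting bound $\Delta t \le C(n)$ to complete the proof.
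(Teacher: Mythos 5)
Your proposal is correct and follows essentially the same route as the paper: both reduce to the time-duration TEUP of Theorem \ref{thm:TEUP23confidence} by counting invocations of the black box $G(H)$, each of which contributes one unit of $H$-evolution time, so that $\Delta t \le C(n)$ and hence $\delta E \cdot C(n) \ge \delta E \cdot \Delta t \in \Omega(1)$. Your added care about discrete, interleaved, and controlled calls matches the paper's remark that the appendix theorem already accounts for applying $H$ in intervals and on different parts of the system.
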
 
\begin{proof} The proof follows trivially from Theorem 
\ref{thm:TEUP23confidence}, which
gives a lower bound for the total time duration $\Delta t$ that the Hamiltonian
$H$ is applied to achieve such an accurate energy measurement, while taking into account the possibility of applying H in intervals and on different parts of the system.   
Since the access to $H$ is only by using the circuit $G(H)$, which 
applied $H$ for one time unit, the number of 
instances of $G(H)$ being used is $\Omega(\Delta t)$. 
Hence $C(n)\in \Omega (\Delta t)=\Omega(1/ \delta E)$ by Theorem \ref{thm:TEUP23confidence}.  
\end{proof}

Given the two extreme cases described in the above two 
subsections, of completely known Hamiltonians versus 
Hamiltonians where no information on the eigenvalues is given, 
we are confronted with the question: 
which Hamiltonians of the more common type, 
that are neither fully known nor fully unknown, 
allow violating the cTEUP? And to what extent? 

\section{An Exponential violation of the cTEUP based on Shor's algorithm} \label{sec:ShorAsCounterexample}
We describe an enlightening example, based on Shor's 
 polynomial time quantum algorithm \cite{Shor94} 
 for finding the prime factors of a
given integer, which exhibits an exponential speed-up over all known classical 
algorithms for the same task. This algorithm
gives rise to a family of Hamiltonians 
whose eigenvalues and eigenvectors are {\it not} known to us in advance,
(computationally) and yet these Hamiltonians constitute 
a counterexample to the cTEUP (postulate \ref{conj:cTEUP}); moreover, 
they violate it {\it exponentially}.

To define the Hamiltonians, we first recall the essential steps in the
algorithm. It factors 
an $n$-bit number $N$ by finding the order $r$ 
of a randomly chosen $y$ co-prime to $N$ (i.e., $gcd(y,N)=1$), namely  
the period of the sequence 
$y^0, y^1, y^2...$. (If one knows how to find such an $r$, 
the factors of $N$ can then be found 
using an efficient classical algorithmic procedure \cite{NC00}).   
To find the order $r$, 
the algorithm uses the unitary $U_{N,y}$ acting on $n$ bit strings, 
defined as the application of multiplication by $y$ 
modulo $N$, where as usual we identify numbers with their
binary representation, namely strings of $n$ bits: 

\begin{equation}
U_{N,y}\ket{x}=\begin{cases}
\ket{x\cdot y ~\mathtt{mod}~ N} & 0\le x<N \\
\ket{x} & otherwise\\
\end{cases}
\end{equation}

Recall the orbit-stabilizer theorem  \cite {Rot12} which implies 
that $U_{N,y}$ partitions the 
set $\prnttt{0,1,...N-1}$ into orbits, each one being 
the orbit of some representative element in the set, 
and that the size of each orbit divides $r$. 

Our candidate Hamiltonian for violating the cTEUP is 
 defined as  
\begin{equation} \label{eq:ShorsNiceH}
H_{N,y}=U_{N,y}+U_{N,y}^\dagger.
\end{equation}

$H_{N,y}$ has only two non-zero elements in each row. 
Note that in physics, $H_{N,y}$ describes a tight binding model of 
several disjoint 1D lattices with periodical 
boundary conditions; except here these lattices are not physical but virtual, 
and in particular the size of these lattices
is exponential in the physical system's size $n$. 
We also note that $H_{N,y}$ can be simulable efficiently by a quantum computer,
in time $poly(n)$, because it has only two non-zero
 efficiently-computable elements in a row; 
that such Hamiltonians can be implemented efficiently is well known 
\cite{AT03}. 

To achieve efficient and exponentially accurate 
measurement of the eigenvalues of $H_{N,y}$, we use the fact
that $H_{N,y}$ shares the same 
eigenvectors with $U_{N,y}$, and their eigenvalues are related in a simple way.  
To calculate the eigenvalues of $U_{N,y}$, denote by
$x_\ell$ a representative of the $\ell$\th orbit in the partitioning of 
$\prnttt{0,1,...N-1}$ into orbits by $U_{N,y}$; denote this $\ell$th orbit by  
$\mathcal{O}(x_\ell)$. Then 
the eigenstates of $U_{N,y}$ (and of $H_{N,y}$) are of the form
\begin{equation}
\ket{\psi_{\ell,k_\ell}}=\sum_{j=0}^{\abs{\mathcal{O}(x_\ell)}-1} e^{\frac{2\pi i j k_\ell}{\abs{\mathcal{O}(x_\ell)}}} \ket{x_\ell \cdot y^j \Mod N}~~~~~~~~~\begin{array}{c}
\ell\in\prnttt{1,2,...,\# orbits}\\
k_\ell\in \prnttt{0,1,...,\abs{\mathcal{O}(x_\ell)}-1}\\
\end{array} 
\end{equation}
The eigenvalue of $\psi_{\ell,k_\ell}$ with respect to $U_{N,y}$ is 
$e^{i\varphi}=e^{\frac{2\pi i k_\ell}{\abs{\mathcal{O}(x_\ell)}}}$.
The eigenvalue with respect to $H_{N,y}$ is $E_{\ell,k_\ell}=2\cos(\varphi)$.

The efficient estimation of the eigenvalues of $H_{N,y}$ to within 
exponential accuracy is a standard exercise 
in quantum computation, by applying phase estimation with respect to the 
unitary $U_{N,y}$, and using the fact that exponential powers of $U_{N,y}$
can be applied efficiently using modular exponentiation. 
This is shown more rigorously in the following theorem. 
This proves that these Hamiltonians exhibit an 
exponential violation of the cTEUP, postulate \ref{conj:cTEUP}. 

\begin{theorem} \label{thm:ShorCounterexample}
Consider $H_{N,y}$ as above, such that $gcd(y,N)=1$, and where $N$ is an 
$n$-bit integer. 
There exists an energy measurement procedure which 
given any eigenstate of $H_{N,y}$ has 
$\nicefrac{2}{3}$-accuracy $\delta E$ such that: 
\begin{equation} 
\delta E \cdot (\mathtt{measurement's~ computational~ complexity}) 
\approx 1/exp(n).
\end{equation}
The measurement procedure is such that
the given eigenstate remains 
in tact.  
\end{theorem}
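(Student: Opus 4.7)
The plan is to realize the energy measurement as standard phase estimation applied to $U_{N,y}$, together with the classical post-processing $\varphi\mapsto 2\cos\varphi$, and then bound the two sources of error separately.

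First I would set up the circuit. Choose $m=\mathrm{poly}(n)$ ancilla qubits, initialize them in $\ket{0^m}$, apply a Hadamard to each, and implement the usual controlled-$U_{N,y}^{2^k}$ operations for $k=0,1,\dots,m-1$ on the system register holding the input eigenstate. Each controlled power is implemented via modular exponentiation of $y^{2^k}\Mod N$, which is the precise subroutine Shor uses and which runs in $\mathrm{poly}(n,k)$ gates; summing over $k<m$ gives a total gate count of $\mathrm{poly}(n,m)$. An inverse QFT on the ancilla register followed by a computational-basis measurement yields an integer $\widetilde{K}$, which is interpreted as the estimate $\widetilde{\varphi}=2\pi\widetilde{K}/2^m$ of the phase $\varphi$ of the eigenvalue $e^{i\varphi}$ of $U_{N,y}$ on the input. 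Finally, one computes $\widetilde{E}=2\cos(\widetilde{\varphi})$ classically and outputs it.

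Next I would analyze the accuracy. The input is by assumption an eigenstate $\ket{\psi_{\ell,k_\ell}}$ of $U_{N,y}$ with phase $\varphi=2\pi k_\ell/|\mathcal O(x_\ell)|$, so the standard analysis of phase estimation (as in \cite{NC00}) guarantees that with probability at least, say, $\tfrac{2}{3}$ (in fact one can do better), the output satisfies $|\widetilde{\varphi}-\varphi|\le C\cdot 2^{-m}$ for a small constant $C$. Since $E=2\cos\varphi$ has $|dE/d\varphi|\le 2$, the corresponding energy error obeys $|\widetilde E-E|\le 2C\cdot 2^{-m}=:\delta E$. Choosing $m$ linear in $n$ makes $\delta E=2^{-\Omega(n)}$, while the gate count remains $\mathrm{poly}(n)$; multiplying yields $\delta E\cdot(\mbox{complexity})\le 1/\exp(n)$, as required.

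For the non-demolition part I would argue as follows. When phase estimation is run on an exact eigenstate of the target unitary, the system register is never entangled with the ancillas beyond global phases; after the inverse QFT and measurement, the system register is left in exactly $\ket{\psi_{\ell,k_\ell}}$. Thus the state is returned undisturbed, so the demolition error is zero. The only subtle point is confidence amplification: if $2/3$ is not high enough one can use the Confidence Amplification lemma (Lemma \ref{lem:ConfAmp}) by running the procedure $O(\log(1/\beta))$ times and taking the median, each run costing $\mathrm{poly}(n)$ and each preserving the eigenstate.

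The main (and essentially only) obstacle is the careful combination of accuracy parameters: one has to set $m$ so that the combined error from the phase-estimation tail bound and from the $\varphi\mapsto 2\cos\varphi$ map is within $\delta E$ with the required confidence, while the gate count stays $\mathrm{poly}(n)$. Since controlled modular exponentiation costs only $\mathrm{poly}(n,m)$ even for $m=\Theta(n)$, this is quantitatively easy; all the conceptual content is already packaged in the known primitives of phase estimation and modular exponentiation.
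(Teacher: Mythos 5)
Your proposal is correct, and it is in fact the route the paper explicitly declines to take: the authors write that they ``could have used the standard Fourier transform based phase estimation procedure'' but instead use Kitaev's iterative phase estimation with classical post-processing, to emphasize the role of fast forwarding. The substantive differences are these. The paper's iterative scheme estimates $\varphi$ one interval-halving at a time from the statistics $p_j=\sin^2((\varphi-\varphi_{min}^j)t_j/2)$, which forces it to (i) introduce a ternary overlapping-interval trick to avoid borderline failures where $p_j$ is exponentially close to a decision threshold, (ii) prove a bespoke Chernoff-based confidence bound (Lemma \ref{lem:TernaryKitaev}), and (iii) run an extra wrapper algorithm to resolve the $\varphi\leftrightarrow 2\pi-\varphi$ mirror ambiguity inherent in measuring only $\sin^2$. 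Your QFT-based circuit avoids all three complications: the standard tail bound (the paper's own Lemma \ref{lem:PEConfidence}, i.e.\ Section 5.2.1 of \cite{NC00}, which the paper does use later in the proof of Claim \ref{cl:FF2SEEM}) directly gives accuracy $O(2^{-m})$ with constant confidence, there is no mirror ambiguity, and the ambiguity would in any case be harmless for the energy since $2\cos\varphi=2\cos(2\pi-\varphi)$. Your non-demolition argument (exact eigenstates only acquire kicked-back phases, so the system register factors out) and the gate count for controlled modular exponentiation with $m=\Theta(n)$ are both sound. What the paper's longer route buys is pedagogical: it isolates the single ingredient that makes the measurement possible, namely the ability to apply $U_{N,y}^{2^k}$ for exponentially large $k$ in polynomial time, which is exactly the fast-forwarding phenomenon the rest of the paper is about. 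Your route buys brevity and reuse of a standard, already-cited lemma.
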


Importantly, we notice that though it might seem that the eigenvectors and 
eigenvalues of the Hamiltonians $H_{N,y}$ 
are known here in advance, in fact they are not - 
because they depend on $r$, which is computationally 
not known, and so this violation occurs despite the fact that we are 
not in the situation of a fully known Hamiltonian case.  To prove Theorem \ref{thm:ShorCounterexample}, we could have used the standard Fourier transform based phase estimation procedure. We choose to avoid this here and use Kitaev's original phase estimation procedure\cite{Kitaev95}, with classical post processing, since this emphasizes the role of fast forwarding in the process. 

\begin{proof}
Let  $\psi_\varphi$ be an eigenstate of $U_{N,y}$. By adding a control 
qubit in the state $\frac{1}{\sqrt{2}}(\ket{0}+\ket{1})$ in another 
register, and applying $U_{N,y}^t$ conditioned on the control 
qubit being $1$, the state of the control qubit becomes: 
\begin{equation} 
\ket{a_t}=\frac{1}{\sqrt{2}}(\ket{0}+e^{-i\varphi t }\ket{1}).
\end{equation} 
Measuring this qubit  
in the $\{\ket{+},\ket{-}\}$ basis yields ``$-$'' with 
probability $p=\sin^2(\varphi t/2)$.

Roughly, the idea is this.  
We start with $t=1$, prepare polynomially many such control qubits, 
measure all of them in the $\{\ket{+},\ket{-}\}$ basis, 
and by taking the majority we can get a 
very good estimate of one bit of $p$, and thus of $\varphi$ and $E$.
Chernoff bound tells us that by using polynomially many control qubits, 
we should have exponentially good confidence of this bit.  
To gain information about further bits, we can then repeat for
$t=2,4,8,...$, multiplying the time by two for every extra 
bit we want to estimate. We note that 
modular exponentiation allows an efficient implementation of $U_{N,y}^t$ 
for exponential values of $t$, so this is allowed. 
However there is some subtlety here. 
When the value which we are trying to estimate 
lies exactly or extremely close to the borderline between two possibilities,
e.g., $p$ is exponentially close to $1/2$, 
we are likely to make an error; the Chernoff bound only gives polynomial 
error with exponential confidence.  

We overcome this technical issue by a simple trick: we will indeed cut the size of the 
interval containing $\varphi$ to half every time we want to increase the accuracy, 
but the interval will not be either the left or the right side of the current 
interval - it can also be in the middle.

Formally, define 
$t_j= 2^{j-1}$ 
for $j\in \{1,2,...poly(n)\}$.  
In the $j$\th iteration the phase measured is assumed to be in the interval $[\varphi_{min}^{j}, \varphi_{max}^{j}]$, with $\Delta_j \triangleq \varphi_{max}^{j}-\varphi_{min}^{j}=\pi/t_j$. We generate $m=poly(n)$ many control qubits in 
the state 
\begin{equation} \label{eq:biasedcoin}
\frac{1}{\sqrt{2}}(\ket{0}+e^{-i (\varphi-\varphi_{min}^{j})\cdot t_j}\ket{1}),
\end{equation} 
apply Hadamard on each one of these qubits 
and measure in the $\{\ket{0},\ket{1}\}$ basis. 
This can be done efficiently:  the 
phase $\varphi t_j$ is added efficiently by modular exponentiation, and the known phase dependent on $\varphi_{min}^{j}$ is added by using standard quantum computation techniques 
(See Figure \ref{fig:KitaevPE2} for a schematic description of the circuit for 
one such control qubit). The outcome $1$ is achieved
with probability 
\begin{equation}  \label{eq:PvsE}
p_j=\sin^2 ((\varphi-\varphi_{min}^{j})t_j/2)=
\sin^2 \prnt{\frac{\varphi-\varphi_{min}^{j}}{\Delta_j} \cdot \frac{\pi}{2}}
\end{equation}
Let $\widetilde p_j$ be the fraction of the measurements with $1$ outcome, and let $\widetilde \varphi_j= \frac{2}{t_j}\arcsin(\sqrt {\widetilde {p}_{j}})+\varphi_{min}^j$.
It will serve as the current estimate of $\varphi$.  
For the $(j+1)$th iteration's interval (with length $\Delta_{j+1}$), 
we choose
\begin{equation} \label{eq:Emin}
\varphi_{min}^{j+1}=\begin{cases}
\varphi_{min}^{j} & ~~~\widetilde \varphi - \varphi_{min}^{j} \le \frac{\Delta_{j}}{3} ~~~ \prnt{\widetilde{p} \le \frac{1}{4}}\\
\varphi_{min}^{j} + \frac{\Delta_{j}}{4} & ~~~ \frac{\Delta_{j}}{3} < \widetilde \varphi - \varphi_{min}^{j} \le \frac{2\Delta_{j}}{3} ~~~ \prnt{\frac{1}{4} < \widetilde p \le \frac{3}{4}} \\
\varphi_{min}^{j} + \frac{\Delta_{j}}{2} & ~~~ \widetilde \varphi - \varphi_{min}^{j} > \frac{2\Delta_{j}}{3} ~~~ \prnt{\widetilde p > \frac{3}{4}}\\
\end{cases} 
\end{equation}
and of course 
$\varphi_{max}^{j+1} = \varphi_{min}^{j+1}+\Delta_{j+1}$.

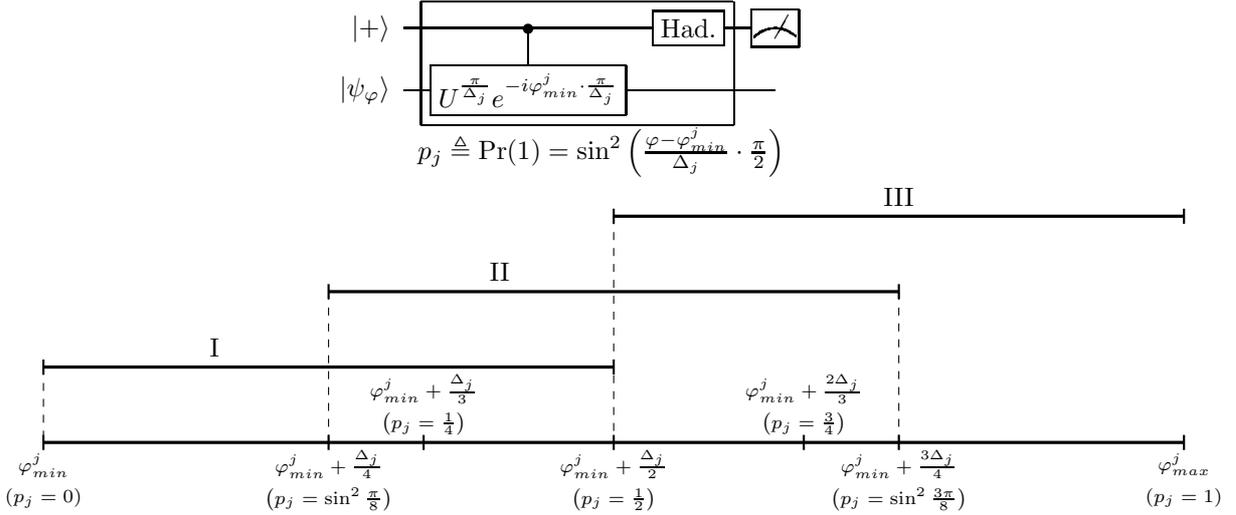
\begin{figure} [H]
\begin{center}
$
\Qcircuit @C=1em @R=.7em {
\lstick{\ket{+}} & \ctrl{1} & \gate{\mathrm{Had.}} &\meter
\\ 
\lstick{\ket{\psi_\varphi}}& \gate{U^{\frac{\pi}{\Delta_j}} e^{-i\varphi_{min}^j\cdot \frac{\pi}{\Delta_j}} } &\qw &\qw
\gategroup{1}{2}{2}{3}{.7em}{-}
}
$

$
p_j\triangleq \Pr(1)=\sin^2\prnt{\frac{\varphi-\varphi^j_{min}}{\Delta_j}\cdot \frac{\pi}{2}}
$
\\
\begin{tikzpicture}
\draw [very thick] (0,3) -- (15,3);
\draw [thick] (0,2.9) -- (0,3.1) node [pos=-2,align=center] {\scriptsize {$\varphi_{min}^j$} \\  {\scriptsize $\prnt{p_j=0}$}};
\draw [thick] (15/4,2.9) -- (15/4,3.1) node [pos=-2,align=center] { {\scriptsize $\varphi_{min}^j+\frac{\Delta_j}{4}$}\\ {\scriptsize $\prnt{p_j=\sin^2\frac{\pi}{8}}$}};
\draw [thick] (15/3,2.9) -- (15/3,3.1) node [pos=+3,align=center] {\scriptsize {$\varphi_{min}^j+\frac{\Delta_j}{3}$}\\ {\scriptsize $\prnt{p_j=\frac{1}{4}}$}};
\draw [thick] (15/2,2.9) -- (15/2,3.1) node [pos=-2,align=center] {\scriptsize {$\varphi_{min}^j+\frac{\Delta_j}{2}$}\\ {\scriptsize $\prnt{p_j=\frac{1}{2}}$}};
\draw [thick] (15*2/3,2.9) -- (15*2/3,3.1) node [pos=+3,align=center] {\scriptsize {$\varphi_{min}^j+\frac{2\Delta_j}{3}$}\\ {\scriptsize $\prnt{p_j=\frac{3}{4}}$}};
\draw [thick] (15*3/4,2.9) -- (15*3/4,3.1) node [pos=-2,align=center] {\scriptsize {$\varphi_{min}^j+\frac{3\Delta_j}{4}$}\\ {\scriptsize $\prnt{p_j= \sin^2\frac{3\pi}{8}}$}};
\draw [thick] (15,2.9) -- (15,3.1) node [pos=-2,align=center] {\scriptsize {$\varphi_{max}^j$}\\ {\scriptsize $\prnt{p_j=1}$}};
\draw [very thick] (0,4) -- (7.5,4) node [pos=0.3, above] {I};
\draw [thick] (0,3.9) -- (0,4.1);
\draw [thick] (7.5,3.9) -- (7.5,4.1);

\draw [very thick] (15/4,5) -- (15*3/4,5) node [pos=0.3, above] {II};
\draw [thick] (15/4,5.1) -- (15/4,4.9);
\draw [thick] (15*3/4,5.1) -- (15*3/4,4.9);

\draw [very thick] (15/2,6) -- (15,6) node [pos=0.5, above] {III};
\draw [thick] (15,5.9) -- (15,6.1);
\draw [thick] (15/2,5.9) -- (15/2,6.1);

\draw [dashed] (0,4)--(0,3);
\draw [dashed] (15/4,5)--(15/4,3);
\draw [dashed] (15/2,6)--(15/2,3);
\draw [dashed] (15*3/4,5)--(15*3/4,3);

\end{tikzpicture}

\end{center}
\caption{\label{fig:KitaevPE2}Phase estimation without Fourier transform. 
At the $j\th$ iteration of the procedure, 
the phase $\varphi$ is assumed to be in some interval $[\varphi_{min}^j,\varphi_{max}^j]$ 
of length $\Delta_j=2^{1-j}\pi$, which is cut to half at every iteration. 
At each iteration, the circuit (top) is applied $m=poly(n)$ 
times, and $\widetilde p_j$ denotes the ratio of $1$s measured. 
The next interval is chosen as part I (bottom of figure) 
if $\widetilde p_j\le\frac{1}{4}$, part III if $ \widetilde p_j>\frac{3}{4}$, and 
part II otherwise. 
$\varphi$ is outside the new interval only if $|p_j - \widetilde p_j|\ge\frac{1}{4}-\sin^2\prnt{\frac{\pi}{8}}$, and
by Chernoff, this probability is exponentially small in $m$. 
After $\ell=poly(n)$ iterations the interval's size
is $2^{-\ell}\Delta_1$, and by the union bound the confidence is 
exponentially close to 1.}
\end{figure} 

Assuming we chose the correct interval in every iteration, after $\ell$ 
iterations we know the phase with accuracy $\Delta_{\ell+1}= 2^{-\ell}\pi$.

\begin{lemma} \label{lem:TernaryKitaev}
Let $e^{i\varphi}$ be the eignenvalue of an $n$ qubit unitary $U$, and   $\varphi\in [\varphi_{min}^1,\varphi_{min}^1+\pi]$ for a known $\varphi_{min}^1$. The phase estimation procedure described above, denoted $f(U,[\varphi_{min}^1,\varphi_{min}^1+\pi])$, finds $\varphi$ with accuracy $2^{-\ell}\pi$ and with confidence greater than $1-\ell e^{-m/160}$, where $\ell$ is the number of iterations, and $m$ is the number of times the circuit in Figure \ref{fig:KitaevPE2} is applied in each iteration.
\end{lemma}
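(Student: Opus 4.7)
My plan is to analyze a single iteration of the ternary refinement procedure, show that the selected subinterval fails to contain $\varphi$ only when the empirical frequency $\widetilde p_j$ deviates noticeably from the true Bernoulli bias $p_j$ of equation \ref{eq:PvsE}, and then combine the $\ell$ iterations via a union bound.

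First I would record the monotonicity structure of $p_j(\varphi)$ on the current interval. On $[\varphi_{min}^j,\varphi_{min}^j+\Delta_j]$, the function $p_j=\sin^2\prnt{(\varphi-\varphi_{min}^j)t_j/2}$ is strictly increasing from $0$ to $1$, hitting $\sin^2(\pi/8)$, $1/2$ and $\cos^2(\pi/8)=1-\sin^2(\pi/8)$ at the four quartiles $\varphi_{min}^j+\Delta_j/4,\,\varphi_{min}^j+\Delta_j/2,\,\varphi_{min}^j+3\Delta_j/4$ that delimit the three candidate intervals I, II, III of Figure \ref{fig:KitaevPE2}.

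Next I would split cases on which subinterval was chosen, assuming the chosen one does not contain $\varphi$. If I is picked (so $\widetilde p_j\le 1/4$) but $\varphi>\varphi_{min}^j+\Delta_j/2$, then $p_j>1/2$ and therefore $\abs{\widetilde p_j-p_j}>1/4$; case III is symmetric. The tight case is II: if $1/4<\widetilde p_j\le 3/4$ but $\varphi$ is outside II, then either $p_j<\sin^2(\pi/8)$ or $p_j>1-\sin^2(\pi/8)$, and in either direction $\abs{\widetilde p_j-p_j}>1/4-\sin^2(\pi/8)=(\sqrt 2-1)/4$. Setting $\eps:=(\sqrt 2-1)/4$, a single iteration fails only on the event $\abs{\widetilde p_j-p_j}>\eps$.

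I would then apply Hoeffding's inequality to the $m$ i.i.d.\ Bernoulli$(p_j)$ trials feeding into $\widetilde p_j$, giving $\Pr\prntt{\abs{\widetilde p_j-p_j}>\eps}\le 2e^{-2m\eps^2}$. Since $2\eps^2=(3-2\sqrt 2)/8\approx 0.0214$ comfortably exceeds $1/160=0.00625$, the inequality $2e^{-2m\eps^2}\le e^{-m/160}$ holds once $m$ is past a small absolute constant (the bound being trivial otherwise), so the per-iteration failure probability is at most $e^{-m/160}$. A union bound over the $\ell$ iterations yields total failure probability at most $\ell e^{-m/160}$, and conditioning on success at every level the final interval has length $\Delta_{\ell+1}=2^{-\ell}\pi$ and contains $\varphi$. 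The only genuine step here is the Case II analysis that pins down the governing constant $\eps$: the thresholds $1/4$ and $3/4$ are exactly the ones that leave a positive margin from the extremal values $\sin^2(\pi/8)$ and $1-\sin^2(\pi/8)$, and this margin is what the Chernoff/Hoeffding tail has to beat; everything else is routine concentration plus the geometric series of halvings.
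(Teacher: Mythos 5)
Your argument is essentially the paper's: the paper's proof also reduces a per-iteration failure to a constant deviation of $\widetilde p_j$ from $p_j$ and finishes with a Chernoff-type tail plus a union bound over the $\ell$ iterations (indeed, the figure caption states exactly your unified criterion $\abs{\widetilde p_j - p_j}\ge \tfrac14-\sin^2(\pi/8)$). The only difference is cosmetic: the paper splits into four cases and applies multiplicative Chernoff bounds to each, whereas you collapse everything into the single two-sided event $\abs{\widetilde p_j-p_j}>\varepsilon$ with $\varepsilon=\tfrac14-\sin^2(\pi/8)=(\sqrt2-1)/4$ and invoke Hoeffding; your version is cleaner and your case analysis of why failure forces this deviation is correct. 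One small quantitative caveat: the two-sided Hoeffding bound carries a prefactor of $2$, so $2e^{-2m\varepsilon^2}\le e^{-m/160}$ only once $m\gtrsim 46$, and your parenthetical that the bound is ``trivial otherwise'' is not quite right for intermediate $m$ (e.g.\ $m\approx 35$, where $2e^{-2m\varepsilon^2}$ exceeds $e^{-m/160}$ and neither bound is vacuous). This affects only the arbitrary constant $160$ in the statement, not the substance; bounding the two tails separately (as the paper does) removes the factor of $2$ and recovers the stated constant for all $m$.
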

\begin{proof}
In each iteration there are four scenarios where we choose the wrong interval for the next iteration. 
We use the Chernoff bound and its monotonicity to find a bound on the error probability in each scenario, and then pick the largest one. 
\begin{gather}
\Pr(\widetilde{p}_{j}\ge \gamma ) \le e^{-\frac{mp_{j}}{3}(\frac{\gamma}{p_{j}} -1)^2 } ~~~~ \gamma > p_{j} \\ \label{eq:worstChernoff}
\Pr(\widetilde{p}_{j}\le \gamma ) \le e^{-\frac{mp_{j}}{2}(\frac{\gamma}{p_{j}} -1)^2 } ~~~~ \gamma < p_{j}
\end{gather}

\begin{enumerate}
\item The phase is in I$-$II (see figure \ref{fig:KitaevPE2}), i.e., $\varphi\in [\varphi_{min}^{j},\varphi_{min}^{j}+\frac{\Delta _{j}}{4})$, but we measured $\widetilde \varphi_j > \varphi_{min}^{j}+\frac{\Delta_{j}}{3}$ thus picked intervals II or III: 
\begin{equation}
\Pr\prntt{\widetilde{p}_{j}>\frac{1}{4} \cap p_{j}<\sin^2(\frac{\pi}{8})} \le e^{-(\frac{1}{4\sin^2(\frac{\pi}{8})}-1)^2 m \sin^2(\frac{\pi}{8})/3}=e^{-m(\sqrt 2 -1)/12\sqrt 2}
\end{equation}
(we used $\sin^2(\pi/8)=\frac{\sqrt{2}-1}{2\sqrt{2}}$)
\item The case symmetric to 1; $\varphi$ is in III$-$II and intervals I or II were chosen: 

\begin{equation}
\Pr\prntt{\widetilde{p}_{j}<\frac{3}{4} \cap p_{j}>\sin^2(\frac{3\pi}{8})} \le e^{-(\frac{3}{4\sin^2(\frac{3\pi}{8})}-1)^2 m \sin^2(\frac{3\pi}{8})/2}=e^{-m \frac{(1+\sqrt{2})^3}{8\sqrt{2}}}
\end{equation}
\item The phase is in (I$\cap$II)$-$III, i.e.,  $\varphi\in [\varphi_{min}^{j}+\frac{\Delta _{j}}{4}, \varphi_{min}^{j}+\frac{\Delta_{j}}{2})$ but we measured $\widetilde \varphi > \varphi_{min}^{j}+\frac{2\Delta _{j}}{3}$ thus picked interval III:
\begin{equation}
\Pr\prntt{\widetilde{p}_{j} > \frac{3}{4} \cap  p_{j} \in \Big[\sin^2(\frac{\pi}{8}), \frac{1}{2}\Big)} \le e^{-\frac{m}{24}}
\end{equation}
\item The case symmetric to 3; $\varphi$ is in (III$\cap$II$)-$I, and interval I was chosen:
\begin{equation}
\Pr\prntt{\widetilde{p}_{j} < \frac{1}{4} \cap p_{j} \in \Big(\frac{1}{2}, \sin^2(\frac{3\pi}{8})\Big] } \le e^{-\frac{m}{16}}
\end{equation}
\end{enumerate}
All these errors are smaller than $e^{-m/160}$. The probability to be in the wrong interval after $\ell$ iterations  follows from the union bound and is at most $\ell e^{-\frac{m}{160}}$.
\end{proof}

In general,we have no prior knowledge that $\varphi$ is in some window of size $\pi$, and so it can be any value in $[0,2\pi]$. Using $f(U,0)$ on an unbounded $\varphi$ creates an ambiguity: $\varphi$ and its mirror value $2\pi-\varphi$ would create the same distribution of measurement outcomes in every iteration. 
The following algorithm solves the ambiguity by trying to run $f$ on a range containing both values:\\
{\\}
\noindent\fbox{%
\begin{varwidth}{\dimexpr\linewidth-2\fboxsep-2\fboxrule\relax}
\begin{enumerate}
\item Let $\widetilde \varphi=f(U,[0,\pi])$
\item If $\abs{\widetilde \varphi- \frac{\pi}{2}} > 2^{-\ell}\pi$,\\ \hspace*{0.5cm} run $f$ on $U$ in a range containing both $\widetilde \varphi$ and $2\pi-\widetilde \varphi$ (i.e.  $[\frac{\pi}{2},\frac{3\pi}{2}]$ or $[0,\frac{\pi}{2}]\cup [\frac{3\pi}{2},2\pi]$); return the result.
\item Else (Unable to determine whether $\varphi$ and $2\pi-\varphi$ are in $[\frac{\pi}{2},\frac{3\pi}{2}]$ or in $[0,\frac{\pi}{2}]\cup [\frac{3\pi}{2},2\pi]$)
	\begin{enumerate}
	\item Let $U'=e^{i\pi/4}U$
	\item Let $\widetilde \xi=f(U',[0,\pi])$
	\item If $\abs{\widetilde \xi- \frac{\pi}{2}} > 2^{-\ell}\pi$,\\ \hspace*{0.5cm} run $f$ on $U'$ in a range containing both $\widetilde \xi$ and $2\pi-\widetilde \xi$, (i.e.  $[\frac{\pi}{2},\frac{3\pi}{2}]$ or $[0,\frac{\pi}{2}]\cup [\frac{3\pi}{2},2\pi]$); return the result minus $\frac{\pi}{4}$. 
	\item Else, the algorithm fails.
	\end{enumerate}
\end{enumerate}
\end{varwidth}
}
{\\}

If every time $f$ is used, it finds the correct phase (or its mirror value) with accuracy $2^{-\ell}\pi$, then the algorithm successfully finds  $\varphi$ with accuracy $2^{-\ell}\pi$. $f$ is called at most three times, therefore by lemma \ref{lem:TernaryKitaev}  the algorithm above finds $\varphi$  with accuracy $2^{-\ell}\pi$ and with confidence greater than $1-3\ell e^{-m/160}$.

%
%

Once $\varphi$ is approximated for $U_{N,y}$, the corresponding energy under $H_{N,y}$ can be calculated by the relation $E=2\cos(\varphi)$ and the proof of theorem \ref{thm:ShorCounterexample} follows.

\renewcommand{\qedsymbol}{} 
\vspace{-\baselineskip}
\end{proof}


\section{Fast forwarding Hamiltonians and precision measurements} 
\label{sec:FF}
We would now like to identify the source of the ability to violate the cTEUP (postulate \ref{conj:cTEUP}). 
A careful examination of the proof of Theorem  \ref{thm:ShorCounterexample}
shows that the super-efficient 
energy measurement (SEEM) of the Hamiltonians $H_{N,y}$ is tightly connected to 
the ability to raise a number $y$ to an 
exponential power (modulo $N$, the number to be factorized), 
in polynomial time, i.e., to apply $U_{N,y}^{2^n}$ in polynomial time.
The straight forward way to do this would be to apply $U_{N,y}$ $2^n$ times sequentially. 
Instead, this is done in polynomial time using the unitary version of a classical procedure called 
\emph{modular exponentiation}. 
Equivalently, simulating the evolution of the quantum system according to the 
Hamiltonian $H'_y$ s.t. $U_{N,y}=e^{iH'_y}$,  for time $t=2^n$, 
can be done in time polynomial 
in $n$. We view this as \emph{fast forwarding} of the evolution 
governed by the Hamiltonian $H'_y$, as in Definition \ref{def:FF}.   
It turns out (as follows from the results of this section) 
that one can also achieve fast forwarding of the Hamiltonian 
$H_{N,y}$. We show in this section our main result, stating that
this is a general fact: 
fast forwarding (FF) the evolution governed by a given Hamiltonian, 
and exponentially accurate energy measurements with respect to this 
Hamiltonian, or what we call super efficient energy measurements (SEEM),  
are equivalent. Before stating the result more explicitly, and explaining 
the proof, we first provide an exact 
definition of the notion of SEEM.

\begin{definition} [SEEM] \label{def:SEEM} A normalized  Hamiltonian $H$ ($\norm{H}=1$) acting on $n$ qubits is $(\eta, \delta E, \beta)-$SEEM (super-efficient energy measurable) if there exist two unitaries $U_{SEEM},\widetilde{U}_{SEEM}$, 
acting on the $n$ qubits and on additional output/work qubits s.t.

\begin{enumerate}
\item $U_{SEEM}$ is a pre-measurement with $\eta$-accuracy $\delta E$ which doesn't perturb eigenstates of $H$, namely, 
\begin{equation}\label{eq:prem}
U_{SEEM}\ket{\psi_E,0,0}=\ket{\psi_E}\sum_{E
'}a_{E'}\ket{E',g(E')}
\end{equation} 
where $\ket{\psi_E}$ is an eigenvector of $H$ with eigenvalue $E$, 
$E'$ is the measurement outcome and $g(E')$ is some garbage left in the work register, and Equation \ref{eq:eta} is satisfied.   

\item The complexity of implementing $\widetilde{U}_{SEEM}$ is polynomial in $n$ and
\begin{equation}
\norm{U_{SEEM}-\widetilde{U}_{SEEM}}\le \beta
\end{equation}
\end{enumerate}
\end{definition}

We note that in fact, to be completely rigorous, the above definition, 
as well as that of FF, 
should consider a family of Hamiltonians $\{H_n\}_{n=1}^\infty$ rather than 
a given Hamiltonian. This will be clear from the context and is thus 
left implicit here and in the remainder of the paper.  

\begin{theorem}{\bf [Main]}\label{thm:main}
For $n$ the number of qubits, the following two sets of Hamiltonians are equivalent:
\begin{enumerate}
\item $\mathsf{FF}_{exp}$:  A normalized Hamiltonian $H$ acting on $n$ 
qubits is in $\mathsf{FF}_{exp}$ if there exists an exponentially growing
 function 
$T=O(2^{\Omega(n)})$ s.t. $H$ is $(T,\alpha)$-FF for  any $\alpha=1/poly(n)$.
\item $\mathsf{SEEM}_{exp}$: A normalized Hamiltonian $H$ acting on 
$n$ qubits is in $\mathsf{SEEM}_{exp}$ if there exists a function 
$\delta E=2^{-\Omega(n)}$ s.t. $H$ is $(\eta, \delta E, \beta)$-SEEM for any $\eta,\beta=1/poly(n)$.
\end{enumerate}
\end{theorem}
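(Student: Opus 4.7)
The claim is an equivalence of two classes, so the plan is to prove both inclusions. In both directions the conceptual bridge is the same object: a unitary whose eigenbasis coincides with that of $H$ and whose eigen-phases encode the eigenvalues of $H$ to exponential precision. Fast-forwarding supplies such a unitary as a long time-evolution $e^{-iHt}$; SEEM supplies it as a unitary that coherently writes an exponentially precise label $E'$ of each eigenvalue onto an ancilla. Each direction is a translation between these two representations of the same information.

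\textbf{FF $\Rightarrow$ SEEM.} Assume $H$ is $(T,\alpha)$-fast-forwardable for $T=2^{\Omega(n)}$ and every $\alpha=1/poly(n)$. The plan is to feed the unitary $e^{-iH}$ into the Kitaev-style iterative phase-estimation procedure already used in the proof of Theorem~\ref{thm:ShorCounterexample}, identifying the role played there by $U_{N,y}^{t_j}$ with $e^{-iH t_j}$, where $t_j=2^{j-1}$. For every $j$ with $t_j\le T$, i.e.\ for $\Omega(n)$ iterations, the required controlled evolution is available as a $poly(n)$-sized circuit of error $1/poly(n)$ by the FF hypothesis. The procedure then coherently writes an estimate $\widetilde E$ with $|\widetilde E-E|\le 2^{-\Omega(n)}$ onto an output register with confidence exponentially close to $1$, uses $poly(n)$ gates in total, and leaves the input eigenstate intact up to $1/poly(n)$ error, matching the SEEM specification for any $\eta,\beta=1/poly(n)$.

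\textbf{SEEM $\Rightarrow$ FF.} Assume $H$ admits a $(\eta,\delta E,\beta)$-SEEM with $\delta E=2^{-\Omega(n)}$. Given any target $t\le T=2^{\Omega(n)}$, I would build the fast-forwarding circuit as the sandwich
\begin{equation}
\widetilde U(t):=\widetilde U_{SEEM}^{-1}\,\Phi_t\,\widetilde U_{SEEM},
\end{equation}
where $\Phi_t$ is the diagonal gate $\ket{E',g}\mapsto e^{-iE't}\ket{E',g}$ on the outcome register; since $E'$ is stored as an $O(n)$-bit number, $\Phi_t$ costs $poly(n)$ gates with $1/poly(n)$ error. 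On an eigenstate $\ket{\psi_E,0,0}$ the state after $\Phi_t\,\widetilde U_{SEEM}$ equals $\ket{\psi_E}\otimes\sum_{E'}a_{E'}e^{-iE't}\ket{E',g(E')}$, which is close to $e^{-iEt}\bigl(U_{SEEM}\ket{\psi_E,0,0}\bigr)$; the third factor then returns the ancilla to $\ket{0}$ and leaves the system multiplied by a good approximation of the scalar $e^{-iEt}$. Linearity extends the argument to arbitrary inputs $\sum_E c_E\ket{\psi_E,0,0}$, yielding $\widetilde U(t)\approx e^{-iHt}\otimes I$.

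\textbf{Error control and main obstacle.} The quantitative heart of the proof — and the step I expect to be hardest to get right — is bounding the sandwich error by $\alpha=1/poly(n)$ simultaneously for all $t\le T=2^{\Omega(n)}$. Splitting the sum over $E'$ according to whether $|E'-E|\le\delta E$ or not, the Euclidean distance between $\Phi_t\,U_{SEEM}\ket{\psi_E,0,0}$ and $e^{-iEt}\,U_{SEEM}\ket{\psi_E,0,0}$ is bounded by $\sqrt{(t\,\delta E)^2+4(1-\eta)}$. The first term forces $\delta E\le 1/(T\cdot poly(n))$, still in $2^{-\Omega(n)}$, while the second forces $1-\eta=1/poly(n)$, which is stronger than the raw SEEM guarantee. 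I would therefore precompose with the Confidence Amplification Lemma~\ref{lem:ConfAmp} to lift $\eta$ to $1-1/poly(n)$ at polynomial cost — the delicate point being that the amplification must be implemented unitarily, storing intermediate outcomes and computing a coherent median, so that the final $\widetilde U_{SEEM}^{-1}$ still cleanly disentangles every ancilla. The residual contributions from $\norm{\widetilde U_{SEEM}-U_{SEEM}}\le\beta$ and from the imperfect $\Phi_t$ accumulate additively via the triangle inequality and stay within budget.
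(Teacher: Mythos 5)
Your proposal is correct and follows essentially the same route as the paper: phase estimation built from the fast-forwarding circuits for the forward direction, and the sandwich $\widetilde U_{SEEM}^{-1}\Phi_t\widetilde U_{SEEM}$ with a split of the outcome sum into good and bad windows for the converse, with the Confidence Amplification Lemma (implemented unitarily via a coherent median) supplying the needed boost of $\eta$ in both directions. The only difference is that you use Kitaev's iterative phase estimation for FF $\Rightarrow$ SEEM where the paper uses the QFT-based circuit together with the concatenation lemma, a substitution the paper itself notes is equally valid.
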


To prove the theorem, we will use two tools: the first tool 
(Lemma \ref{lem:ConfAmp}), 
gives efficient exponential confidence amplification  
of a low-demolition energy measurement,  
without increasing the demolition parameter $\beta$ too much
 (Confidence amplification). 
The second tool, Lemma \ref{lem:FFbyConcat}, allows increasing the $T$ parameter of fast forwarding at the cost of degrading $\alpha$ (FF by concatenation). 
That fast forwarding implies super efficient energy measurements is done 
by applying the FF by concatenation lemma, 
then applying the phase estimation circuit using this FF with improved 
parameters,   
and lastly improving the parameters by confidence amplification.
This proof could have been done using similar tools as in the proof of 
Theorem \ref{thm:ShorCounterexample}, but 
for variety we use here the more commonly used phase estimation 
circuit, which makes use of the quantum Fourier transform.   
To prove the other direction, namely that SEEM implies fast forwarding, 
we estimate the energy using the SEEM unitarily, apply the correct phase 
based on the result, and run the energy estimation backwards to erase garbage. 
Once again, the confidence amplification lemma (Lemma 2) is required in 
order to gain back the parameters which were degraded.  

We remark that Theorem \ref{thm:main} 
is stated as an equivalences between SEEM and 
FF for behaviors of $T$ and $\delta E$
which grow/decrease exponentially with $n$. 
In fact one can prove equivalences for other functions as well, 
however, there seems to be some inherent (constant) 
loss in parameters 
when moving between the notion of FF to within time $T$, and SEEM 
to within accuracy $1/T$;  
which is why the equivalence (Theorem \ref{thm:main}) is 
stated in terms of exponential functions rather than in terms of 
exact parameters. 

We start by stating and proving the two lemmas. 

\begin{lemma} [Confidence amplification] \label{lem:ConfAmp}
Let $\eta>\frac{1}{2}$, and let $H$ be a Hamiltonian on $n$ qubits, $\|H\|\le 1$, which is 
$(\eta, \delta E, \beta)-$SEEM. Then for any integer $m\ge 1$, $H$ is also $(1-e^{-\frac{m}{2}\prnt{1-\frac{1}{2\eta}}^2} , \delta E, m\beta)-$SEEM. 
\end{lemma}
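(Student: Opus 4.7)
The plan is to prove the confidence amplification lemma by the standard boost-by-median trick, lifted from the classical setting to a unitary quantum circuit: run the approximate SEEM procedure $m$ times in parallel on fresh ancilla pairs, then coherently compute the median of the $m$ outcomes into a final output register. Concretely, on input $\ket{\psi_E}\otimes\ket{0,0}^{\otimes m}\otimes\ket{0}_{\mathrm{out}}$, I sequentially apply $\widetilde U_{SEEM}$ to the eigenstate register together with the $i$th ancilla pair for $i=1,\dots,m$, and finish with a unitary circuit $V$ that writes $\mathrm{median}(E'_1,\dots,E'_m)$ into the output register. Since the median is a deterministic total function of integer-valued registers, $V$ has an exact polynomial-size implementation, so the whole circuit remains of polynomial complexity in $n$ in the relevant regime $m=\poly(n)$.

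For the analysis I would first introduce the \emph{ideal} amplified unitary $U_m$, defined identically but with the exact $U_{SEEM}$ in place of $\widetilde U_{SEEM}$. Because $U_{SEEM}$ exactly preserves $\ket{\psi_E}$ on the eigenstate register, the $m$ sequential applications on disjoint ancilla pairs produce a clean product state
\[
\ket{\psi_E}\otimes\bigotimes_{i=1}^m\sum_{E'}a_{E'}\ket{E',g(E')},
\]
so the $m$ outcome registers carry i.i.d.\ samples, each within $\delta E$ of $E$ with probability at least $\eta>1/2$. The median is within $\delta E$ of $E$ iff at least $m/2$ of these samples are, and a multiplicative Chernoff bound on the number of good samples gives
\[
\Pr[\text{median bad}]\;\le\; e^{-\frac{m}{2}\left(1-\frac{1}{2\eta}\right)^2},
\]
establishing that $U_m$ is an exact non-demolition pre-measurement with the claimed amplified confidence. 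Separately, by the telescoping inequality for products of unitaries,
\[
\norm{U_m-\widetilde U_m}\;\le\;\sum_{i=1}^m\norm{U_{SEEM}-\widetilde U_{SEEM}}\;\le\; m\beta,
\]
where the exact median circuit $V$ contributes nothing. The two clauses of the SEEM definition are therefore satisfied with parameters $\bigl(1-e^{-\frac{m}{2}(1-\frac{1}{2\eta})^2},\delta E,m\beta\bigr)$.

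I expect the main subtlety to be bookkeeping rather than any deep obstacle: the approximate $\widetilde U_{SEEM}$ does not exactly preserve $\ket{\psi_E}$, so directly analyzing outcome probabilities under $\widetilde U_m$ would force us to track the drift of the eigenstate register between consecutive calls, with awkward cross-terms between the approximation error and the Chernoff analysis. The cleanest route, which I follow above, is to do all statistical reasoning inside the ideal $U_m$ (where the product structure gives honest i.i.d.\ outcomes and Chernoff applies without any caveats) and to absorb all gate-level slack into the single operator-norm bound $\norm{U_m-\widetilde U_m}\le m\beta$. The only remaining point to verify is that the classical median lifts without additional approximation to a unitary on superpositions of inputs, which is immediate since $\mathrm{median}(\cdot)$ is a total deterministic function of its arguments.
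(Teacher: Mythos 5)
Your proposal is correct and follows essentially the same route as the paper: apply the (ideal) pre-measurement $U_{SEEM}$ $m$ times on fresh ancilla registers, note that the non-demolition property makes the $m$ outcomes i.i.d.\ so a Chernoff bound controls the median, compute the median unitarily into an extra register, and bound the distance to the approximate circuit by the telescoping inequality $\norm{V_{SEEM}-\widetilde V_{SEEM}}\le m\beta$. Your explicit remark that all statistical reasoning should be done in the ideal circuit and the gate-level error absorbed into a single operator-norm bound is exactly the paper's (implicit) strategy.
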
 
\begin{proof}
Consider $m$ applications of the non perturbing energy premeasurement unitary circuit $U_{SEEM}$ 
with $\eta$-accuracy $\delta E$. 
The probability that the majority of these outputs are within $\delta E$ of the correct 
value can be bounded by the Chernoff bound

\begin{equation}
\Pr(majority ~of~ measurements~ outside~ the~ window ~ \delta E ) \le e^{-\frac{m}{2}\prnt{1-\frac{1}{2\eta}}^2} 
\end{equation}

Hence a median of the measurements is at distance $\le\delta E$ from the correct energy value 
with confidence $1-e^{-\frac{m}{2}\prnt{1-\frac{1}{2\eta}}^2} $. 
We define the new premeasurement circuit $V_{SEEM}$ to first apply 
$U_{SEEM}$ $m$ different times, each time using a new ancilla register. Each such circuit writes 
$E'$ on its ancilla register.  $V_{SEEM}$ then unitarily computes 
the median of these $m$ outputs on an extra register. We know that had one of these outputs been measured, 
the probability that it is within $\delta E$ from the correct value $E$ is at least $\eta$. Since the measurements of those values 
mutually commute, are independent, and commute with the measurement of the 
median, we see that the median is 
within $\delta E$ from $E$ with probability at least   
$1-e^{-\frac{m}{2}\prnt{1-\frac{1}{2\eta}}^2} $. 

$\widetilde{V}_{SEEM}$ is defined by replacing 
$U_{SEEM}$ $m$ by 
$\widetilde{U}_{SEEM}$ in the above procedure. Since this is done 
$m$ times we have 
\begin{equation} 
\|\widetilde{V}_{SEEM}-V_{SEEM}\|\le m\|\widetilde U_{SEEM}-U_{SEEM}\|\le m\beta. 
\end{equation}
\end{proof}

The second tool allows increasing the $T$ parameter of fast forwarding at the cost of degrading $\alpha$. 
\begin{lemma} [FF by concatenation] \label{lem:FFbyConcat} For any 
integer $\kappa>0$, if a Hamiltonian is $(T,\alpha)$-FF, it is also 
$(T \kappa,\alpha \kappa)$-FF. 
\end{lemma}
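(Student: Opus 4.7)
The plan is to concatenate $\kappa$ copies of the FF circuit for time $t/\kappa$ and to track error accumulation by a telescoping argument.

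Given any target time $t\le T\kappa$, set $s:=t/\kappa\le T$. By the $(T,\alpha)$-FF assumption there is a poly$(n)$ circuit $\widetilde U$, acting on the $n$ system qubits plus $c=\mathrm{poly}(n)$ ancillas initialized to $\ket{0}$, with
$$\norm{(e^{-iHs}\otimes \mathbbm{1}_{2^c}-\widetilde U)\ket{\psi}\otimes\ket{0}}\le \alpha$$
for every $\ket{\psi}$. I would then define the candidate simulator $\widetilde V := \widetilde U^{\kappa}$; its circuit size is $\kappa$ times that of $\widetilde U$, which is polynomial in $n$ in the regime where $\kappa$ is itself polynomial (the only regime used in the applications).

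The core of the argument is the inductive claim that for $0\le j\le \kappa$,
$$\norm{\widetilde U^{j}\ket{\psi}\otimes\ket{0} - (e^{-iHjs}\ket{\psi})\otimes\ket{0}}\le j\alpha.$$
Writing $\ket{\phi_j}:=(e^{-iHjs}\ket{\psi})\otimes\ket{0}$ and noting $\ket{\phi_{j+1}}=(e^{-iHs}\otimes\mathbbm{1}_{2^c})\ket{\phi_j}$, the triangle inequality gives
\begin{align*}
\norm{\widetilde U^{j+1}\ket{\psi,0}-\ket{\phi_{j+1}}} &\le \norm{\widetilde U\bigl(\widetilde U^{j}\ket{\psi,0}-\ket{\phi_j}\bigr)} + \norm{\widetilde U\ket{\phi_j}-(e^{-iHs}\otimes\mathbbm{1}_{2^c})\ket{\phi_j}}\\
&\le j\alpha + \alpha.
\end{align*}
The first summand is bounded by the inductive hypothesis since $\widetilde U$ is unitary (and hence norm-preserving), and the second uses the FF bound applied to the clean input $\ket{\phi_j}=(e^{-iHjs}\ket{\psi})\otimes\ket{0}$. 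Setting $j=\kappa$ yields the claimed $\alpha\kappa$ bound on the full simulation error.

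The only subtle point, which I view as the main thing to be careful about, is that the FF definition only controls closeness to $e^{-iHs}\otimes\mathbbm{1}_{2^c}$ on inputs whose ancilla register is exactly $\ket{0}$, whereas after one application of $\widetilde U$ the ancilla is only approximately $\ket{0}$, so one cannot naively reapply the FF bound on the intermediate state. The telescope above sidesteps this by splitting each step into a unitary piece (which absorbs the prior error without amplification) and a clean-ancilla piece (to which the FF bound directly applies); this is exactly what forces the errors to accumulate additively in $\kappa$ rather than multiplicatively.
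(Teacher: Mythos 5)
Your proof is correct and follows exactly the route the paper intends: the paper's own proof is a one-line appeal to "concatenation of $\kappa$ instances" plus "a standard telescopic argument," which is precisely the induction you carry out in detail. Your explicit handling of the subtlety that the FF guarantee only applies to clean-ancilla inputs (absorbed by unitarity of $\widetilde U$ in the first telescoping term) is the right way to make the paper's terse argument rigorous.
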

\begin{proof}
The proof is by concatenation of $\kappa$ instances of the 
fast-forwarding circuit; the bound of $\alpha\kappa$ is derived by a 
standard telescopic argument. 
\end{proof}

To prove Theorem \ref{thm:main} we start by proving that fast forwarding implies super efficient energy 
measurements. After this we prove the other direction.

\begin{claim}  \label{cl:FF2SEEM}
For $T=O(2^{poly(n)})$, if a normalized Hamiltonian  on $n$ qubits is $(T,\alpha)$-FF, it is additionally
$(1-e^{-n/18},\frac{1}{T}, 16n\alpha\log(32T))$-SEEM.
\end{claim}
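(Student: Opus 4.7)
The plan is to build the required SEEM circuit by running standard Fourier-based phase estimation on the unitary $e^{-iH}$, with the controlled powers $e^{-iH\cdot 2^j}$ replaced by their fast-forwarded approximations, and then to amplify the confidence via Lemma~\ref{lem:ConfAmp}.

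First, I use the $(T,\alpha)$-FF hypothesis to obtain, for every $j\in\{0,1,\dots,\ell-1\}$ with $\ell=\lceil\log(32T)\rceil$, a polynomial-size circuit $\widetilde U_j$ approximating $e^{-iH\cdot 2^j}$ to error $\alpha$ on any eigenstate input tensored with $\ket{0}$ on its fresh ancilla register (this is within the FF range since $2^j<T$). Building controlled versions of the $\widetilde U_j$ is routine and preserves the $\alpha$ bound. I then assemble the standard Fourier-based phase estimation circuit on an $\ell$-qubit ancilla: Hadamards, controlled-$\widetilde U_j$ for $j=0,\dots,\ell-1$, inverse QFT, followed by a small reversible arithmetic sub-circuit that converts the ancilla's integer reading $y$ into an estimate $E'$ of the eigenvalue on yet another output register. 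Call the resulting circuit $\widetilde U_{SEEM}$, and let $U_{SEEM}$ be its idealized version built from the exact controlled powers $e^{-iH\cdot 2^j}$. Because $U_{SEEM}$ acts on an eigenstate input $\ket{\psi_E}\ket{0}\ket{0}$ as in Equation~\eqref{eq:prem} (the first register is never touched, and the ancillas end up in a superposition of $E'$ values peaked around $E$), the non-perturbation requirement is met. The standard analysis of Fourier-based phase estimation shows that the five ``extra'' bits of precision encoded in $\log 32$ give some absolute constant $\eta_0>\tfrac{1}{2}$ such that $U_{SEEM}$ has $\eta_0$-accuracy $1/T$ on the energy. A telescoping triangle-inequality argument gives $\|(U_{SEEM}-\widetilde U_{SEEM})\ket{\psi_E}\ket{0}\|\le \ell\alpha\le\alpha\log(32T)$, so at this stage $H$ is $(\eta_0,1/T,\alpha\log(32T))$-SEEM.

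Next, I invoke Lemma~\ref{lem:ConfAmp} with $m=16n$ copies. The lemma leaves the accuracy $1/T$ intact, upgrades the demolition bound to exactly $16n\alpha\log(32T)$ as desired, and boosts the confidence to at least $1-\exp\bigl(-8n\,(1-1/(2\eta_0))^2\bigr)$. Since $\eta_0$ is a fixed constant that can easily be pushed above $6/11$ with only a small number of extra precision bits (standard phase estimation guarantees $\eta_0\ge 2/3$ already after two extra bits, so the five bits in $\log 32$ are more than enough), this quantity is at least $1-e^{-n/18}$, which is exactly what the claim requires.

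The only genuine issue is bookkeeping of constants: I must verify that the extra precision bits hidden in the ``$32$'' do both (i) give $E$-accuracy $1/T$ after absorbing the factor of $2\pi$ relating phase to energy, and (ii) push the intrinsic phase-estimation confidence $\eta_0$ past $6/11$ so that $16n$ amplification rounds suffice to reach $1-e^{-n/18}$. Everything else---building controlled versions of the $\widetilde U_j$, telescoping the approximation error across the $\ell=\log(32T)$ controlled powers, the standard non-perturbing action of phase estimation on eigenstate inputs, and the direct plug-in of Lemma~\ref{lem:ConfAmp}---is mechanical.
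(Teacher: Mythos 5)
Your overall route is the same as the paper's (Fourier-based phase estimation built from fast-forwarded controlled powers, followed by Lemma~\ref{lem:ConfAmp}), but there is one genuine gap in the step where you justify the controlled powers. You take $\ell=\lceil\log(32T)\rceil$ and assert that each required power $e^{-iH\cdot 2^j}$, $j=0,\dots,\ell-1$, ``is within the FF range since $2^j<T$.'' That is false for the top few powers: $2^{\ell-1}\approx 16T$, and indeed any choice of $\ell$ large enough to deliver $\varphi$-accuracy $1/T$ with a few guard bits forces the largest power above $T$ (you need $2^{\ell-b}\gtrsim 2\pi T$ just for the accuracy and $\ell-b\ge 2$ or $3$ for the intrinsic confidence, hence $2^{\ell-1}\gg T$). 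So the $(T,\alpha)$-FF hypothesis does not directly supply those circuits at error $\alpha$, and your telescoped demolition bound $\ell\alpha\le\alpha\log(32T)$ is not justified as stated.

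The fix is exactly the paper's first move, which you skipped: apply the FF-by-concatenation lemma (Lemma~\ref{lem:FFbyConcat}) with $\kappa=16$ to upgrade the hypothesis to $(16T,16\alpha)$-FF. This covers all powers up to $2^{\ell-1}\le 16T$, but each instance now carries error $16\alpha$, so the pre-amplification demolition is $16\alpha\log(32T)$ rather than your $\alpha\log(32T)$. Consequently the confidence amplification must be run with $m=n$ (not your $m=16n$) to stay within the claimed demolition budget $16n\alpha\log(32T)$; with intrinsic confidence $\eta_0=3/4$ and $m=n$, Lemma~\ref{lem:ConfAmp} gives exactly $1-e^{-\frac{n}{2}(1-\frac{2}{3})^2}=1-e^{-n/18}$. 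Your final numbers happen to coincide with the claim, but only because the factor of $16$ you lose in the concatenation was silently relocated into your amplification count; once the concatenation cost is accounted for honestly, $m=16n$ would overshoot the demolition bound by a factor of $16$. The rest of your argument (non-perturbation of the ideal circuit on eigenstates, the telescoping over the $\ell$ instances, and the constant-chasing for $\eta_0$) is sound.
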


\begin{proof} 
We start by using the concatenation lemma (Lemma \ref{lem:FFbyConcat}) to claim the Hamiltonian is $(16T,16\alpha)$-FF. Next we show that 
$(16T,16\alpha)$-FF and $T=O(2^{poly(n)})$ $\Rightarrow$ $(\frac{3}{4},\frac{1}{T}, 16\alpha \log(32T))$-SEEM. 
The result then follows from the amplification lemma, Lemma \ref{lem:ConfAmp} with $m=n$.

We use the assumption that fast forwarding of $H$ is possible, 
to efficiently apply phase estimation with respect to the unitary 
$V=exp\prnt{i \prnt{H+\mathbbm{1}}}$. 
$V$ and $H$ of course share eigenvectors, and an eigenvalue $E$ of $H$ corresponds
to an eigenvalue $e^{i\varphi}$ for $V$ for $\varphi=E+1$ 
(recall that $\|H\| \le 1$ so $0\le \varphi=E+1\le 2\le 2\pi$).

Fix $\ell= \lfloor\log(32T)\rfloor$ to be the number of bits of $\varphi$ estimated in the 
phase estimation procedure. The procedure requires conditional applications of $\prnttt{V^{2^k}}_{k=0}^{\ell -1}$; 
This is done by implementing $\ell$ different instances of fast forwarding of $H$, $e^{iHt}$, 
with $t={2^{0},2^{1}\dots{2^{\ell-1}}}\le 16T$.

The following lemma is useful for evaluating the errors of phase estimation.
\begin{lemma} [phase estimation confidence (adapted from 5.2.1 in \cite{NC00})] \label{lem:PEConfidence}
Let $U$ be a unitary and $e^{i\varphi}$ an eigenvalue of $U$ and let 
an eigenvector with this eigenvalue be given as input to the phase 
estimation procedure. Let $m$ be the measurement outcome of an $\ell$-qubits 
phase estimation circuit (see figure \ref{fig:ShorPE}). For any $b+1< \ell$, 
\begin{equation}
\Pr\prnt{\abs{{\varphi}-\frac{2\pi m}{2^\ell}}>\frac{2\pi}{2^b}}\le \frac{1}{2(2^{\ell-b}-2)} \le \frac{1}{2^{\ell-b}}.
\end{equation}

\end{lemma}

\begin{figure} [H] 
\[
\Qcircuit @C=1em @R=.7em {
&&&&&&\sum_{k=0}^{2^{\ell}-1}\ket{k}&&\\
&&&\lstick{\ket{0}}     &\gate{\mathrm{Had.}}  	& \qw& \qw & \qw \cwx[3]  & \qw & \multigate{2}{QFT^{\dagger}} & \qw & \meter &\\
 \ell {\Biggl\{} &&{\hspace{.05cm}\vdots} 	& &\gate{\mathrm{Had.}}& \qw&\qw  &  \qw & \qw & \ghost {QFT^{\dagger}} &\qw & \qw {\hspace{0.25cm}\vdots}   \\
&&&\lstick{\ket{0}}		&\gate{\mathrm{Had.}}	& \qw& \qw & \qw & \qw & \ghost {QFT^{\dagger}}& \qw & \meter & \\
&&&\lstick{\ket{0\dots01}}	&\qw& \qw&\qw	& \gate{U_{N,y}^k=e^{-iH_yk}} &
}
\]
\caption{$\ell$-qubits phase estimation procedure.} \label{fig:ShorPE}
\end{figure}
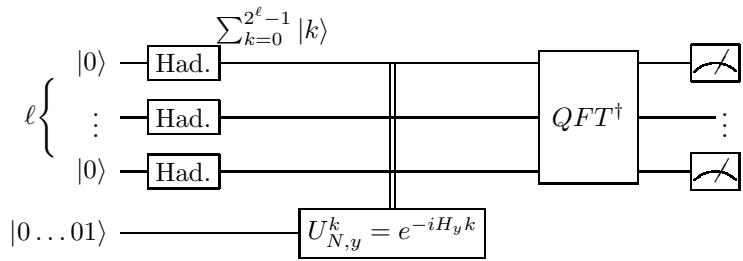

Using Lemma \ref{lem:PEConfidence}, we get that the $\ell$-bit phase estimation procedure 
estimates $\varphi$ to within 
 $\delta\varphi=\pi\cdot2^{-(\ell-3)}$ with confidence $3/4$. 
We get that the procedure provides an outcome 
which is within  $\delta E=\delta\varphi=4\pi\cdot 2^{-(\ell-1)}\le \frac{4\pi}{16T}< \frac{1}{T}$ from $E$ with confidence $3/4$. 

To apply the $\ell$ instances of conditional applications of powers of $V$;
$\{V^{2^k}\}_{k=0}^{\ell-1}$, we apply $\ell$ different $16\alpha$-approximations of 
$e^{iHt}\otimes \mathbbm{1}_{2^c}$ (using the fast forwarding) where each such application works 
on the state plus its own ancilla register initialized to $0$ 
(as in Definition \ref{def:FF}). 
We get that $\beta\le 16 \alpha \ell \le 16\alpha \log(32T))$. 
\end{proof}

\begin{corollary}
 $\mathsf{FF}_{exp} \subseteq \mathsf{SEEM}_{exp}$
\end{corollary}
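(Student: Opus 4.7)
The plan is to deduce this corollary directly from Claim \ref{cl:FF2SEEM} by matching quantifiers and parameters. First, I would take $H \in \mathsf{FF}_{exp}$, which by definition supplies some $T = 2^{\Omega(n)}$ such that $H$ is $(T,\alpha)$-FF for every $\alpha = 1/poly(n)$. Feeding this $T$ together with a yet-to-be-chosen $\alpha$ into Claim \ref{cl:FF2SEEM} immediately yields that $H$ is $(1 - e^{-n/18},\, 1/T,\, 16n\alpha \log(32T))$-SEEM.

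The remaining step is to verify that each of the three SEEM parameters satisfies the $\mathsf{SEEM}_{exp}$ requirements. I would set $\delta E := 1/T$; since $T = 2^{\Omega(n)}$, this gives $\delta E = 2^{-\Omega(n)}$ as required. The confidence $1 - e^{-n/18}$ is exponentially close to $1$, so it dominates any target confidence of the form $1 - 1/poly(n)$ for all sufficiently large $n$. For the demolition bound, the key observation is that $T \le 2^{poly(n)}$ implies $\log(32T) = poly(n)$; hence, given any target $\beta = 1/poly(n)$, the choice $\alpha := \beta / \bigl(16n \log(32T)\bigr)$ is itself $1/poly(n)$, and thus admissible by the $\mathsf{FF}_{exp}$ hypothesis on $H$. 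This choice makes the SEEM demolition parameter exactly $\beta$, so $H \in \mathsf{SEEM}_{exp}$.

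I do not anticipate any real obstacle here, since Claim \ref{cl:FF2SEEM} has already done the heavy lifting and the corollary amounts to bookkeeping on parameters. The only subtlety worth flagging is the need for $\log T = poly(n)$ so that the $\log(32T)$ factor in the demolition bound can be absorbed into the polynomial slack available when choosing $\alpha$; this is exactly what the singly-exponential ceiling on $T$ built into the definition of $\mathsf{FF}_{exp}$ guarantees.
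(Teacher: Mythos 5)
Your proposal is correct and follows essentially the same route as the paper's proof: both invoke Claim~\ref{cl:FF2SEEM} and choose $\alpha=\beta/(16n\log(32T))$, which is still inverse-polynomial because $T\le 2^{poly(n)}$ forces $\log(32T)=poly(n)$, so the demolition parameter lands at the target $\beta$ while the confidence $1-e^{-n/18}$ and accuracy $1/T=2^{-\Omega(n)}$ meet the $\mathsf{SEEM}_{exp}$ requirements. No gaps.
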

\begin{proof}
A Hamiltonian $H\in \mathsf{FF}_{exp}$, can be FF for some $T=2^{poly(n)}$, with $\alpha=\frac{\beta}{16n\log(32T)}=O\prnt{\frac{1}{poly(n)}}$ for any $\beta=O\prnt{\frac{1}{poly(n)}}$. Hence, by 
Claim \ref{cl:FF2SEEM}, $H$ is $(1-e^{-n/18},1/T,\beta)$-SEEM, and therefore it is $(\eta,1/T,\beta)$-SEEM for any $\eta,\beta=O\prnt{\frac{1}{poly(n)}}$. We conclude that 
$H\in \mathsf{SEEM}_{exp}$.
\end{proof}

We now prove that SEEM implies FF with the desired parameters: 

\begin{claim} \label{cl:SEEM2FF}
Let $H$ be an $n$ qubit Hamiltonian with $\|H\|\le 1$ which is $(\eta,\delta E, \beta)$-SEEM for $\eta>1/2$. 
Let $T\delta E<\frac{\pi}{2}$, 
 then $H$ is also $(T,~ 2\eta\sin(\delta E T)+2(1-\eta+\beta) )$-FF. 
\end{claim}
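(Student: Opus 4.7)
The strategy is to implement $e^{-iHt}$ by using the SEEM unitary to coherently write an energy estimate onto an ancilla, applying the associated phase conditionally, and then uncomputing. Concretely, I would take as the candidate FF circuit
\begin{equation}
\widetilde{V}_t \;=\; \widetilde{U}_{SEEM}^\dagger \cdot P_t \cdot \widetilde{U}_{SEEM},
\end{equation}
where $P_t$ is the diagonal unitary acting trivially on the system register and multiplying the basis state $\ket{E', g}$ on the output/garbage registers by the phase $e^{-iE't}$. Since $E'$ is stored in $poly(n)$ bits and $t \le T = 2^{poly(n)}$, $P_t$ can be implemented by standard reversible arithmetic in $poly(n)$ gates, so $\widetilde{V}_t$ has the required $poly(n)$ size.

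\textbf{Error analysis.} I would separate the overall error into an ideal-circuit contribution and an approximation contribution. For the ideal $V_t = U_{SEEM}^\dagger P_t U_{SEEM}$ acting on $\ket{\psi_E,0,0}$: the first SEEM produces $\ket{\psi_E}\otimes\sum_{E'} a_{E'}\ket{E',g(E')}$; the phase $P_t$ rescales each branch by $e^{-iE't}$; factoring out the desired global phase $e^{-iEt}$ and applying $U_{SEEM}^\dagger$ returns the ancillas to $\ket{0,0}$ up to the residual amplitudes $(e^{-i(E'-E)t}-1)$. Since $U_{SEEM}^\dagger$ is an isometry, the distance from the target $(e^{-iHt}\otimes I)\ket{\psi_E,0,0}=e^{-iEt}\ket{\psi_E,0,0}$ equals
\begin{equation}
\left\|\sum_{E'}a_{E'}(e^{-i(E'-E)t}-1)\ket{E',g(E')}\right\|
\;=\; \sqrt{\sum_{E'}|a_{E'}|^2 \cdot 4\sin^2\!\bigl((E'-E)t/2\bigr)},
\end{equation}
where I use that the $\ket{E',g(E')}$ are orthogonal for distinct $E'$. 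I would split the sum into the ``good'' set $\{E':|E'-E|\le \delta E\}$, of total weight at least $\eta$ and with each summand bounded by $2\sin(\delta E\cdot T/2)$, and the ``bad'' set of weight at most $1-\eta$ with the trivial per-term bound $|e^{-i\theta}-1|\le 2$. Combining these contributions, in the regime $T\delta E < \pi/2$, produces the $2\eta\sin(\delta E T)+2(1-\eta)$ portion of the claimed bound. Finally, passing from $V_t$ to $\widetilde{V}_t$ costs an additional $2\beta$ by the triangle inequality, because $\|\widetilde{U}_{SEEM}-U_{SEEM}\|\le\beta$ appears twice in the composition, yielding the total bound $2\eta\sin(\delta E T)+2(1-\eta+\beta)$.

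\textbf{Main obstacle.} The delicate step is arranging the bookkeeping so that the ``good'' contribution carries a linear factor of $\eta$ rather than $\sqrt{\eta}$: a naive application of Cauchy--Schwarz to the term-by-term bounds would instead give $2\sqrt{\eta}\sin(\delta E T/2)+2\sqrt{1-\eta}$, which is strictly weaker than the stated bound. The right view is to treat the SEEM outcome as a classical random variable, with the ``good'' and ``bad'' weights entering as probabilities of a measurement-like split, and to regroup the summation so these weights appear linearly. A secondary technical point is to make sure that the conditional phase $P_t$ can be implemented at all times $t\le T$ within $poly(n)$ gates, but this is standard since multiplication of the $poly(n)$-bit number $E'$ by the (at most $poly(n)$-bit representation of) $t$ and a subsequent controlled phase rotation can be done by a routine reversible arithmetic circuit.
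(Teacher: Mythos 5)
Your construction is exactly the paper's: the circuit $\widetilde U_{SEEM}^\dagger\, P_t\, \widetilde U_{SEEM}$, the comparison with the ideal $U_{SEEM}^\dagger P_t U_{SEEM}$ via the isometry property, the good/bad split at $|E'-E|\le \delta E$, and the $2\beta$ triangle-inequality cost for replacing $U_{SEEM}$ by $\widetilde U_{SEEM}$ twice. All of that is right, and the remark that $P_t$ is implementable in $\mathrm{poly}(n)$ reversible arithmetic for $t\le 2^{\mathrm{poly}(n)}$ is a point the paper glosses over but which is worth making.

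The one place you stopped short is exactly the place where no further argument exists: the "regrouping so the weights appear linearly" that you invoke in your final paragraph is not available. The quantity you are bounding is an $\ell_2$ norm over mutually orthogonal branches $\ket{E',g(E')}$, so the bad set genuinely contributes
\begin{equation}
\Bigl\|\sum_{E':|E'-E|>\delta E} a_{E'}\bigl(e^{-i(E'-E)t}-1\bigr)\ket{E',g(E')}\Bigr\|
\;\le\; 2\sqrt{\textstyle\sum_{\mathrm{bad}}|a_{E'}|^2}\;\le\;2\sqrt{1-\eta},
\end{equation}
and this is tight: a single bad outcome of weight $1-\eta$ with phase error $\pi$ (which is consistent with $T\delta E<\pi/2$, since only $|E'-E|\le\delta E$ is controlled on the good set) contributes exactly $2\sqrt{1-\eta}$, which exceeds $2(1-\eta)$. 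A "classical random variable" viewpoint would give linear weights for probabilities of measurement outcomes, but here the error is a coherent amplitude, not a probability, so the square root is unavoidable. The paper's own proof simply asserts the inequality $\le 2\eta\sin(\delta E t)+2(1-\eta)$ without the regrouping you are looking for, so you should not feel obliged to reproduce those exact constants. The honest bound your computation yields is $2\sin(\delta E T/2)+2\sqrt{1-\eta}+2\beta$, and this is entirely sufficient for every downstream use of the claim: in Corollary~\ref{cor:EEM2FF} the confidence is first amplified via Lemma~\ref{lem:ConfAmp} so that $1-\eta$ is exponentially small (hence so is $\sqrt{1-\eta}$), and the remaining terms are already of the form $O(\delta E\, T)+O(n\beta)$. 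I would recommend either proving the claim with the square-root bound, or noting explicitly that the stated constants require $\sqrt{1-\eta}$ in place of $1-\eta$.
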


\begin{proof}
The idea of the proof is to 
apply the unitary $\widetilde U_{SEEM}$ approximating the 
premeasurment of the energy, which exists since the Hamiltonian can be super-efficiently measured, 
by Definition \ref{def:SEEM}. 
Then, based on the output $E'$ of this premeasurement, written on the quantum register, 
multiply the state by the phase $e^{-iE't}$ 
(denote this by the gate $V$), and finally apply the inverse of the approximated premeasurement unitary. 
Let $\ket{\alpha}=\ket{\psi_E}\otimes\ket{0}$. 
First we consider the exact premeasurement with no demolition ($\beta=0$), $U_{SEEM}$; notice that it commutes with $H$:  
\begin{equation}
\begin{split}
\norm{\prnt{U_{SEEM}^\dagger VU_{SEEM}-e^{-iHt}\otimes \mathbbm{1}_{\mathcal{W}}}\ket{\alpha}}
=&
\norm{\prnt{U_{SEEM}^\dagger VU_{SEEM}-U_{SEEM}^\dagger \prnt{e^{-iHt}\otimes \mathbbm{1}_{\mathcal{W}}}U_{SEEM}}\ket{\alpha}}\\
=&
\norm{\prnt{VU_{SEEM}- \prnt{e^{-iHt}\otimes \mathbbm{1}_{\mathcal{W}}}U_{SEEM}}\ket{\alpha}},
\end{split}
\end{equation}
where the Hilbert space of the work/output register is denoted by $\mathcal{W}$. 
On a specific eigenvector $\psi_E$:
\begin{equation}
\begin{split}
&\norm{ VU_{SEEM}\ket{\psi_E,0,0}- \prnt{e^{-iHt}\otimes \mathbbm{1}_{\mathcal{W}}}U_{SEEM}\ket{\psi_E,0}}=\norm{\sum_{E'} a_{E'} (e^{-iE't}-e^{-iEt}) \ket{\psi_E,E',g(E')}}\\
&=\norm{\sum_{E':\abs{E'-E}\le\delta E} a_{E'} (e^{-iE't}-e^{-iEt}) \ket{\psi_E,E',g(E')}+\sum_{E':\abs{E'-E}>\delta E} a_{E'} (e^{-iE't}-e^{-iEt}) \ket{\psi_E,E',g(E')}}\\
&\le 2\eta\sin(\delta Et) + 2(1-\eta), 
\end{split}
\end{equation} 

where the last inequality is correct for $t\le \pi/2\delta E$.
Notice that the above holds for any state $\ket{\psi}=\sum_{c_E} c_E \ket{\psi_E}$, 
using the fact that both $U_{SEEM}$ and $V$ leave the left register in tact.  
The proof follows since we have $\norm{\widetilde{U}_{SEEM}^\dagger V\widetilde{U}_{SEEM}-U_{SEEM}^\dagger VU_{SEEM}}\le 2\beta$.
\end{proof}

\begin{corollary} \label{cor:EEM2FF}
Let $H$ be a normalized Hamiltonian on $n$ qubits, which is $(\eta,\delta E,\beta)$-SEEM for $\eta>1/2$ and $\beta<\pi/2$. Then $H$ is also $(\beta/\delta E,2n\beta+2^{-O(poly(n))})$-FF.
\end{corollary}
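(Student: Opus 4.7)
The plan is to chain Lemma~\ref{lem:ConfAmp} (confidence amplification) with Claim~\ref{cl:SEEM2FF} (SEEM $\Rightarrow$ FF). The raw hypothesis gives only $\eta > 1/2$, which is far too weak to feed directly into Claim~\ref{cl:SEEM2FF}: the additive term $2(1-\eta)$ in the FF error would be $\Omega(1)$, rendering the bound useless. So the first move is to push $\eta$ extremely close to $1$ while paying only a mild price in demolition.

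Concretely, I would apply Lemma~\ref{lem:ConfAmp} with $m = n - 1$ repetitions (any comparable polynomial in $n$ works). Starting from $(\eta, \delta E, \beta)$-SEEM, this produces an amplified $(\eta', \delta E, (n-1)\beta)$-SEEM in which $\eta' \geq 1 - e^{-\Omega(n)} = 1 - 2^{-\Omega(n)}$, the accuracy $\delta E$ is preserved, and the demolition parameter grows only linearly. I would then feed this into Claim~\ref{cl:SEEM2FF} with $T = \beta/\delta E$; the required hypothesis $T\delta E = \beta < \pi/2$ is exactly the corollary's assumption. The claim outputs $(T, \alpha)$-FF with
\begin{equation*}
\alpha = 2\eta' \sin(\delta E\, T) + 2\bigl(1-\eta' + (n-1)\beta\bigr) = 2\eta' \sin(\beta) + 2(1-\eta') + 2(n-1)\beta.
\end{equation*}
A few elementary bounds then finish the job: $\sin(\beta) \leq \beta$ and $\eta' \leq 1$ give $2\eta'\sin(\beta) \leq 2\beta$, which combined with $2(n-1)\beta$ sums to $2n\beta$; the remaining term $2(1-\eta')$ is bounded by $2\cdot 2^{-\Omega(n)} = 2^{-O(\mathrm{poly}(n))}$. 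Collecting, $\alpha \leq 2n\beta + 2^{-O(\mathrm{poly}(n))}$.

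The main obstacle is really just careful bookkeeping; nothing deep is at play. The subtle point worth highlighting is that Lemma~\ref{lem:ConfAmp} degrades the demolition only linearly in $m$, which is precisely what makes the whole plan go through --- any superlinear blowup would destroy the desired $n\beta$-scaling of the FF error. A secondary sanity check is that the hypothesis $T\delta E < \pi/2$ of Claim~\ref{cl:SEEM2FF} matches exactly the assumption $\beta < \pi/2$ combined with the choice $T = \beta/\delta E$, leaving no slack in that constraint and explaining why the corollary cannot avoid the $\beta < \pi/2$ assumption.
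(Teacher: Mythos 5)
Your proposal is correct and follows essentially the same route as the paper: amplify confidence via Lemma~\ref{lem:ConfAmp} with $m=n-1$, then apply Claim~\ref{cl:SEEM2FF} with $T=\beta/\delta E$, noting that $T\delta E=\beta<\pi/2$ is exactly the hypothesis needed. Your bookkeeping of the resulting error term ($2\eta'\sin(\beta)\le 2\beta$, plus $2(n-1)\beta$, plus the exponentially small $2(1-\eta')$) matches the paper's (terser) argument; the only shared caveat, present in the paper as well, is that the $e^{-\Omega(n)}$ bound on $1-\eta'$ implicitly treats $\eta$ as bounded away from $1/2$ by a constant.
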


\begin{proof}

Using lemma \ref{lem:ConfAmp} with $m=n-1$ we reach an  $(1-e^{-\frac{(n-1)\eta}{2} \prnt{1-\frac{1}{2\eta}}^2},\delta E,(n-1)\beta)$-SEEM. 
Now choose $T=\beta/ \delta E$ 
and since $T\delta E=\beta<\pi/2$ we can apply claim \ref{cl:SEEM2FF}.
The FF error $\alpha$ according to Claim \ref{cl:SEEM2FF} is bounded by $2n\beta+2^{-poly(n)}$.

\end{proof}

\begin{corollary}
$\mathsf{SEEM}_{exp} \subseteq \mathsf{FF}_{exp} $
\end{corollary}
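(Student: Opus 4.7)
The plan is to invoke Corollary \ref{cor:EEM2FF} as a black box and then carefully align the quantifiers in the two definitions. Corollary \ref{cor:EEM2FF} already converts a SEEM statement into an FF statement; the only work left is to verify that a \emph{single} exponential $T$ suffices for \emph{all} polynomially small $\alpha$.

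Let $H\in\mathsf{SEEM}_{exp}$, and fix a constant $c>0$ such that $\delta E = 2^{-cn}$ is the accuracy guaranteed by the SEEM hypothesis. I would propose the candidate $T_{*} := 2^{cn/2}$; the point of halving the exponent is precisely to leave a gap that the $\alpha$-dependence can eat into. Now pick any $\alpha = 1/\poly(n)$. I would set $\beta := \alpha/(4n)$ (this is still $1/\poly(n)$ and certainly smaller than $\pi/2$) and choose any fixed $\eta > 1/2$, e.g.\ $\eta = 2/3$. Since $H$ is $(\eta,\delta E,\beta)$-SEEM by the hypothesis, Corollary \ref{cor:EEM2FF} yields that $H$ is
\[
\Bigl(\tfrac{\beta}{\delta E},\; 2n\beta + 2^{-O(\poly(n))}\Bigr)\text{-FF}
\;=\;\Bigl(\tfrac{\alpha\cdot 2^{cn}}{4n},\; \tfrac{\alpha}{2} + 2^{-O(\poly(n))}\Bigr)\text{-FF}.
\]

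It remains to check two inequalities. First, $T_{*}\le \beta/\delta E$ reduces to $4n/\alpha \le 2^{cn/2}$, which for $\alpha\ge 1/n^{k}$ becomes $4n^{k+1}\le 2^{cn/2}$ and holds for all sufficiently large $n$. Hence by the ``for any $t\le T$'' clause in Definition \ref{def:FF}, the FF guarantee at time $\beta/\delta E$ immediately downgrades to a guarantee at time $T_{*}$. Second, the FF error $\alpha/2 + 2^{-O(\poly(n))}$ is at most $\alpha$ for $n$ large, since the exponentially small tail is negligible. Putting these together, $H$ is $(T_{*},\alpha)$-FF for all large $n$, and therefore $H\in\mathsf{FF}_{exp}$.

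The main obstacle, and the only nontrivial point in the argument, is the quantifier structure: the output of Corollary \ref{cor:EEM2FF} is an FF time $\beta/\delta E$ that shrinks with $\beta$, while the definition of $\mathsf{FF}_{exp}$ demands one fixed exponential $T$ that accommodates every polynomial $\alpha$. This is resolved by fixing $T_{*}$ with an exponent \emph{strictly less} than the exponent of $\delta E^{-1}$, so that even as $\alpha$ (and thus $\beta$) is driven down to any $1/\poly(n)$, the factor $2^{cn}$ in $\beta/\delta E$ dominates the $1/\alpha$ loss. No new structural ideas are needed beyond the parameter bookkeeping.
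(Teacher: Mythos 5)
Your proposal is correct and follows essentially the same route as the paper: both fix a single exponential $T$ whose exponent is strictly smaller than that of $1/\delta E$ (the paper takes $T=\delta E^{-0.99}$, you take $\delta E^{-1/2}$), set $\beta=\Theta(\alpha/n)$, and invoke Corollary \ref{cor:EEM2FF} to land at $(\beta/\delta E,\alpha)$-FF, which dominates $(T,\alpha)$-FF. Your explicit discussion of the quantifier alignment is the same point the paper makes implicitly by choosing $T$ independently of $\alpha$, so no new content is added.
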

\begin{proof}
Let $H\in \mathsf{SEEM}_{exp}$, with some inverse exponential 
function $\delta E$. We choose $T$ to be any exponentially growing
function such that $T\delta E$ decays faster than any polynomial (say, 
$T=\frac{1}{\delta E^{0.99}}$).  
Let $\alpha=O\prnt{\frac{1}{poly(n)}}$ be a goal parameter for the fast forwarding.  By assumption, $H$ is $(\eta=2/3,\delta E, 
\beta=\alpha/3n=O\prnt{\frac{1}{poly(n)}})-SEEM$. 
From corollary \ref{cor:EEM2FF}, $H$ is also  $(\beta/\delta E,\alpha)$-FF. 
By our choice of $T$, it is thus $(T,\alpha)$-FF. 
Since this holds for any inverse polynomial $\alpha$, 
we have $H\in \mathsf{SEEM}_{exp}$. 
\end{proof}
This completes the proof of theorem \ref{thm:main}.

\section{Exponential Fast Forwarding for Physical Hamiltonians} \label{sec:FFPhysicalHamiltonians}

\subsection{Commuting local Hamiltonians} \label{sec:MBL}
A class of Hamiltonians that can be easily fast forwarded are commuting local 
Hamiltonians. A Hamiltonian $H$ is a commuting $k$-local Hamiltonian
if it is of the form
\begin{equation}
H=\sum_j H_j,
\end{equation}
where every term $H_j$ acts non-trivially on at most on $k$ qubits, 
and $[H_i,H_j]=0$ for all $i,j$. 

\begin{theorem}\label{thm:comm}
If $H$ is an $n$ qubit normalized commuting $k-$local Hamiltonian, 
with $k=O(\log(n))$, then it can be $(T,\alpha)$-fast forwarded 
with $T=2^{\Omega(n)}$ and arbitrary exponentially small  $\alpha$. 
\end{theorem}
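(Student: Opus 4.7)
The key lever is commutativity: since all $H_j$ commute, $e^{-iHt}=\prod_j e^{-iH_j t}$ holds exactly, with no Trotter error. So the plan is to implement each factor $e^{-iH_j t}$ separately up to a tiny error $\delta$, and then concatenate. Let $m$ denote the number of terms in $H$ (implicitly $m=\mathrm{poly}(n)$, since otherwise the Hamiltonian cannot be specified with polynomial input).

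Each $H_j$ is supported on at most $k=O(\log n)$ qubits, hence lives in a Hilbert space of dimension $2^k=\mathrm{poly}(n)$. First I would classically diagonalize each $H_j$ in time $\mathrm{poly}(n)$, writing $H_j=V_j D_j V_j^\dagger$ with $D_j=\mathrm{diag}(\lambda_1^{(j)},\dots)$. For a given target time $t\le T=2^{O(n)}$, I then classically compute the matrix entries of
\begin{equation}
M_j(t) \;:=\; V_j\, \mathrm{diag}\bigl(e^{-i\lambda_i^{(j)} t}\bigr)\, V_j^\dagger
\end{equation}
to precision $\delta=\alpha/(2m)$. Because each phase only matters modulo $2\pi$, it suffices to evaluate $\lambda_i^{(j)} t\bmod 2\pi$ using $O(\log(Tm/\alpha))=\mathrm{poly}(n)$ bits of classical precision; this handles the worry that $t$ is exponentially large.

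Next I would compile each $M_j(t)$ into a quantum circuit acting on the $k$ qubits in the support of $H_j$. By Solovay-Kitaev (or direct synthesis), any $k$-qubit unitary can be $\epsilon$-approximated in operator norm by $O(4^k\,\mathrm{polylog}(1/\epsilon))$ two-qubit gates; for $k=O(\log n)$ and $\epsilon=\delta$, this is $\mathrm{poly}(n,\log(1/\alpha))$ gates per term. Finally, concatenate the $m$ resulting circuits (in any order, since they commute); a standard telescoping / triangle-inequality argument in operator norm bounds the total error by $m\delta=\alpha/2<\alpha$, and the total gate count is $m\cdot\mathrm{poly}(n,\log(1/\alpha))=\mathrm{poly}(n)$, which remains polynomial even for $\alpha=2^{-\mathrm{poly}(n)}$ and any $T=2^{O(n)}$. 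No ancillas are needed beyond whatever the synthesis subroutine uses internally, easily fitting inside the $\mathrm{poly}(n)$ ancilla budget of Definition \ref{def:FF}.

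The only genuinely delicate point is the classical precision bookkeeping when $t$ is exponentially large, but as noted above, working modulo $2\pi$ with $\mathrm{poly}(n)$ bits is enough. Beyond that, the argument is essentially elementary: commutativity removes Trotter error, $k=O(\log n)$ makes local terms small enough to diagonalize and synthesize classically, and polynomially many such terms concatenate without blow-up.
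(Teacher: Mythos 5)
Your proposal is correct and follows essentially the same route as the paper: use exact commutativity to factor $e^{-iHt}=\prod_j e^{-iH_jt}$, classically diagonalize each $2^k\times 2^k=\mathrm{poly}(n)$-dimensional term to exponential precision, reduce the exponentially large $t$ by working with the phases $\lambda t \bmod 2\pi$, and synthesize and concatenate the resulting $k$-qubit unitaries with a telescoping error bound. The extra care you take with the classical precision bookkeeping and gate synthesis is a welcome elaboration of details the paper leaves implicit, but it is not a different argument.
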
 
\begin{proof}
Since $H_j$ commute, we have 
\begin{equation} 
e^{-iHt}=\prod_j e^{-iH_jt}. 
\end{equation} 
It thus suffices to be able to implement $e^{-iH_jt}$ for $t$ exponentially 
large, with an appropriate exponentially small accuracy. 
 Given the description of $H_j$, let $U_j$ be the matrix which 
diagonalizes it, 
\begin{equation}
D_j=U_j H_j U^\dagger_j
\end{equation} 
All entries of $U_j$ can be efficiently calculated 
classically to within exponential accuracy, 
since the dimensions of the matrix are $2^k\times 2^k$ for $k=O(\log(n))$; 
its associated eigenvectors $\ket{\psi_m}$ and most importantly the 
corresponding 
eigenvalues $\lambda_m$ can also be calculated classically with exponential 
accuracy. This follows from classical results about matrix 
calculations, see e.g., \cite{PC99,ABBCS15}. 
Since the dimension of $U_j$ is polynomial, and all its entries 
are known, 
$U_j$ can be applied efficiently by a quantum computer, 
on the appropriate $k$ qubits, to achieve the transformation 
$\ket{\psi_m}\longmapsto \ket{m}$, for 
$H_j$.  
Then using a quantum computer one can apply the transformation 
\begin{equation}
\ket{m}\longmapsto e^{-i\lambda_m t}\ket{m}
\end{equation} 
to within exponentially good accuracy; and then the quantum computer 
can apply $U_j^\dagger$, which all together achieves $e^{-iH_jt}$ to within 
exponentially good accuracy.  
\end{proof} 

The toric code Hamiltonian constitutes \cite{Kitaev2003}
an important example for a commuting $4$-local Hamiltonian;  
We note that unlike what one might suspect, 
the time evolution of commuting local Hamiltonians generates 
very interesting behavior from the computational perspective;
even evolving for a unit time results in generating distributions which 
are hard to even approximately simulate classically (under commonly 
believed computational assumptions). See \cite{BJS10,BMS16}. 

We stress that the important part of Theorem \ref{thm:comm} is the 
fact that fast forwarding is possible for an {\it exponentially long time}; 
The fact that exponentially good accuracy $\alpha$ is achievable is 
less important. Even simulating a physical process for exponential time 
with a {\it small constant error} would be a remarkable achievement, 
since we could repeat the experiment polynomially many times and gain 
as much information about the results as we want. 

\subsection{Fast forwarding quadratic Hamiltonians} \label{sec:quadratic}

Using a similar idea to the above, one can derive exponential 
fast forwarding for 
a wide class of physically interesting Hamiltonians, called quadratic Hamiltonians. An important special case of this class is Anderson's model for 
electron localization \cite{Anderson58}.
 
We provide this here for the case of Fermions. 
We presume a quadratic Hamiltonian of bosons can also be fast forwarded, 
but haven't checked it. The states of $n$ indistinguishable 
Fermions distributed 
over $m=poly(n)$ modes are described by Fock space \cite{BookIZ06}
of possibly exponential dimension in $n$: ${m\choose{n}}$ in case of fermions, or ${n+m-1 \choose m-1}$ in case of bosons. 
A quadratic Hamiltonian is defined as follows:  
\begin{equation} \label{eq:GenericQuadraticH}
H=\sum_{i,j}^m A_{i,j} a^\dagger_i a_j +\frac{1}{2}\sum_{ij}B_{i,j}a_i a_j +\frac{1}{2}\sum_{i,j}B_{j,i}^* a_i^\dagger a_j^\dagger ~~~~~~ A=A^\dagger,B=B^\dagger
\end{equation}
The Hermiticity of $H$ follows from the fact that the 
matrices $A$ and $B$ are Hermitian.
$a_i, a_i^{\dagger}$ are the creation and annihilation operators, 
satisfying the anti commutation relations for fermions,
\begin{equation} \label{eq:FermionsACom}
\prnttt{a_i, a_j}=0 ~~~~ \prnttt{a_i^\dagger, a_j^\dagger}=0~~~~\prnttt{a_i, a_j^\dagger}=\delta_{i,j}. 
\end{equation} 
$a_i^\dagger a^\dagger_j=-a^\dagger_j a^\dagger_i$ hence equation \ref{eq:GenericQuadraticH} takes the form
\begin{equation}
H=\sum_{i,j} A_{i,j} a^\dagger_i a_j +\frac{1}{2}\sum_{ij}B_{i,j}a_i a_j -\frac{1}{2}\sum_{i,j}B_{i,j}^* a_i^\dagger a_j^\dagger.  
\end{equation}

We now assume that we can physically implement any quadratic Hamiltonian, 
s.t. the error in each coefficient is at most inverse polynomial.  
Then: 
\begin{theorem}\label{thm:quadFF}
Let $H$ be a quadratic Hamiltonian of $n$ Fermions
 with $poly(n)$ modes. $H$ can be $(T,\alpha)$-fast forwarded 
with $T=2^{\Omega(n)}$ and arbitrary inverse polynomial $\alpha$.
\end{theorem}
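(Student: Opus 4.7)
My plan is to generalize the commuting-local strategy of Theorem~\ref{thm:comm}: diagonalize $H$ by a fermionic Bogoliubov transformation obtained from a classical matrix computation of polynomial size, fast-forward trivially in the resulting occupation-number basis, and conjugate by the diagonalizing unitary implemented as a polynomial-size quantum circuit.

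First I would cast $H$ into its Bogoliubov--de Gennes form. Grouping the mode operators into the Nambu vector $\Psi=(a_1,\dots,a_m,a_1^\dagger,\dots,a_m^\dagger)^T$, the Hamiltonian can be written (up to an additive constant) as $H=\tfrac{1}{2}\Psi^\dagger M\Psi$ for a $2m\times 2m$ Hermitian matrix $M$ built from the blocks $A$ and $B$. Since $m=\poly(n)$, $M$ can be classically diagonalized to inverse-exponential accuracy in polynomial time using standard numerical linear algebra (cf.\ \cite{PC99,ABBCS15}). This produces a classical Bogoliubov matrix $W$ defining new fermionic modes $b_k=\sum_i(u_{ki}a_i+v_{ki}a_i^\dagger)$ that respect the anticommutation relations in Equation~\ref{eq:FermionsACom} and bring $H$ into the diagonal form $H=\sum_{k=1}^m\epsilon_k\, b_k^\dagger b_k+C$ for known real $\epsilon_k$ and constant $C$.

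Next I would implement this change of basis on a quantum computer using a Jordan--Wigner encoding of the $m$ modes into $m=\poly(n)$ qubits. The matrix $W$ lifts to a Gaussian fermionic unitary $U_W$ on Fock space, and a standard Givens-type decomposition (well-known from the matchgate and Slater-determinant-preparation literature) writes $U_W$ as a product of $O(m^2)=\poly(n)$ elementary two-mode fermionic rotations, arranged so as to act on adjacent modes, keeping the Jordan--Wigner strings short and each rotation reducible to $O(1)$ two-qubit gates. In the rotated basis, $e^{-iHt}$ becomes $e^{-iCt}\prod_k e^{-i\epsilon_k t\, b_k^\dagger b_k}$, and under Jordan--Wigner each factor is the single-qubit phase gate $\mathrm{diag}(1,e^{-i\epsilon_k t})$ on qubit $k$. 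The candidate fast-forwarding circuit is therefore
\begin{equation}
\widetilde U(t)\;=\;e^{-iCt}\cdot U_W^\dagger\,\Bigl(\bigotimes_{k=1}^m \mathrm{diag}(1,e^{-i\epsilon_k t})\Bigr)\,U_W,
\end{equation}
whose size is $\poly(n)$ independently of $t$.

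Finally I would bound the error, which comes solely from finite-precision knowledge of $W$ and of the phases $\epsilon_k t\bmod 2\pi$. To reach $\alpha/m$ error in each single-qubit phase gate it suffices to know $\epsilon_k$ to precision $\alpha/(m t)$; for $t=2^{O(n)}$ and inverse-polynomial $\alpha$ this amounts to $\poly(n)$ bits, computable classically in $\poly(n)$ time. An analogous telescoping bound handles the $\poly(n)$ elementary gates of $U_W$. The step I expect to require the most care is the quantum implementation of $U_W$: one must verify that the Bogoliubov rotation truly decomposes into a $\poly(n)$-size circuit of Jordan--Wigner-friendly local gates without the anticommutation strings inflating the gate count. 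Once this is established, the remaining pieces and error analysis are essentially routine.
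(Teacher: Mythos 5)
Your proposal is correct, and it reaches the same diagonal form $H=\sum_k \epsilon_k b_k^\dagger b_k + C$ via the same classical Bogoliubov/Bogoliubov--de Gennes diagonalization that the paper uses; the difference is in how the diagonalized evolution is then realized. The paper does \emph{not} implement the Bogoliubov rotation as a quantum circuit: it folds the reduced phases back into the original mode operators, defining a new quadratic Hamiltonian $H'$ (with eigenvalues $\epsilon_k t \bmod 2\pi$, expressed again in terms of $a_i, a_i^\dagger$ with classically computable coefficients), and then invokes the standing assumption that any quadratic Hamiltonian can be physically switched on for one time unit with at most inverse-polynomial error per coefficient. That physical-implementability assumption is precisely what limits the paper's $\alpha$ to inverse polynomial. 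Your route instead stays entirely inside the circuit model of Definition~\ref{def:FF}: Jordan--Wigner encoding, a Givens-type decomposition of the Gaussian unitary $U_W$ into $O(m^2)$ nearest-neighbour two-mode rotations (so the anticommutation strings do not inflate the gate count --- this is indeed standard in the matchgate/Slater-determinant literature, and is the one step you rightly flag as needing verification), and single-qubit phase gates $\mathrm{diag}(1,e^{-i\epsilon_k t})$ in the rotated basis. What your approach buys is twofold: it removes the extra physical assumption, and since every source of error (classical precision of $W$ and of $\epsilon_k t \bmod 2\pi$, plus gate synthesis) can be driven down to $2^{-poly(n)}$ at polynomial cost, it actually yields exponentially small $\alpha$ rather than merely inverse-polynomial, in closer analogy with Theorem~\ref{thm:comm}. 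What the paper's route buys is brevity --- it never has to exhibit or analyze a circuit for $U_W$ --- at the price of the weaker error guarantee stated in the theorem.
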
 
\begin{proof}
The proof idea is to efficiently ``diagonalize'' the traceless part of the Hamiltonian by the Bogolyubov transformation \cite{Bblaizot86,Shchesnovich13} to the form $H=\sum_i \lambda_i b_i^\dagger b_i$. The operators $b_i,b_i^\dagger$ are called quasiparticle annihilation  and creation operators respectively, and they 
inherit the commutation/anti commutation relations of $a_i,a_i^\dagger$ as in Equations \ref{eq:FermionsACom}. 
Additionally, the number operator $b^\dagger_ib_i$ has integer eigenvalues. 
Fast forwarding is enabled by efficiently evolving the system under $H'=\sum_i (\lambda_i t \mod 2\pi )  b_i^\dagger b_i$ for one time unit and adding a global phase $e^{-i\mathrm{tr}(A)t/2}$.

We now describe the details, using standard claims in Physics (whose proofs 
can be found in the Appendix for completeness): 

\begin{claim} \label{clm:QuadraticMatrixForm}
Let $\textbf{a}$ be a
column vector whose $j\th$ coordinate is $a_j$ and let $\textbf{a}^\dagger$ be  a column vector whose $j\th$ coordinate 
is $a_j^\dagger$. The Hamiltonian $H$ can be written as 
\begin{equation}
H=\frac{1}{2}\left(\begin{array}{cc}
\overline{\mathbf{a}^{\dagger}} & \overline{\mathbf{a}}\end{array}\right)\left(\begin{array}{cc}
A & B^{*}\\
B & -A^{*}
\end{array}\right)\left(\begin{array}{c}
{\mathbf{a}}\\
{\mathbf{a}}^{\dagger}
\end{array}\right)+\frac{1}{2}\tr(A)
\end{equation} 
Here, the overline  indicates a matrix transposition, i.e., $\overline{\mathbf{a}^{\dagger}}, \overline{\mathbf{a}} $ are the row vectors corresponding to $\textbf{a}^\dagger, \textbf{a}$ respectively.
\end{claim}

\begin{claim}
The traceless part of the Hamiltonian can be diagonalized: 
\begin{equation}
H=  \frac{1}{2}\left(\begin{array}{cc}
\overline{\mathbf{a}^{\dagger}} & \overline{\mathbf{a}}\end{array}\right)U D U^\dagger \left(\begin{array}{c}
{\mathbf{a}}\\
{\mathbf{a}}^{\dagger}
\end{array}\right)+\frac{1}{2}\tr(A)
\end{equation} 
where $D$ is a real diagonal matrix s.t. $D_{j,j}=-D_{j+n,j+n}$ and $U$ is unitary. Furthermore, there exist matrices $V_1,V_2$ s.t. $U$ is a block matrix in the form $U=\left(\begin{array}{cc}
V_1 & V_2^*\\
V_2 & V_1^*
\end{array}\right) $
\end{claim}

\begin{claim}\label{cl:integers}
By defining
\begin{equation} 
\left(\begin{array}{c}
{\mathbf{b}}\\
{\mathbf{b}}^{\dagger}
\end{array}\right)=
U^\dagger \left(\begin{array}{c}
{\mathbf{a}}\\
{\mathbf{a}}^{\dagger}
\end{array}\right).
\end{equation}
The Hamiltonian takes the form
\begin{equation}  \label{eq:DecopuledModesFermions}
H= 2\sum_{i=1}^m b_i^\dagger b_i D_{i,i} + \frac{1}{2} \tr (A)
\end{equation}
The new operators obey the anti-commutation relations of fermions. In addition, the eigenvalues of $b_i^\dagger b_i$ either 0 or 1.  
\end{claim}

We can now use the above claims to achieve fast forwarding. 
In Equation \ref{eq:DecopuledModesFermions}, the modes $b_i$ are independent, and in particular, 
the Hamiltonian is a sum of $m$ commuting terms. 
Therefore, an evolution under $H$ for time $t$ can be implemented by 

\begin{equation}
e^{iHt}= e^{-it\tr(A)/2} \prod_j e^{2iD_{jj}tb_j^{\dagger} b_j}
\end{equation} 
Since the eigenvalues of each term are $D_{j,j}t$ times an integer 
(using Claim \ref{cl:integers})  
we have that if we replace the traceless part of the Hamiltonian by a matrix with the 
same eigenvectors but with eigenvalues $(D_{j,j}t \mod 2\pi)$, it will have the same 
evolution on any state, in other words if we define 
\begin{equation}\label{eq:quadHFermions}
 H'=\frac{1}{2}\left(\begin{array}{cc}
\overline{\mathbf{a}^{\dagger}} & \overline{\mathbf{a}}\end{array}\right)U D' U^\dagger \left(\begin{array}{c}
{\mathbf{a}}\\
{\mathbf{a}}^{\dagger}
\end{array}\right) 
\end{equation}
 \begin{equation}
D'_{i,j}= \delta_{i,j} D_{i,i} t \mod 2\pi
\end{equation}
we get 
\begin{equation}
 e^{-i H'}\cdot e^{-it\tr(A)/2}= e^{-iHt}
\end{equation}
Hence it is sufficient to simulate the evolution under $H'$ to time $t'=1$ and to add the global phase $e^{-it \tr(A)/2}$. 
To apply $H'$ for $t'=1$, observe that Equation \ref{eq:quadHFermions} means that 
$H'$ is a quadratic Hamiltonian in $\prnttt{a_i}$ and $\{a_i^\dagger\}$,
whose coefficients can be calculated by a classical computer
in time polynomial in $m$ to 
within exponential accuracy \cite{PC99,ABBCS15}. 
Assuming that we can implement any quadratic Hamiltonian of polynomial 
number of coefficients exactly, we need only apply $H'$ for one time unit
to fast forward $H$ for exponential duration $t$, with 
arbitrary exponentially small $\alpha$.
However, the assumption that a quadratic Hamiltonian with 
general coefficients can be implemented exactly is not realistic; 
Assuming inverse polynomial error in each of 
the coefficients in the quadratic Hamiltonian results in an overall 
inverse polynomial error and thus would still lead to a fast 
forwarding procedure for exponential duration of time, but with   
inverse polynomial error $\alpha$. 
\end{proof}

\section{Impossibility of a Generic Fast Forwarding Procedure for 
realizable Hamiltonians\label{sec:2SparseFFimpBQPinPSPACE}}

Perhaps any physically 
realistic Hamiltonian (one which can 
be efficiently simulated by a quantum circuit) can be fast-forwarded? 
We show that this is highly unlikely: 
even if a subset of the Hamiltonians - namely, the 2-sparse row-computable 
Hamiltonians - can be exponentially fast forwarded, the complexity class PSPACE 
equals BQP (which is highly unlikely). 

$2$-sparse row-computable Hamiltonians (as mentioned in Section \ref{sec:ShorAsCounterexample}) are Hamiltonians with at  
most two non-zero 
entries per row, and 
such that given the row number, it is possible (either quantumly or 
classically) to efficiently compute the column indices and values of 
the non-zero entries of this row. It is known that such Hamiltonians are 
efficiently simulable by a quantum circuit \cite{AT03,BCCKS14}. 

\begin{theorem} \label{thm:No2sparseFF}
A generic procedure for $(T=2^{(n^{1/c})},\alpha=n^{-4/c})$-fast 
forwarding a 2-sparse row computable Hamiltonians, with $c>1$, does not 
exist (unless
$BQP=PSPACE$).
\end{theorem}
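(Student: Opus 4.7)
The plan is to derive $\PSPACE\subseteq\BQP$ from the hypothesis; combined with the known $\BQP\subseteq\PSPACE$ this yields $\BQP=\PSPACE$. The strategy is to build, for each $\PSPACE$ language $\mathcal{L}$, a $2$-sparse row-computable Hamiltonian whose spectrum encodes the accept/reject decision of a reversible Turing machine for $\mathcal{L}$, and then extract that decision through the super-efficient energy measurement that fast forwarding provides via Theorem \ref{thm:main}.

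Fix a $\PSPACE$-complete $\mathcal{L}$. By Bennett's reversibility theorem there is a reversible Turing machine $M$ deciding $\mathcal{L}$ in polynomial space $s(m)$ and exponential time $\tau(m)=2^{O(s(m))}$. I would modify $M$ into a reversible machine $M'$ that, after halting, executes a short reversible ``return to start'' gadget whose length depends on the output bit, so that the orbit of the initial configuration $c_0=c_0(x)$ under the one-step transition $U_{M',x}$ is a cycle of length $L_a$ if $x\in\mathcal{L}$ and $L_r$ otherwise, with $L_a\neq L_r$ and both of order $\tau(m)$. In direct analogy with the Shor-style construction of Equation \ref{eq:ShorsNiceH}, define
\begin{equation}
H_{M',x}=U_{M',x}+U_{M',x}^{\dagger}.
\end{equation}
Each row of $H_{M',x}$ has at most two non-zero entries, which are computable in $\poly(n)$ time from the reversible transition function of $M'$, so $H_{M',x}$ is a legitimate input for the hypothesized generic fast forwarding. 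On the orbit containing $c_0$ the eigenvalues of $H_{M',x}$ are $\{2\cos(2\pi k/L(x))\}_{k=0}^{L(x)-1}$, and the corresponding eigenvectors all have overlap $1/\sqrt{L(x)}$ with $|c_0\rangle$.

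Pad the input to a string of length $n=\Theta(s(m)^{c})$ so that the fast-forwarding window $T=2^{n^{1/c}}$ comfortably exceeds $\tau(m)^{2}$. By Theorem \ref{thm:main}, the assumed $(T,n^{-4/c})$-fast forwarding of $H_{M',x}$ yields a $\poly(n)$-time SEEM procedure of precision $\delta E=O(1/T)$ and high confidence. Applied to $|c_0\rangle$, this samples a uniformly random eigenvalue $2\cos(2\pi k/L(x))$, $k\in\{0,\dots,L(x)-1\}$; inverting the cosine recovers $k/L(x)\bmod 1$ to precision $O(1/T)<1/(2L(x)^{2})$. The continued-fractions subroutine of Shor's algorithm (as used in Theorem \ref{thm:ShorCounterexample}) then reconstructs $k/L(x)$ in lowest terms with constant probability per trial; taking $O(\log n)$ samples and combining the denominators via GCD yields $L(x)$ with high probability, and reading off whether $L(x)=L_a$ or $L(x)=L_r$ decides $x\in\mathcal{L}$. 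Hence $\mathcal{L}\in\BQP$, giving $\PSPACE\subseteq\BQP$.

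The main obstacle I anticipate is calibrating parameters so that the precision $\delta E\sim 1/T$ delivered by fast forwarding is small enough for continued fractions on an orbit length $L(x)$ which itself must be large enough to encode a $\PSPACE$-hard computation: since continued fractions require precision $\lesssim 1/L(x)^{2}$ and $L(x)\gtrsim\tau(m)$, this forces the padding $n=\Theta(s(m)^{c})$ and relies crucially on $c$ being a fixed constant bounded away from $1$. A secondary but more routine point is the design of the reversible return-to-start gadget, which must ensure that the orbit of $c_0$ is exactly one of the two intended cycles with no stray orbits contaminating the sampled spectrum; this is standard once $M$ is reversible.
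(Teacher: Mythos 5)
Your proposal is correct in outline but takes a genuinely different route from the paper. The paper reduces from the \PSPACE-complete search problem \texttt{OEOTL}: it takes $H$ to be (half) the adjacency matrix of the line containing $0^n$, converts the hypothesized FF into a SEEM via Claim \ref{cl:FF2SEEM}, and then runs an \emph{iterative} algorithm for $100n$ rounds, each round using the SEEM to approximately project onto a path eigenstate and exploiting the symmetry of those eigenstates about the path's midpoint to halve the remaining distance to the end with probability $\ge 1/10$; most of the work goes into the error analysis of this iteration (Claims \ref{cl:ExpectationOfDominantEigvec}--\ref{cl:half} and the accumulation of the demolition error $\beta$ over the rounds). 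You instead encode an arbitrary \PSPACE{} computation into the \emph{orbit length} of a reversible transition permutation, take $H=U+U^\dagger$ exactly as in Section \ref{sec:ShorAsCounterexample}, and read off the cycle length by a single-shot (plus $O(\log n)$ repetitions) spectral sampling of $\ket{c_0}$ followed by continued fractions. Your route is conceptually closer to the Shor example and avoids the iterative halving entirely, which also relaxes the demands on the non-demolition parameter (you only need constant confidence per independent shot on a freshly prepared $\ket{c_0}$, rather than controlling $\beta$ over $100n$ sequential applications). What the paper's route buys is that \texttt{OEOTL} is an off-the-shelf \PSPACE-complete problem, so no hardness gadget needs to be constructed, and no cosine inversion or continued-fraction analysis is needed.

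Two points you should nail down. First, the ``return-to-start'' gadget cannot simply erase the output bit and jump back to $c_0$ (that is irreversible); you need a compute--delay--uncompute structure, i.e.\ run Bennett's clocked computation forward to $(x,\vec 0,f(x))$, insert a reversible up-and-down counter whose length depends on the output bit, run the computation in reverse back to $(x,\vec 0,0)$, and only then close the cycle --- this makes the orbit of $c_0$ a single cycle of length $2\tau+2D_{f(x)}+O(1)$ with all intermediate configurations distinct. Second, the cosine inversion loses precision quadratically near $k/L\in\{0,1/2\}$; you should either pad so that $T\gg L^4$, or (more cheaply) note that a uniformly random $k$ lands where $\abs{\sin(2\pi k/L)}=\Omega(1)$ with constant probability, which already suffices for the continued-fraction step at precision $\delta E=O(1/L^2)$. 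With those repairs, and with the padding applied to the Hamiltonian's qubit count exactly as in the paper's final paragraph (so that the FF error after padding is $(3n)^{-4}$ in terms of the \emph{unpadded} size, independently of $c$), your argument goes through.
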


The structure of the proof is as follows: we first assume a
fast forwarding procedure with better parameters, 
namely we start with a $(T=5^n, \alpha=n^{-4})-$FF procedure, 
and utilize it to design a polynomial time
quantum algorithm which solves the PSPACE-complete problem 
\texttt{OTHER END OF THIS LINE (OEOTL)} \cite{Papad94}. We will later show that 
the parameters in the statement of the theorem suffice to imply 
the assumed parameters, essentially by 
a simple padding argument.  
 
Let $G=(V,E)$  be a directed graph where every vertex is represented by an $n$ bits vectors ($2^n$ vertices). The edges are represented by two polynomial size circuits $S$ and $P$. There is an edge from $u$ to $v$ only if $S(u)=v$ and $P(v)=u$; hence $G$ contains paths, cycles, or isolated vertices. 

\begin{definition}  [\texttt{OEOTL}] \label{def:OEOTL}
Given that the vertex $0^n$ has no incoming edge but has an outgoing edge, find the other end of the line that starts with $0^n$.
\end{definition}

Let $A_G$ be the adjacency matrix of the undirected graph induced by $G$. 
We use $H=\frac{1}{2}A_G$ 
as a $2$-sparse row-computable 
 Hamiltonian $H$ ($\norm{H}=1$). 
Without loss of generality, we denote the vertices along the path by 0...$L-1$ with $L\le 2^n$. Evolving the the state $\ket{0}$ under  $H$ keeps the system in the subspace of all vertices in the path. The eigenvalues and eigenvectors of the Hamiltonian in this subspace are:
\begin{gather}\label{eq:evpspace} 
E_k=\cos\prnt{\frac{\pi k}{L+1}} ~~~~ k=1...L\\\label{eq:evpspace2} 
\ket{\psi_k}=\sqrt{\frac{2}{L+1}}\sum_{j=0}^{L-1} \sin \prnt{\frac{\pi k(j+1) }{L+1}}\ket{j}, 
\end{gather}
which is easy to check: 
\begin{equation}
H\ket{\psi_k}=\frac{1}{2}\sqrt{\frac{2}{L+1}}\sum_{j=0}^{L-1} \prntt{\sin \prnt{\frac{\pi kj }{L+1}}+\sin \prnt{\frac{\pi k(j+2) }{L+1}}}\ket{j}=\sqrt{\frac{2}{L+1}}\sum_{j=0}^{L-1} \prntt{\sin \prnt{\frac{\pi k(j+1) }{L+1}}\cos \prnt{\frac{\pi k }{L+1}}}\ket{j}. 
\end{equation}

We assume that a generic $(T= 5^n, \alpha= n^{-4})$-FF 
is possible for any 2-sparse row computable Hamiltonian.
By Claim \ref{cl:FF2SEEM} this means that 
SEEM for this Hamiltonian is possible with parameters $\eta= 1-e^{-n/18}$, $\delta E=5^{-n}$, $\beta=40n^{-2}$.  By Equation \ref{eq:evpspace} 
at most one eigenvalue fits in a window of size $2\delta E$. This is because the minimal gap is between $k=1,2$ (or $k=L-1,L$).  
$\cos (\frac{\pi}{L+1}) - \cos (\frac{2\pi}{L+1}) = 2 \sin(\frac{\pi}{2(L+1)})\sin (\frac{3\pi}{2(L+1)})\ge \frac{3\pi^2}{8(L+1)^2}\ge \frac{2}{(L+1)^2}\ge 2\delta E$, the leftmost inequality is due to $\sin x > x/2$ for $0<x<\pi/4$.
Hence, estimating the energy of any state to within such $\delta E$ should intuitively approximately 
project the state onto an eigenvector. 

The polynomial time algorithm is as follows:

{~}

\noindent\fbox{%
\begin{varwidth}{\dimexpr\linewidth-2\fboxsep-2\fboxrule\relax}
\begin{enumerate}
\item Let $v_0=0$, $H_0=H$
\item For $i$=1 to $100n$ 
	\begin{enumerate}
\item Check that $10$ steps forward from $v_{i-1}$  the end of the line is 
not reached; if it is, output it and exit. 
	\item Perform a $(1-e^{-n/18},5^{-n},40n^{-2})$-SEEM 
 on the state $\ket{v_{i-1}}$ under $H_{i-1}$.\label{itm:AlgEMeas} 
	\item  \label{itm:AlgVMeas} Measure in the vertices basis, denote the result by $v_i$.
Let $H_i$ be the original Hamiltonian $H$ with the edge $(v_i-1,v_i)$ removed. 
	\end{enumerate}
\item Move $10$ steps forward from the vertex reached. If the end of the line is found return it and exit. Otherwise, the algorithm fails.\label{itm:10steps}
\end{enumerate}
\end{varwidth}
}

{~}

\noindent \textbf{Proof of correctness:}\\
The idea of the proof is to make progress on the path, as follows: 
Starting from the first node $v_0$, we use the SEEM (stage \ref{itm:AlgEMeas})
to measure the energy with respect to $H_0$ (assume $\beta=0$ for now).  
Due to the high accuracy of the measurement, 
the resulting state, conditioned on the measurement outcome, 
is close to an eigenstate. All eigenstates are symmetric around 
the middle of the path, hence the measurement in the vertices basis 
in stage \ref{itm:AlgVMeas}, yields with good probability a vertex $v_1$ that
 is closer to the end of the path than to $v_0$ namely, 
the remaining path length is likely to be halved. 
We call this event a successful iteration, and show its probability is at least $1/10$ if the path length is more than 10. An unsuccessful 
iteration does not increase the length of the path, it just 
doesn't succeed in shrinking by half. 
The vertex $v_1$ is now the 
next starting point, and $H_1$ is fixed to prevent going backwards by 
correcting  $H$ to not include the edge connecting  $v_1$ to the previous 
vertex on the line. After $n$ 
successful iterations, 
the vertex reached should be the end of the line.
By Chernoff, the probability of at least $n$ 
successful iterations with $1/10$ success probability, out of $100n$ 
iterations is $\ge 1-e^{-81n/20}$,  which is exponentially 
close to one. If at some stage the length of the path is smaller than 10, the success probability may be smaller than 1/10, but the end of the line is found in stage \ref{itm:10steps}. 
An analysis of $\beta>0$ concludes the proof.

Let $U_{meas.}$ be the pre-measurement with $\beta=0$, and confidence 
$\eta= 1-e^{-n/18}$ with respect to $H_i$.
Suppose the vertex found in the previous round is $v=v_i$, 
and consider applying 
$U_{meas.}$ to $\ket{v,0,0}$ where the additional two registers are the output and work registers.  
Denote the result of measuring the two additional registers by $\varepsilon_j, g$.   Let $f$ be the function s.t. $\psi_{f(j)}$ is the eigenstate of $H_i$ with energy closest to $\varepsilon_j$ 
(the lower energy eigenstate if there is a tie).  $f$ is well defined 
since all eigenvalues of $H_i$ within the 
relevant subspace 
have multiplicity 1. We omit adding an $i$ index to $f$, and to the eigenstates/eigenvalues of $H_i$ since both $H$ and $H_i$ in the relevant subspaces are Hamiltonians of paths, only the length of the path and the starting vertex change.
\begin{claim} \label{cl:ExpectationOfDominantEigvec}
Let $a_j$ be the amplitude of $\psi_{f(j)}$ after measuring $\varepsilon_j,g$. The expectation of $\abs{a_j}^2 $ over $j,g$  satisfies: 
$
\mathbb{E}_{j,g}\prnt{\abs{a_j}^2}=\sum_{g,j} \abs{a_j}^2 \Pr \prnt{\varepsilon_j,g} \ge \eta
$.

\end{claim}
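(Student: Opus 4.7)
The plan is to expand $\ket{v}$ in the eigenbasis of $H_i$ restricted to the relevant path subspace, and then exploit two facts: the non-perturbing property of $U_{meas.}$ on eigenstates, and the spectral gap from Equation~\ref{eq:evpspace}.

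First, I would write $\ket{v}=\sum_k c_k\ket{\psi_k}$. By linearity, together with the non-perturbation property in the SEEM definition (Equation~\ref{eq:prem}),
\begin{equation}
U_{meas.}\ket{v,0,0}=\sum_k c_k\ket{\psi_k}\otimes\ket{\phi_k},\qquad \ket{\phi_k}=\sum_{j,g}\alpha_{j,g,k}\ket{\varepsilon_j,g},
\end{equation}
where $\ket{\phi_k}$ is the state the second register would carry if the input had been the eigenstate $\ket{\psi_k}$. Because the $\ket{\psi_k}$'s are orthonormal, measuring the $(\varepsilon_j,g)$ register yields outcome $(\varepsilon_j,g)$ with probability $\Pr(\varepsilon_j,g)=\sum_k|c_k|^2|\alpha_{j,g,k}|^2$, and leaves the first register in the (unnormalized) superposition $\sum_k c_k\alpha_{j,g,k}\ket{\psi_k}$. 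The conditional squared amplitude on the distinguished vector $\ket{\psi_{f(j)}}$ is therefore $|a_j|^2=|c_{f(j)}\alpha_{j,g,f(j)}|^2/\Pr(\varepsilon_j,g)$, so
\begin{equation}
\sum_{j,g}|a_j|^2\Pr(\varepsilon_j,g)=\sum_{j,g}|c_{f(j)}\alpha_{j,g,f(j)}|^2.
\end{equation}

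Next, I would apply the $\eta$-accuracy guarantee separately for each $k$: had $\ket{\psi_k}$ been the input, the outcome $\varepsilon_j$ would land in $[E_k-\delta E, E_k+\delta E]$ with probability at least $\eta$, i.e.\ $\sum_{(j,g):\,|\varepsilon_j-E_k|\le\delta E}|\alpha_{j,g,k}|^2\ge\eta$. The crucial step is to upgrade this into a statement about $f(j)$. For this I use the spectral-gap computation already carried out just before the claim: consecutive eigenvalues $E_k$ of $H_i$ on the path subspace are separated by more than $2\delta E$. Consequently, any $\varepsilon_j$ within $\delta E$ of some $E_k$ is strictly closer to $E_k$ than to any other eigenvalue, which forces $f(j)=k$. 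Hence $\{(j,g):|\varepsilon_j-E_k|\le\delta E\}\subseteq\{(j,g):f(j)=k\}$, and the $\eta$-lower bound persists after replacing the left-hand set by the right-hand set.

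Finally, I would group terms by the value of $k=f(j)$, using that the sets $\{j:f(j)=k\}$ are disjoint (since each $j$ has a unique closest eigenvalue):
\begin{equation}
\sum_{j,g}|c_{f(j)}\alpha_{j,g,f(j)}|^2=\sum_k|c_k|^2\!\!\sum_{(j,g):\,f(j)=k}\!\!|\alpha_{j,g,k}|^2\ \ge\ \sum_k|c_k|^2\cdot\eta=\eta,
\end{equation}
as claimed. The main place one has to be careful is the interaction between the two registers after measurement: naively one might worry about cross terms of the form $\alpha_{j,g,k}\overline{\alpha_{j,g,k'}}$, but these are killed by the orthogonality of the $\ket{\psi_k}$ in the first register, which is why working in the eigenbasis of $H_i$ at the very start is essential. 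The only other subtlety is that $f$ must be single-valued; this is precisely what the gap estimate guarantees on the subspace spanned by the path vertices, so the argument is self-contained.
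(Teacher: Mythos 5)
Your proof is correct and follows essentially the same route as the paper's: the paper writes the identity $\mathbb{E}_{j,g}|a_j|^2=\sum_{j,g}\Pr(\varepsilon_j,g,\psi_{f(j)})=\sum_k\Pr(\psi_k)\sum_{(j,g):f(j)=k}\Pr(\varepsilon_j,g\mid\psi_k)\ge\eta$ directly in Bayes-rule form, and your amplitude-level computation (with the $c_k$, $\alpha_{j,g,k}$ and the orthogonality of the $\ket{\psi_k}$ killing cross terms) is exactly the unpacking of that identity, closed off by the same two ingredients the paper invokes — the per-eigenstate $\eta$-accuracy guarantee and the spectral-gap fact that every outcome within $\delta E$ of $E_k$ lies in $f^{-1}(k)$.
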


\begin{proof}

Let $f^{-1}$ be the preimage of $f$, 
\begin{equation}
\begin{split}
\mathbb{E}_{j,g} (\abs{a_{j}}^2)&=
\sum_{g,j} \abs{a_{j}}^2 \Pr \prnt{\varepsilon_j,g}=
\sum_{g,j} \Pr(\psi_{f(j)} | \varepsilon_j,g) \Pr \prnt{\varepsilon_j,g}=
\sum_{g,j} \Pr( \varepsilon_j,g,\psi_{f(j)})\\
&= \sum_k \Pr (\psi_k) \sum_{g,j: j\in f^{-1}(k)} \Pr(\varepsilon_j,g | \psi_k) \ge \eta
\end{split}
\end{equation} 
The last inequality, is due to the $\eta$-confidence of the measurement, and that all measurement outcomes in the window $\delta E$ around $E_k$ are in $f^{-1}(k)$.
\end{proof}

\begin{claim} \label{cl:half}
Let $L_i=L-v_i\ge 10$ and 
$\ell_i =\left\lceil\frac{L_i}{2} \right\rceil$.  The probability for a successful iteration, i.e., $v_{i+1}\ge v_i + \ell_i$ is at least $1/10$ for the value of $\eta$.
%
\end{claim}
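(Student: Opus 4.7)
The plan is to combine the reflection symmetry of the eigenstates of a 1D path Hamiltonian with the guarantee from Claim \ref{cl:ExpectationOfDominantEigvec} that the SEEM effectively projects onto an eigenstate on average. After removing the edge $(v_i{-}1, v_i)$, the connected component of $H_i$ containing $v_i$ is itself a path of length $L_i$, so its eigenstates take the form $\ket{\psi_k} = \sqrt{\tfrac{2}{L_i+1}} \sum_{j=0}^{L_i-1} \sin\!\left(\tfrac{\pi k (j+1)}{L_i+1}\right) \ket{v_i+j}$, exactly as in Equations \ref{eq:evpspace}--\ref{eq:evpspace2}. The key structural observation is the reflection symmetry $j \leftrightarrow L_i - 1 - j$, which pairs the \emph{near half} $\{0,\dots,\ell_i - 1\}$ with the \emph{far half} $\{\ell_i,\dots,L_i-1\}$; when $L_i$ is odd the sole fixed vertex $j = (L_i-1)/2$ carries probability at most $2/(L_i+1) \le 2/11$ because $L_i \ge 10$. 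Hence for \emph{every} eigenstate and every $\ell_i = \lceil L_i/2\rceil$, the projector $P_{\geq \ell_i}$ onto the far half satisfies $\norm{P_{\geq \ell_i} \ket{\psi_k}}^2 \ge \tfrac{1}{2}(1 - \tfrac{2}{11}) > \tfrac{2}{5}$.

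Next I would bring in the SEEM. Write the ideal pre-measurement output as $U_{meas.}\ket{v_i,0,0} = \sum_k c_k \ket{\psi_k}\ket{\chi_k}$. Conditioning on the ancilla outcome $(\varepsilon_j,g)$, the (normalized) first-register state decomposes as $\ket{\phi_{j,g}} = a_{j,g}\ket{\psi_{f(j)}} + \sqrt{1-|a_{j,g}|^2}\,\ket{\phi_{j,g}^\perp}$, and Claim \ref{cl:ExpectationOfDominantEigvec} gives $\sum_{j,g}\Pr(\varepsilon_j,g)|a_{j,g}|^2 \ge \eta$. Since the trace distance between the pure states $\ketbra{\phi_{j,g}}{\phi_{j,g}}$ and $\ketbra{\psi_{f(j)}}{\psi_{f(j)}}$ equals $\sqrt{1-|a_{j,g}|^2}$, the far-half vertex-measurement probability obeys $\norm{P_{\geq \ell_i}\ket{\phi_{j,g}}}^2 \ge \norm{P_{\geq \ell_i}\ket{\psi_{f(j)}}}^2 - \sqrt{1-|a_{j,g}|^2} \ge \tfrac{2}{5} - \sqrt{1-|a_{j,g}|^2}$.

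Averaging over the outcome distribution $\Pr(\varepsilon_j,g)$ and applying Jensen's inequality to the concave function $\sqrt{\cdot}$,
\begin{equation*}
\Pr(v_{i+1} \ge v_i + \ell_i \mid U_{meas.}) \;\ge\; \tfrac{2}{5} - \sqrt{1-\eta} \;\ge\; \tfrac{2}{5} - e^{-n/36}.
\end{equation*}
Finally, since the algorithm actually runs $\widetilde{U}_{meas.}$ with $\norm{\widetilde{U}_{meas.} - U_{meas.}} \le \beta = 40/n^2$, every outcome probability shifts by at most $2\beta$, so $\Pr(v_{i+1} \ge v_i + \ell_i) \ge \tfrac{2}{5} - e^{-n/36} - 80/n^2$, which exceeds $\tfrac{1}{10}$ for all sufficiently large $n$. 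The main obstacle is ensuring that the two error budgets --- the confidence shortfall $1-\eta$ from imperfect energy discrimination and the gate-level simulation error $\beta$ --- both sit well below the symmetry-derived baseline of $\tfrac{2}{5}$; this is exactly what forces the somewhat strong a priori choices $\eta = 1 - e^{-n/18}$ and $\beta = 40/n^2$ in the algorithm's specification.
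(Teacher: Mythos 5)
Your proof is correct, and it rests on the same two pillars as the paper's: the reflection symmetry of the path eigenstates (giving far-half weight at least $\frac{1}{2}-\frac{1}{L_i+1}\ge\frac{9}{22}$), and Claim \ref{cl:ExpectationOfDominantEigvec}'s bound $\mathbb{E}_{j,g}|a_j|^2\ge\eta$. Where you genuinely diverge is in how the contamination from the non-dominant component is controlled. The paper expands $\norm{\Pi_i(a_j\ket{\psi_{f(j)}}+\sqrt{1-|a_j|^2}\ket{\psi_{f(j)}^\perp})}^2$ explicitly, bounds the cross term, and then tames the resulting $|a_j|\sqrt{2-2|a_j|^2}$ term with the ad hoc numerical inequality $x\sqrt{2-2x^2}\le 99(1-x^2)+0.01$ so that the average over $(j,g)$ becomes affine in $|a_j|^2$; this yields $\eta(99.5-\frac{1}{L_i+1})-99.01$, which only barely clears $1/10$. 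You instead observe that the conditional post-measurement state has overlap $|a_j|$ with $\ket{\psi_{f(j)}}$, so its trace distance from the eigenstate is $\sqrt{1-|a_j|^2}$, bound the probability shift by that, and push the average through with Jensen to get $\sqrt{1-\eta}=e^{-n/36}$. This is cleaner, avoids the tailored polynomial inequality entirely, and delivers the quantitatively stronger bound $\approx 2/5$ rather than $1/10$. One structural remark: your final paragraph folding in the $\beta$-error is harmless but redundant for this claim as stated --- the paper proves Claim \ref{cl:half} for the exact pre-measurement ($\beta=0$) and defers the $\beta>0$ analysis to a single global argument over all $100n$ iterations (via a fully unitary version of the algorithm), which is the cleaner place to handle it since later iterations' inputs depend on earlier measurement outcomes.
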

\begin{proof}
After measuring $\varepsilon_j,g$, the state of the system is  
$a_{j} \ket{\psi_{f(j)}} + \sqrt{1-\abs{ a_{j}}^2} \ket{\psi_{f(j)}^\perp}$. We define the $\Pi_i$ to be a projection on the vertices $v\ge v_i+\ell_i$.
The symmetry of the eigenstates around the middle of the path implies that $2\norm{\Pi_i\ket{\psi_{f(j)}}}^2+\frac{2}{L_i+1}\ge 1$, therefore $\frac{1}{2} - \frac{1}{L_i+1} \le \norm{\Pi_i \ket {\psi_{f(j)}}}^2\le \frac{1}{2}$. 
\begin{equation}
\begin{split}
\Pr \prnt{v_{i+1}\ge v_i + \ell_i| \varepsilon_j,g} &= \norm{\Pi_i \prnt{{ a_j}\ket{\psi_{f(j)}}+\sqrt{1-\abs{a_j}^2}\ket{\psi_{f(j)}^\perp}}}^2 
\\
&\ge 
\abs{a_j}^2 \norm{\Pi_i  \ket{\psi_{f(j)}}}^2 + \prnt{1-\abs{a_j}^2} \norm{\Pi_i \ket{\psi_{f(j)}^\perp}}^2 - 2 \abs{a_j} \sqrt{1-\abs{a_j}^2} \norm{\Pi_i \ket{\psi_{f(j)}}} \norm{\Pi_i \ket{\psi_{f(j)}^\perp}}
\\
&\ge \abs{a_j}^2 \prnt{\frac{1}{2}-\frac{1}{L_i+1}} - \abs{a_j} \sqrt{2-2\abs{a_j}^2}
\end{split}
\end{equation}
Using the inequality $x\sqrt{2-2x^2} \le 99(1-x^2) +0.01$ for $0\le x \le 1$, we bound the probability by $\abs{{a}_j}^2\prnt{99.5-\frac{1}{L_i+1}}-99.01$. 
Finally, we average over all $g,j$:
\begin{equation}
\begin{split}
\Pr \prnt{v_{i+1}\ge v_i+\ell_i}
&\ge
\sum_{g,\varepsilon_j} \Pr \prnt{v_{i+1}\ge v_i+\ell_i | \varepsilon_j,g} \Pr (\varepsilon_j,g) \ge  \mathbb{E}_{j,g}\prnt{ \abs{ a_j}^2} \prnt{99.5-\frac{1}{L_i+1}}-99.01\\
&\ge \eta\prnt{99.5-\frac{1}{L_i+1}}-99.01
\end{split}
\end{equation}
With  $\eta=1-e^{-n/18}$, the probability for a successful iteration is at least $1/10$ for $L_i>10$. 
\end{proof}

\begin{claim} 
When the algorithm applies the SEEM with $\beta=0$, it succeeds
with probability at 
least $1-e^{-81n/20}$. 
\end{claim}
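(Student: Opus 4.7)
The plan is a straightforward Chernoff bound on the number of \emph{successful} iterations, where I call iteration $i$ successful if $v_{i+1} \ge v_i + \ceil{L_i/2}$, with $L_i = L - v_i$ the remaining path length. By the preceding claim, whenever $L_i > 10$ the conditional probability of success in iteration $i$ is at least $1/10$, regardless of the history of previous measurement outcomes (the claim's proof uses only the spectral data of $H_i$ restricted to the relevant path, and this data has the same form for every $H_i$ after symmetry). If at some point $L_i \le 10$, the explicit $10$-step lookahead at the beginning of each iteration (or in the final step of the algorithm) already locates the endpoint and the algorithm halts successfully; so I only need to analyze the regime $L_i > 10$.

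Next, because every successful iteration at least halves the remaining path length, $n$ successes suffice to bring $L_i$ down from its initial value $L \le 2^n$ to $\le 1$, at which point the $10$-step lookahead closes out the algorithm. It therefore suffices to lower bound the probability of obtaining at least $n$ successful iterations out of the $100n$ performed.

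The central step is to apply a Chernoff bound, which requires addressing the dependence between iterations (since both $H_i$ and $v_i$ depend on all previous outcomes). I would use the standard stochastic-domination coupling: because $\Pr[X_i = 1 \mid \text{history}] \ge 1/10$ uniformly, one can construct an i.i.d.\ $\mathrm{Bernoulli}(1/10)$ sequence $(Y_i)_{i=1}^{100n}$ on a common probability space such that $X_i \ge Y_i$ almost surely, where $X_i$ is the indicator of a successful iteration. Applying the multiplicative Chernoff bound to $Y = \sum_i Y_i$, which has mean $\mu = 10n$, with deviation parameter $\delta = 9/10$, gives
\[
\Pr\!\prntt{\sum_{i=1}^{100n} X_i < n} \;\le\; \Pr\!\prntt{Y < (1-\delta)\mu} \;\le\; \exp\!\prnt{-\tfrac{\delta^{2}\mu}{2}} \;=\; e^{-81n/20},
\]
which is exactly the claimed bound.

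The only subtle point is the coupling argument used to bridge from dependent to independent trials; but since the $1/10$ lower bound on per-iteration success probability is uniform in the history, this is the textbook stochastic-domination construction and poses no real obstacle. The remaining work is purely a bookkeeping check that the initial $L \le 2^n$ together with the halving guarantee indeed implies that $n$ successes end the process.
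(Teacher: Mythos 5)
Your proof is correct and follows essentially the same route as the paper's: compare the dependent success indicators to an i.i.d.\ $\mathrm{Bernoulli}(1/10)$ sequence via stochastic domination (the paper phrases this as the probability being "bounded from below by the corresponding probability" for i.i.d.\ trials, and sidesteps the short-path/terminated regime by declaring idle iterations automatically successful rather than invoking the $10$-step lookahead as you do), then apply the multiplicative Chernoff bound with $\mu=10n$ and $\delta=9/10$ to get $1-e^{-81n/20}$. The differences are purely bookkeeping.
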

\begin{proof}
Consider the  $100n$ 
iterations in the protocol, when if the protocol had ended before completing 
all iterations, the iteration is simply idle. 
An iteration $i$ is declared ``successful'' if either it is idle, or if not, 
the length of the path had been halved during 
step 2(c) of this iteration. 
By this definition and by Claim \ref{cl:half}, 
the probability of the $i$\th iteration 
to be successful  
is $>1/10$, even when we condition on what happened in previous iterations. 
Let $X$ be the number of successful 
iterations out of the $100n$ iterations.
We want to bound the probability that $X>n$ from below. 
We note that this probability is bounded from below by
the corresponding probability for the number $Y$ of successful iterations 
when 
we have $100n$ i.i.d Bernoulli variables, each with 
probability exactly $1/10$ for success.  
For i.i.d variables we can use the Chernoff bound,    
\begin{gather}
\Pr(Y\le (1-\delta)\mu ) \le e^{-\delta^2\mu /2} .
\end{gather}
where $\mu$ is the expectation of $Y$, and we have here $\mu = 10n$. Setting 
$(1-\delta)\mu = n$, $\delta = 9/10$, we have 
$\Pr(Y>n)\ge 1-e^{-81n/20}$. 
This means that $Pr(X>n)$, the probability for at least $n$ successful 
iterations, is at least 
$1-e^{-81n/20}$. 
After $n$ successful iterations the path length must have reached 
below $10$ since $\frac{L}{2^n}\le  1$, and the algorithm succeeds in finding 
the end of the line. 
\end{proof}

We analyze what is the success probability of the algorithm with $\beta>0$. 
To this end, consider first a unitary version of the above algorithm, still with $\beta=0$, where the only measurement is at the end. 
In stage \ref{itm:AlgVMeas}, we copy (using cNOTs) the result $v_i$ to a separate register in every iteration instead of measuring. In stage \ref{itm:AlgEMeas} we apply the super efficient energy pre-measurement, where the Hamiltonian is conditioned on the copy of $v_i$. The outcome and the garbage are written on a separate register in every iteration. At the end of the algorithm, an indicator qubit is set to 1 if the algorithm found the end of the line and 0 otherwise. In this version the algorithm is unitary, and the only difference between $\beta=0$ and $\beta>0$ cases are the $100n$ instances of SEEMs. Thus, at the end of the algorithm, just before measuring, the state with $\beta>0$ SEEMs (denoted $\widetilde \xi$) deviates at most by $100\beta n$ from the state in which $\beta=0$ SEEMs were used (denoted $\xi$). Let $\Pi$ be the projection on a successful outcome of the algorithm, i.e., the indicator qubit is 1. We bound the algorithm success probability for $\beta>0$:
\begin{equation}
\norm{\Pi \ket{\widetilde \xi}}^2=\norm{\Pi \ket{\xi}+ \Pi \prnt{\ket{\widetilde \xi}-\ket{\widetilde \xi}}}^2 \ge \prnt{\norm{\Pi \ket{\xi}} - 100\beta n}^2 \ge  \norm{\Pi \ket{\xi}}^2 - 200\beta n
\end{equation}
Hence the success probability is reduced by $200\beta n$. Since $\beta=40n^{-2}$ the success probability is polynomially close to 1.

The main contribution to the time complexity of the algorithm is from the $100n$ rounds of step 3. The SEEM in each round calls $O(\log n)$ times to the fast forwarding procedure. Hence the time complexity is polynomial in $n$.

We conclude the proof of Theorem \ref{thm:No2sparseFF} by relaxing the demand for the generic FF procedure. 
Consider a generic fast forwarding procedure for $n$ qubit Hamiltonians with parameters $T=2^{(n^{1/c})}$, and $\alpha = n^{-4/c}$ with $c>1$. This procedure is weaker than a generic fast forwarding procedure  with parameters $T=5^n$ and $\alpha=n^{-4}$ used in the proof, however one can use the weaker procedure to $(5^n,n^{-4}$)-FF any Hamiltonian.  

Given an $n$ qubit Hamiltonian $H$, one can define an $m=(3n)^c$ qubit Hamiltonian $H'=H\otimes \mathbbm{1}_{2^{m-n}}$. $H'$ is still 2-sparse row computable, therefore it can be FF using the weaker procedure for $T=2^{(m^{1/c})}\ge 5^n$ 
and with $\alpha = m^{-4/c}=(3n)^{-4}<n^{-4}$ (polynomial complexity in $m$ is also polynomial in $n$). Hence \texttt{OEOTL} can be  solved efficiently using the weaker generic FF procedure.
$\square$

\subsection{Does BQP=PSPACE imply generic fast forwarding?} \label{sec:BQPinPSPASEimplyallFF}
For the computational complexity point of view, 
it is interesting to ask whether
the converse of Theorem \ref{thm:No2sparseFF} holds. 
Explicitly, is it true that BQP=PSPACE 
implies that all physically realizable Hamiltonians can be FF? 
One might hope to prove this along the following lines.  
\cite{BCCKS14} show that 
$d$-sparse row-computable Hamiltonians, for polynomial $d$ 
(For our purposes, we can think of those as the 
physically realizable Hamiltonians)    
 can be simulated in quantum 
polynomial time to within exponentially good accuracy. 
Using a quantum PSPACE machine, 
such a Hamiltonian can therefore be applied exponentially many 
times on a given input state, whereas the accumulated 
error is kept bounded. Thus, these Hamiltonians can be fast forwarded in 
quantum polynomial SPACE. One might hope to use the assumption that 
$\BQP=\PSPACE$ to complete the argument; indeed, 
it is known that the class quantum 
polynomial space is equal to PSPACE \cite{Watrous09} which is equal to BQP by assumption.  
However there is a serious difficulty in turning this into a proof: 
we do not know that equalities between classes of decision 
problems could be translated to equivalences in the ability 
to simulate unitary evolutions. 
As pointed to us by Aaronson \cite{AaronsonPrivate}, this is tightly related to the 
{\it unitary synthesis problem} defined in \cite{AK07}. The converse of Theorem 
 \ref{thm:No2sparseFF} 
 thus remains an open problem. 

\section{Quantum algorithms and fast-forwarding Hamiltonians} \label{sec:QAlgAndFF}
One can ask a conceptual question: 
is fast-forwarding Hamiltonians the true underlying
source for all quantum algorithmic speed-ups? 
It turns out that in fact this is far from being the case. Indeed, like in Shor's algorithm, 
the Abelian
hidden subgroup problem (HSP) is solved \cite{Kitaev95,ME99} by efficiently  utilizing phase estimation to exponential accuracy, thus one can associate a Hamiltonian to the problem, and the quantum algorithm can be translated to a cTEUP violation in measuring the energies with respect to this Hamiltonian. We believe (though we have not worked out the details) that this is also the case for the recent extensions of Shor's algorithm to finding unit groups of number fields 
\cite{hallgren07,EHKS14}, which are also based on phase estimation of the eigenvalue of a unitary applied to exponential powers.
However, to our current understanding, other than these few direct extensions 
of Shor's algorithm, none of the other known quantum algorithmic 
speed-ups can be related to fast forwarding -- 
not even quadratic fast forwarding.  We note that some of these algorithms 
can be viewed an an energy measurement of a corresponding Hamiltonian, as 
we describe below, however, the quantum speed-up does not result 
from a FF of this Hamiltonian. We describe this in three interesting cases. 

\begin{enumerate} 
\item \textbf{The exponential speed-up of the 
quantum walk on two glued binary trees \cite{CCDFGS03}:}  
As shown in \cite{CCDFGS03}, the glued trees problem is 
highly symmetric, and 
the search is limited to a subspace of dimension {\it linear} 
in the number of qubits. In addition, 
\cite{CCDFGS03} show 
that the spectral gap of the Hamiltonian in that subspace 
is inverse polynomial. \\
One can in fact view this process as an energy measurement, except not 
an accurate one. To see how  
continuous time quantum walks (CTQW) are related to energy measurements, 
consider the following analogy: 
In CTQW, a value $t$ is chosen uniformly over $[0,T]$ and the system is 
evolved by $e^{-iHt}$ and then measured. Almost equivalently, one can add to 
the 
state an ancilla register, initiated in the 
 superposition over all values of time 
$\frac{1}{\sqrt{T}}\sum_{t=0}^{T-1} \ket{t}$, and then apply 
the Hamiltonian on the state for a duration 
$t$ conditioned that the value in the ancilla register is $t$, 
and finally discard the $t$ register. This latter procedure is effectively 
a phase estimation (i.e., energy measurement), with 
the outcome traced out.  \\
However, the algorithm in \cite{CCDFGS03}
only requires polynomial accuracy to perform this energy measurement, and 
in order to do this 
it simply applies the Hamiltonian for a polynomial amount of time, 
and does not utilize any fast-forwarding 
(equivalently, it does not violate the cTEUP). 
\item \textbf{Grover's quadratic algorithmic speed-up \cite{Grover96}: }
In the Grover's algorithm, an initial state $\ket{s}$ which is a uniform superposition over a search space of size $N$ is rotated slowly to the marked state $\omega$, and reaches its proximity after $O(N^{-1/2})$ applications of the iterator $U=(\mathbbm{1}-2\ketbra{\omega}{\omega})(2\ketbra{s}{s}-\mathbbm{1})$. 
$U$ may be written as:
\begin{equation}
U=(\mathbbm 1 - 2\ketbra{\omega}{\omega}) \prnt{\frac{2}{N} \prnt{(N-1)\ketbra{s'}{s'}+\ketbra{\omega}{\omega}+\sqrt{N-1}(\ketbra{s'}{\omega}+\ketbra{\omega}{s'})} -\mathbbm{1}}
\end{equation}
where $\ket{s}=\sqrt{(N-1)/N}\ket{s'}+\sqrt{1/N}\ket{\omega}$. The subspace spanned by $s',\omega$ is invariant to $U$; by denoting $\ket{\omega}=\ket{0}$ and $\ket{s'}=\ket{1}$, 
\begin{equation}\label{eq:Ugrover}
\begin{split}
U&=\frac{1}{N}\left(\begin{array}{cc}
N-2 & -2\sqrt{N-1}\\
2\sqrt{N-1} & N-2
\end{array}\right)=\mathbbm 1\cos\prnt{\frac{2\sqrt{N-1}}{N}}-i\sin\prnt{\frac{2\sqrt{N-1}}{N}}\sigma^{y}+O(N^{-3/2})\\
&=e^{-2i\frac{\sqrt{N-1}}{N}\sigma^{y}}+O(N^{-3/2}).
\end{split}
\end{equation}
Here we used the following:
\begin{gather}
e^{i\varphi \sigma^y}=\sum_{j=0,1,2...} \frac{\prnt{i\varphi \sigma^y}^j}{j!}= \sum_{j=0,2,4,...} \frac{\prnt{i\varphi}^j}{j!}\cdot \mathbbm{1}+\sum_{j=1,3,5,...} \frac{\prnt{i\varphi}^j}{j!}\cdot \sigma^y=\mathbbm{1}\cos(\varphi)+i\sigma^y \sin(\varphi) \\ 
1-\frac{2}{N} + O(N^{-2})= \cos \prnt{\frac{2\sqrt{N-1}}{N}}\\
\frac{2\sqrt{N-1}}{N} + O(N^{-3/2})= \sin\prnt{\frac{2\sqrt{N-1}}{N}}
\end{gather}
Denote $H= 2\sigma^y/\sqrt N$; then $H$ has
eigenstates $\frac{1}{\sqrt 2}(\ket{s'}\pm i \ket{\omega})$, and additionally,
\begin{equation}
\norm{e^{-iH}-U}=\norm{e^{-2i\sigma^y/\sqrt N}-e^{-2i\sigma^y\sqrt{N-1}/N}+O(N^{-3/2})}=O(N^{-1})
\end{equation}
Measuring an eigenstate of $H$ in the original standard basis returns $\omega$ with probability half. Thus an algorithm equivalent to Grover's is 
to apply an energy measurement of the state $s$ with respect to the 
Hamiltonian $H$, with sufficient accuracy to arrive at a state close to an 
eigenstate, and 
then to measure in the original standard basis. 
Since the two eigenvalues differ by $\theta(\frac{1}{\sqrt{N}})$, it turns out 
that it suffices to perform a measurement with $\eta$-accuracy $N^{-1/2}/10$ 
for $\eta=1-10^{-3}$ to achieve probability at least $1/3$ to measure 
$\omega$. 
The exact argument follows from similar arguments to those in the 
proofs of claims \ref{cl:ExpectationOfDominantEigvec} and \ref{cl:half}\footnote{Let $\varepsilon_j,g$ be the energy measurement output and the state of the garbage register respectively, and let $a_j$ be the amplitude of the eigenstate with energy closest  to $\varepsilon_j$. According to claim \ref{cl:ExpectationOfDominantEigvec}, $\mathbbm{E}_{j,g} |a_j|^2 \ge \eta$ . Similarly to claim \ref{cl:half}, given $j,g$ the marked state is found with probability $\Pr(\omega | j,g)=|a_j|^2 (\frac{1+99\sqrt 2}{2})-\frac{99}{\sqrt 2} -0.01$, and thus $\Pr(\omega)\ge \eta (\frac{1+99\sqrt 2}{2})-\frac{99}{\sqrt 2} -0.01$.}. 

The energy measurement with such accuracy is implemented by phase estimation of $U$ in complexity $O(N^{-1/2})$ by choosing $\ell=b+10$, $b=\ceil{4+\frac{1}{2}\log N}$ 
in Lemma \ref{lem:PEConfidence}. Note that $U$ only approximates $e^{-iH}$.
The phase estimation circuit (figure \ref{fig:ShorPE}) uses  $O(\sqrt N)$ instances of $U$, hence the state after the circuit (before the collapse) deviates by $O(\sqrt{N}\norm{U-e^{-iH}})=O(N^{-1/2})$ compared to phase estimation of $e^{-iH}$. However to the probability to measure $\omega$ after this approximation is still greater than $1/3$ in these parameters.

No FF is required for the speedup (compared to a classical algorithm), as this phase estimation only applies $H$ for time durations 
which are at most $O(\sqrt{N})$. 
The quadratic speed-up is achieved by the mere fact that the accuracy required to separate the two eigenstates is of the order of $1/\sqrt{N}$ and not $1/N$.
\item \textbf{Exponentially fast solutions of 
linear equations \cite{HHL09,CKS15}:} Given an $N\times N$ Hermitian $s$-row computable matrix $A$, and a state $\ket{b}$, the algorithm \cite{HHL09} finds the state $\ket{x}=\sum_i x_i \ket{i}$ for $x$ that solves the equation $Ax=b$ 
(the non Hermitian case can be easily reduced to the Hermitian case). 
The time complexity of the algorithm is $O(poly(\log(N),\kappa, 1/\epsilon)$, 
where $\kappa$ is the condition number of $A$, i.e., the ratio between the 
largest and smallest eigenvalues of $A$, and $\epsilon$ is the additive 
error of $\ket{x}$ allowed. The heart of the algorithm is a phase estimation 
of the unitary matrix $e^{iA}$ applied to the state $\ket{b}$.  
The Hamiltonian simulation procedures used to simulate $e^{-iAt}$ in
\cite{HHL09,CKS15} apply for any $A$, thus both require at least linear 
computational complexity in $t$. If it weren't so, one could violate cTEUP for unknown Hamiltonians - contradicting 
Theorem  \ref{thm:cTEUPUnknown}.  Hence no fast forwarding is involved.

\end{enumerate} 

As for other famous  quantum algorithmic speed-ups, these do not seem to have a sensible description in terms of energy measurements of associated Hamiltonians, so they also do not seem to be related to FF. In particular,
Kuperberg's sub-exponential algorithm for finding 
a hidden subgroup of the Dihedral group \cite{kuperberg05} 
and $\BQP$-complete Topological Quantum Field Theory (TQFT) based quantum algorithms 
\cite{FKW02, BFLW05b, AJL09}, do not seem to have a FF origin.

\subsection{A note on Graph Automorphism}\label{sec:GI}

We consider two problems related to the symmetric group, or the group
 of permutations of $n$ elements, 
denoted $S_n$. 

\begin{definition} {\bf Permutation orbit}\label{def:PO}: 

\noindent$\mathtt{Input:} $
Two permutations
$\sigma, \tau \in S_n$.

\noindent$ \mathtt{Output:} $
Is there a $k$ such that $\sigma^k=\tau$ (i.e., does the subgroup 
of $S_n$ generated by $\sigma$ contain $\tau$?).   
\end{definition} 

The above problem is phrased entirely in group theoretical language, 
and it turns out that it has a $poly(n)$ {\it classical} solution 
\footnote{\label{foot:ChineseRT}
\label{footnoteGA} We're to find the minimal $k$ s.t. $\tau=\sigma^k$. For any $j=1..n$ we can exhaustively find the minimal $0<k_j< n$  s.t. $\tau(j)=\sigma^{k_j} (j)$ (if no solution is found for any $k_j$, then $\tau$ is not a power of $\sigma$). Furthermore, $\tau(j)=\sigma^{k_j+m r_j} (j)$ for every $0<m\in \mathbb{N}$, where $r_j$ is the size of the cycle in $\sigma$ containing $j$.  The problem reduces to finding $k$ satisfying a system of linear congruences $k\equiv k_j \mod r_j$. Solving such system is done by an efficient classical algorithm \cite{Rosen05}. 
}.

One can define a variant of this problem in which 
the permutation group acts on the set of graphs rather than on itself. 
We call this the Cyclic Graph Automorphism problem (CGA)

\begin{definition}{\bf Cyclic Graph Automorphism}

\noindent$\mathtt{Input:} $
An $n$ vertex graph $\Gamma$, and a permutation
$\sigma \in S_n$.

\noindent$ \mathtt{Output:} $
Is there an automorphism $\tau$ of $\Gamma$ 
s.t. $\tau=\sigma^k$ for some integer $k$. (i.e., does the subgroup 
of $S_n$ generated by $\sigma$ contain an automorphism of $\Gamma$?).   
\end{definition} 

We recall that a permutation $\tau$ is an automorphism of a
graph $\Gamma(V,E)$ if 
\begin{equation}
(v,u)\in E \Leftrightarrow (\tau(v)),\tau(u))\in E
\end{equation}

This is a highly restricted version of the notoriously challenging  
graph automorphism problem (GA) (The general version, GA, is known 
to be at least as hard 
as the graph isomorphism problem  \cite{KST+94}, 
which is a long standing open problem  for quantum algorithms; 
Though a recent breakthrough 
by Babai \cite{Babai16} gave a quasi-polynomial classical algorithm 
for the problem.)  

We observe that unlike the Permutation Orbit problem 
(Definition \ref{def:PO}), the related CGA problem seems like a 
hard problem for classical computers.   
In particular, the size of the largest order of an element of the symmetric group $S_n$ is given by Landau's function $g(n)$, which is super polynomial:
\begin{equation}
\lim _{n\rightarrow\infty} \frac{\ln(g(n))}{\sqrt{n\ln (n)}}=1. 
\end{equation}
Hence, the CGA problem cannot be solved in polynomial time
by a brute-force search. Moreover, 
the classical algorithm for the group-theoretical variant of the problem, 
given in Footnote \ref{footnoteGA}, 
cannot be applied to CGA since $\tau$ is not given explicitly.

The following claim is easy to see: 
\begin{claim} 
An efficient quantum algorithm exists for 
the CGA problem. 
\end{claim}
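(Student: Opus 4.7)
The plan is to mimic Shor's order-finding reduction, but applied to the orbit of $\Gamma$ under the cyclic group $\langle \sigma \rangle$ rather than to a multiplicative order modulo $N$. I would encode an $n$-vertex graph $\Gamma$ by its adjacency-matrix entries on $O(n^2)$ qubits and define the permutation unitary $U_\sigma$ by $U_\sigma \ket{\Gamma'} = \ket{\sigma(\Gamma')}$, where $\sigma$ acts by relabelling the vertex indices of any graph $\Gamma'$. Crucially, for any integer $k$ the classical map $\sigma^k$ can be computed in $\mathrm{poly}(n,\log k)$ time by extracting the cycle decomposition of $\sigma$ and reducing $k$ modulo each cycle length; this yields an efficient reversible implementation of $U_\sigma^k$ for all $k$ up to $r:=|\langle\sigma\rangle|=\exp(O(\sqrt{n\log n}))$ (Landau's bound). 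This is precisely the fast-forwardability of $U_\sigma$, directly analogous to the role of modular exponentiation in Theorem \ref{thm:ShorCounterexample}.

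Next I would invoke phase estimation on $U_\sigma$ with input $\ket{\Gamma}$. Let $L$ be the length of the orbit of $\Gamma$ under $\langle \sigma \rangle$; since orbits partition cosets, $L$ divides $r$. Then
\begin{equation}
\ket{\Gamma} = \frac{1}{\sqrt{L}}\sum_{k=0}^{L-1}\ket{\psi_k}, \qquad U_\sigma\ket{\psi_k}=e^{2\pi i k/L}\ket{\psi_k},
\end{equation}
where $\ket{\psi_k}=\frac{1}{\sqrt{L}}\sum_{j=0}^{L-1} e^{-2\pi i j k/L}\ket{\sigma^j(\Gamma)}$. Phase estimation to $\ell = O(\log r) = O(\sqrt{n\log n})$ bits of precision then returns an integer $m$ with $m/2^\ell$ close to $k/L$ for a uniformly random $k\in\{0,\dots,L-1\}$; the standard continued-fraction post-processing from Shor's algorithm recovers $L$ exactly with probability $\Omega(1/\log L)$, which is amplified to $1-2^{-\Omega(n)}$ by $\mathrm{poly}(n)$ independent repetitions and taking the most common denominator.

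Finally, I would classically compute $r$ as the least common multiple of the cycle lengths of $\sigma$, and output ``yes'' iff $L$ is a proper divisor of $r$. Correctness follows from the orbit-stabiliser theorem: if $L<r$ then $\sigma^L \neq \mathbbm{1}$ lies in $\langle\sigma\rangle$ and fixes $\Gamma$, hence is a nontrivial automorphism of $\Gamma$; conversely, any nontrivial $\sigma^k \in \langle\sigma\rangle$ with $\sigma^k(\Gamma)=\Gamma$ forces $L\mid k$ together with $k\not\equiv 0 \pmod{r}$, which requires $L<r$.

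There is no real technical obstacle beyond the correctness of Shor-style phase estimation, which is completely standard; the substantive content of the claim is conceptual, namely the observation that $U_\sigma$ is fast-forwardable because classical permutations can be exponentiated in time polylogarithmic in the exponent. In the language of this paper, CGA is therefore one more example of a quantum speed-up whose origin is exactly a cTEUP violation enabled by fast forwarding, connecting it to the Shor family of Hamiltonians described in Section \ref{sec:ShorAsCounterexample}.
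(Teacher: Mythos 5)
Your proposal is correct and follows essentially the same route as the paper: implement the action of $\sigma$ on graphs as a unitary whose exponential powers are efficiently computable via the cycle decomposition, run Shor-style phase estimation on $\ket{\Gamma}$ to find the orbit length $L$, classically compute the order $r=\mathrm{lcm}$ of cycle lengths, and accept iff $L<r$. The only cosmetic differences are that the paper keeps a copy of $\Gamma$ in a second register and defers the phase-estimation details to its Section \ref{sec:ShorAsCounterexample}, whereas you spell out the continued-fraction post-processing explicitly.
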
 
The proof is essentially by finding the length of the orbit as
in Shor's algorithm; this is due to the ability to fast forward the 
Hamiltonian corresponding to the application of $\sigma$ on the given 
graph. 

\begin{proof} 
Let $K$ be a cyclic subgroup of $S_n$ of order at most $g(n)$, 
generated by $\sigma$. Let $U_\sigma$ be the unitary  
that applies the group action 
of $\sigma$ on a graph $\Gamma$ (represented here by its adjacency matrix):
\begin{equation} \label{eq:permutationOnGraph}
U_\sigma\ket{\Gamma,\Gamma} = \ket{\Gamma,\sigma \Gamma \sigma^{-1}}
\end{equation}
If there is no automorphism of $\Gamma$ in $K$ 
then the orbit of $\Gamma$ under the subgroup generated by 
$\sigma$ is of size equal to the order of $\sigma$ (namely
the minimal $r$ such that $\sigma^r$ is the identity). 
Otherwise this orbit is of size $m<r$ s.t. $m$ is the minimal positive integer 
satisfying $\sigma^m=\tau$ for $\tau$ at automorphism.   
The order of $\sigma$, $r$, can be calculated classically efficiently\footnote{Let $\sigma=\prod_i \sigma_i$, where $\sigma_i$ are the disjoint permutation cycles composing $\sigma$. Let $r_i$ be the order of $\sigma_i$ ($r_i$ is simply 
the length of the $i\th$ cycle). The order of $\sigma$, $r=\mathrm{lcm}(r_1,r_2,..)$}.
Suffices then to compute the order of $\Gamma$ under the action of $U_{\sigma}$ 
and compare it to $r$. This can be done using phase estimation 
for the unitary $U_{\sigma}$ applied with the initial state 
$\ket{\Gamma,\Gamma}$, exactly as described in Section \ref{sec:ShorAsCounterexample}, 
using the fact that the permutation which is derived by taking an 
exponentially large power of $\sigma$ can be calculated efficiently
classically  (by taking exponential power of each permutation cycle composing $\sigma$).  
\end{proof}

To the best of our knowledge, 
the fact that quantum computation is useful in {\it any} context of GI, albeit
highly restricted, was not noticed before; This is probably because 
the Permutation Orbit problem, namely 
the group theoretical version of the CGA problem, is easy classically, 
and hence the problem was not considered. 

Note that if we randomly pick $\sigma\in S_n$, 
the probability that $\tau\in K$ is at most
 $\frac{g(n)}{n!}\approx \frac{e^{\sqrt {n\ln n}}}{n!}=O(e^{-n\ln n +n +\sqrt{n\ln n}-\frac{\ln n}{2}})$, which means that 
the above algorithm cannot be used to provide a algorithmic speed up 
for the GA or graph isomorphism problem.  

We remark that the above algorithm 
can be easily extended to any Abelian subgroup of $S_n$ 
(the largest such group is of size $3^{n/3}$)
using the algorithm for finding Abelian hidden subgroup \cite{ME99}.

\section{Related work} \label{sec:related}

\subsection{Quantum metrology and the Heisenberg limit}
\label{sec:HeisenbergInMetrology}

The question we study here, of efficiently measuring  
energy of eigenstates, is different, though related, to the question 
of {\it distinguishing Hamiltonians} studied in \cite{CPR00},
or to the related question of {\it sensing} in quantum metrology \cite{GLM06, GLM11}. 

A typical problem in quantum metrology is to estimate 
a parameter $\gamma$ of a Hamiltonian $H=\gamma H_0$ 
that resides in a black box, where $H_0$ is a known dimensionless Hamiltonian. 
Consider a Mach-Zehnder interferometer  aimed to determine the phase $\gamma$ of a phase shifter (i.e., $H=\gamma$, and the probe, namely the photon, 
 passes through the box in one time unit.). 
Assuming the optical paths are equal, the Fock state of the probe before the measurement is $\cos(\gamma/2) \ket{1}_a\ket{0}_b+i \sin(\gamma/2) \ket{0}_a\ket{1}_b$, where $a,b$ denote the two spatial modes. $\gamma$ is 
estimated by measuring which of the paths the probe had taken (and of course 
taking statistics over many experiments).  
Classically, one can improve the accuracy (in this context, the standard deviation) by repeating the experiment $n$ times. Assuming no correlation between probes, the accuracy is improved by a factor of $\sqrt n$. This is known as the \emph{standard quantum limit}. When quantum correlations are introduced, the accuracy can bypass the standard quantum limit \cite{Yurke86}, by using $n$ photons simultaneously, instead of repeating a single photon experiment $n$ times. For instance, the system is in the a ``Noon state'' \cite{Sanders89,LKD02}
 $\ket{n}_a\ket{0}_b+\ket{0}_a\ket{n}_n$, before entering the phase shifter. 
Each probe then gets a global phase factor as it passes through the phase shifter, and the state just before the measurement becomes 
$\cos(n\gamma/2)\ket{n}_a\ket{0}_b+i\sin(n\gamma/2)\ket{0}_a\ket{n}_b$. 
The distinguishability compared to a single probe is improved by a 
factor of $n$. 
When the Hamiltonian is accessed as a black box, this scaling of accuracy 
as $1/n$ is optimal; it is referred to as the the 
\emph{Heisenberg limit}\cite{GLM06,GLM11,ZpDK10,ZpDK12}, where $n$ is the number of times the Hamiltonian was applied (for one time unit each)
\footnote{A careful examination shows that no entanglement is necessary if one is allowed to modify the apparatus: one can direct a single photon $n$ times through the  phase shifter, and get exactly the same distribution as in the Noon state experiment}.
Such limits seem related to the TEUP; In particular, 
the metrology setting 
considers Hamiltonians 
which are promised to be of the form $H=\gamma H_0$, 
where $H_0$ is known and $\gamma$ is unknown; 
Inaccuracy in $\gamma$ is thus directly related to inaccuracy in the 
energy eigenvalues. Moreover, the metrology setting allows accessing the 
Hamiltonian only through a black box. 
This is very reminiscent of the conditions for TEUP with
unknown eigenvalues.  
It is thus natural to ask 
whether the Heisenberg limit is equivalent in some sense to the TEUP 
for Hamiltonians with unknown eigenvalues, Theorem \ref{thm:TEUP23confidence}. 
Indeed, it seems that an equivalence between the two should be provable, 
but we have not worked out the details. Roughly, the argument is as follows.  

To derive a TEUP from the Heisenberg limit,  
assume we are given (or can generate) 
an eigenstate of $H_0$ with eigenvalue $E$, 
and also assume we can apply an energy measurement 
with standard error $std(E)$ on this eigenstate. 
This implies a standard error $std(E)/E$ 
of $\gamma$; However in this setting we can apply the Heisenberg limit 
(it holds even though we assume infinite computational power outside 
of the black boxes, so we can assume we can generate the eigenstate). 
This sets an upper 
bound on the accuracy to estimate $\gamma$, and so it 
entails a bound on the accuracy of the energy measurement. 
The details of this implications remain to be filled in, 
since the two settings use 
different accuracy and resource measures.  

The other direction, of deriving a version of the Heisenberg limit from the 
TEUP, is a little less clear. 
Indeed, one can reduce the energy measurement problem to the 
problem of estimating $\gamma$, 
by applying the transformation  $\ket{\psi_E,0}\mapsto\ket{\psi_E,\gamma' E}$, 
with $\gamma'$ being the estimated value of $\gamma$. 
However, we do not have a TEUP for the case of Hamiltonians that are 
completely known up to a scalar factor $\gamma$; it seems that such 
a TEUP should follow from arguments similar to those in \cite{AMP02},
but this remains to be done.

\subsection{Other quantum bounds}
\subsubsection{Hamiltonian in a black box}
%
%

We review two bounds on Hamiltonian dynamics.

\emph{The Mandelstam-Tamm relation} \cite{MT45,Fleming73,Bhattacharyya83}, which was also named ``the time energy uncertainty principle'', is the following:

\begin{equation}
\norm{\bra{\psi} e^{-iHt} \ket{\psi}}^2 \ge \cos ^2 \prnt{\frac{\Delta E t}{\hbar}} ~~~~ 0\le t \le \pi \hbar / 2\Delta E,
\end{equation}
and $(\Delta E)^2 = \bra{\psi} H^2 \ket{\psi} - \bra{\psi} H \ket{\psi}^2$. Namely, the change of the quantum state's direction, when evolving under the 
Hamiltonian $H$, is bounded from below 
in terms of its energy variance with respect to eigenvalues of $H$.   

Let $H$ be a Hamiltonian with ground energy $0$. 
The \emph{Margolus-Levitin bound} on the time duration required for a state $\psi$ to evolve under $H$ to a state orthogonal to $\psi$ is the following:
\begin{equation}
\tau_{\perp} \ge \frac{\pi \hbar}{2 \bra{\psi}H\ket{\psi}}
\end{equation}
Here the state's evolution is bounded by the expectation value of the energy. 

Both bounds refer to the dynamics of a given Hamiltonian, which may not be altered (in a similar way to when $H$ is applied as a black box). 
Hence these bounds do not apply to shortcuts such as fast forwarding, 
or simulations of the Hamiltonian by equivalent quantum circuits. 
A relation between the two bounds and the Heisenberg limit is discussed in \cite{ZpDK10}.

\subsubsection{The No-FF Theorem in the Query Model}\label{sec:Hsim}

A different model of Hamiltonian simulation, called the Hamiltonian 
query model, was studied in  \cite{BACS07,ChildsThesis, CK11,CW12,BCCKS14}. 
In this model, the access to the entries of the row-sparse Hamiltonian is by 
queries to a row-oracle. Given the index of the row, 
the row-oracle returns the column numbers and the corresponding values of all non-zero elements in the row. 
Hamiltonian simulation algorithms in this model \cite{BACS07,ChildsThesis, CK11,CW12,BCCKS14},
 minimize the number of queries. 
 Unlike Hamiltonians in which the circuit simulating the Hamiltonian is known 
(which is the subject of most of this paper)  
in the query model the circuit applying the Hamiltonian is {\it hidden} 
and therefore the only way to gain information about the Hamiltonian is
by queries. 

The most recent algorithm \cite{BCCKS14} for  simulating a $d$-sparse normalized Hamiltonian $H$ for time $t$ with additive error $\varepsilon$ in this query 
model 
requires  $O(\frac{\tau \log (\tau/\varepsilon)}{\log\log(\tau/\varepsilon)})$ queries, with $\tau=d^2t$. 
Note that the query complexity is more than linear in the time of the 
simulation, and it depends logarithmically on $\varepsilon^{-1}$. It 
is independent of the number of qubits.

In \cite{BACS07}, it was shown that this behavior on time is essentially 
optimal: 
no procedure could simulate Hamiltonians in the query model 
for time $t$ with query complexity that is sub-linear in $t$. 
This can be viewed as a no-fast-forwarding theorem for the query model. 
(The proof is by showing that such a procedure implies an algorithm 
for finding the parity of a binary string in less queries than the known 
lower bound \cite{BBCMW98}).
One might wonder whether this theorem can be derived from
the cTEUP for unknown Hamiltonians (Theorem \ref{thm:cTEUPUnknown} adapted 
from \cite{AMP02}) which together with our theorem \ref{thm:main} 
implies a no-FF for unknown Hamiltonians.  
However this result cannot be applied for the query model; 
the reason is that the query model has better distinguishability than the model of unknown Hamiltonians.   
For example, finding an entry of the Hamiltonian with accuracy of $poly(n)$ 
bits costs one query in the query model, 
unlike the exponential time required in the case where 
the Hamiltonian is given as a black box. 

We can summarize the comparison between the three models: 
a Hamiltonian given as a black box or one with unknown eigenvalues can't be 
Fast forwarded as this violates the TEUP/cTEUP for unknown Hamiltonians (Theorems \ref{thm:TEUP23confidence},\ref{thm:cTEUPUnknown} respectively). Adding information on the Hamiltonian by giving access to it by row oracle still won't allow a general FF procedure due to the no-FF theorem relying on 
query complexity bounds \cite{BACS07}. 
Our Theorem \ref{thm:No2sparseFF}
is the corresponding theorem 
for the case of $2$-sparse row computable 
Hamiltonians; Since we are no longer in the black box model,
or even in the query model, 
we must condition the result on computational assumptions 
(here, we rely on the widely believed assumption that  $\PSPACE\ne \BQP$).

\subsection{Susskind's complexificaition of a wormhole's length}
Theorem \ref{thm:No2sparseFF} surprisingly relates to a recent conjecture by Susskind on the length of non-traversable wormholes \cite{Susskind16}. 
 
The AdS/CFT correspondence \cite{Maldacena99} is a conjectured equivalence, 
or duality, between certain quantum gravity theories in anti de-Sitter  (AdS) spacetime and Conformal Field theories (CFTs). The equivalence implies that 
any physical process can be formalized in either theory, and give the same 
predictions. 
An exact transformation between the two theories is yet to be found.

The CFT dual of a non-traversable wormhole is the maximally entangled state, 
which evolves in time under the transformation:
\begin{equation} \label{eq:SusskindPsit}
\ket{\psi_t}=2^{-n/2} \sum_{y=1}^{2^n} \ket{y}\otimes U^t \ket{y}   ~~~~~ U=V^{-1}V^T,
\end{equation}  
where $V$ is a unitary. 
Susskind \cite{Susskind16} has recently proposed the very intriguing 
suggestion, that the CFT dual of the length of non-traversable wormholes is equal to the {\it quantum circuit complexity} required to approximate $\psi_t$. 
This proposition is currently the only one producing the expected behaviour of non-traversable wormhole length in cases of interest. For instance, quantum gravity  predicts that the length of the wormhole grows linearly in time \cite{FW62} until a bound of $2^n poly(n)$ is reached, and shrinks at time $\approx 2^{2^n}$. In terms of state complexity, 
by time $t< 2^{2^n}$, $\psi_t\approx \psi_0$ as can be proven by a counting 
argument; whereas $2^n poly(n)$ is the upper bound on the complexity for 
approximating a state with constant error (generate one amplitude of a standard basis state at a time). 
A natural question is whether there exists a unitary $U$ which achieves this 
upper bound, or in more picturesque words, {\it complexifies} 
the state $\psi_t$ 
so rapidly. 

Aaronson and Susskind \cite{AS16p,Aaronson16} do not handle 
the particular $U$ of the CFT, but prove that there {\it exists} 
a unitary with almost maximal complexification 
(more precisely, that $\psi_t$ with this $U$ cannot
be approximated efficiently for some $t<2^n$). Their proof works   
under a commonly believed computational assumption 
(\PSPACE $\nsubseteq$\PP /poly). 
In their terminology, they show that 
there are no ``shortcuts'' to generating the state $\psi_t$ for such a $U$.   
This ties with our no-generic FF Theorem \ref{thm:No2sparseFF}: 
note that if the Hamiltonian $H$ generating 
the unitary $U$, s.t. $U=e^{-iH}$, could be exponentially 
{\it fast-forwarded}, the state complexity of $\psi_t$ would by polynomial. 
Thus, impossibility of FF of $H$ follows from impossibility to generate 
$\psi_t$ efficiently. 
The other way round might not hold - it is conceivable that 
simulating the related unitary, namely  fast forwarding, is impossible, 
but simulating the {\it state} $\psi_t$ can be generated efficiently 
somehow by a different way. 
This is why the computational assumption in Aaronson and Susskind's result is 
stronger than ours, and involves the class $\PP$ and not $\BQP$. 
The two other differences between the two theorems
(they work in the non-uniform setting, namely use $\PP/\mathrm{poly}$ rather than $\PP$, 
and consider approximation of the state to within a constant), 
depend on the setting and are less important.

We conclude that the notion of fast forwarding and limitations on it 
seem relevant also in the context of 
questions arising in quantum gravity.

\section{Conclusions and Open Questions}\label{sec:conc}
A fundamental open question remains:  
Is it possible to characterize, by Physical terms, the exact 
conditions for fast forwarding? Or equivalently, for super-efficient 
energy measurements? What is the true physical reason for such 
a possibility? 

A very intriguing question in this context is whether 
many body localization Hamiltonians, which had attracted much attention 
recently \cite{NH14}, and which are 
in some sense a generalization of commuting local Hamiltonians, 
can be fast forwarded efficiently; by Theorem \ref{thm:main}
this would imply the ability to measure 
the energy of such systems with exponential accuracy. 

Relatedly, it is possible that 
the approach taken in this work  
might help in resolving lead the following disturbing question: 
Why is it that though the TEUP can easily be violated, 
it does seem to be satisfied in many ``typical'' physical systems? 
We speculate that the reason for that is that 
a typical local Hamiltonian cannot be fast forwarded. 
By our Theorem \ref{thm:main}, such a Hamiltonian would obey the cTEUP 
(as well as the weaker TEUP). 
Proving that a typical local Hamiltonian cannot be fast forwarded, 
would then constitute an explanation to the observed phenomenon that the  
TEUP is obeyed in most cases.  
One way towards such a proof would be to try to mimick the proof 
of Theorem \ref{thm:No2sparseFF}
for a typical, randomly chosen local Hamiltonian. 
The claim seems likely to be true; a possible way to prove it might be to show it to be true for every universal Hamiltonian, defined in an appropriate sense, and then show that a typical local Hamiltonian is universal in that sense. 

As discussed in Subsection \ref{sec:BQPinPSPASEimplyallFF}, 
it remains open whether the converse of Theorem \ref{thm:No2sparseFF} holds 
or not. 
The relation between quantum algorithms and fast forwarded Hamiltonian 
put forward in this work also raises
another question motivated by computer science - 
what other interesting problems can have their solution hidden in the 
exponentially accurate spectrum of a Hamiltonian that can be fast forwarded? 
Could finding new examples for fast forwarding 
interesting Hamiltonians, lead to new quantum algorithms? 

More generally, the very notion of fast forwarding Hamiltonians, 
namely, projecting 
the physical system far into the future using computationally 
limited resources, captures the imagination. What else can be 
done with such systems, experimentally or theoretically? 

Last but not least, our work leaves open the question of whether 
parameter estimation in sensing and metrology 
can also be dramatically improved (beyond 
the Heisenberg limit, as in \cite{HGAKR15, KLSL14}). As explained in \ref{sec:HeisenbergInMetrology}, this is impossible to do in the usual sensing paradigm in which the Hamiltonian is given in a black box. However it is conceivable that one can use more knowledge about the Hamiltonian and possibly quantum computational techniques such as fast forwarding or others to go beyond the 
Heisenberg limit.  This possibility is left for future exploration. 

\section{Acknowledgements}
We are grateful to Scott Aaronson, Fernando Brand\~{a}o, Nadav Katz,
Leonid Palterovitch,
Daniel Roberts, Or Sattath and 
Norman Ying Yao, for very useful comments.  
In particular, we acknowledge a discussion with Norman Yao which eventually 
led to Theorem \ref{thm:quadFF}. 
\section*{Appendix}
\appendix

\section{TEUP for unknown Hamiltonians} \label{apndx:TEUPforUnknown}
We slightly strengthen the TEUP for completely unknown Hamiltonians \cite{AMP02}, by showing it holds even when the eigenstates are known. Then we  adapt it to comply with $\nicefrac{2}{3}-$accuracy (definition \ref{def:eta-accuracy}) instead of mean deviation.
In this section, our definition of a Hamiltonian with unknown eigenvalues is as 
stated in Theorem \ref{thm:cTEUPUnknown}. 

\begin{theorem} [TEUP for Hamiltonians with 
unknown eigenvalues (adapted from  \cite{AMP02})] \label{thm:AMPteup}
If the eigenvalues of a Hamiltonian acting on a system are
unknown, then the precision $\Delta E$ with which one can estimate
the energy of an eigenstate with energy $E$ in a time  obeys the constraint
\begin{equation}
\Delta E\Delta t > 1/4.
\end{equation}
$\Delta E$ is the mean deviation of the measurement:
\begin{equation} \label{eq:MeanDeviation}
\Delta E=\sum_{E'} \Pr (E' | E) \abs{ E-E'}
\end{equation}
$\Delta t$ is the total time the Hamiltonian was applied.
\end{theorem}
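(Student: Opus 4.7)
The plan is to adapt the argument of \cite{AMP02} to the slightly more general setting where the eigenvectors of $H$ are known but the eigenvalues are not. The key observation is that once the eigenvectors are fixed, the eigenvalue associated to the given input eigenstate $\ket{\psi_E}$ is a single unknown parameter, and the measurement procedure is essentially an estimator of this parameter whose performance is limited by the total time $\Delta t$ the black box $e^{-iH\tau}$ is applied.

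First, I would fix all eigenvectors and all eigenvalues except the one associated to the given eigenstate $\ket{\psi_E}$, obtaining a one-parameter family of admissible Hamiltonians $\{H(E)\}_E$ with $H(E) = H_0 + (E-E_0)\ketbra{\psi_E}{\psi_E}$. Each member belongs to the class of Hamiltonians the procedure must handle, so its output distribution $P(\cdot | E)$ must have mean deviation at most $\Delta E$ from $E$ for \emph{every} $E$.

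Second, I model the measurement as a quantum circuit alternating fixed unitaries $V_j$ on system-plus-ancilla with black-box evolutions $e^{-iH(E) t_j}$, $\sum_j t_j = \Delta t$, applied to $\ket{\psi_E}\otimes \ket{0}_{anc}$, producing a pre-measurement state $\ket{\Psi(E)}$ followed by a projective measurement yielding outcome $E'$. By Duhamel's formula, $\partial_E \ket{\Psi(E)}$ is a sum over segments of insertions of $-it_j\,\ketbra{\psi_E}{\psi_E}$ conjugated by surrounding unitaries, so $\norm{\partial_E \ket{\Psi(E)}} \le \Delta t$. Integrating gives $\norm{\ket{\Psi(E_2)} - \ket{\Psi(E_1)}} \le |E_2 - E_1|\, \Delta t$, and hence the outcome distributions satisfy $\norm{P(\cdot|E_2)-P(\cdot|E_1)}_{TV} \le 2|E_2-E_1|\, \Delta t$.

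Third, Markov's inequality applied to the mean-deviation assumption gives $\Pr(|E'-E_i| \ge 2\Delta E \mid E_i) \le 1/2$ for every candidate eigenvalue $E_i$. Choosing $E_1, E_2$ with $E_2 - E_1 > 4\Delta E$ places the concentration windows of $P(\cdot|E_1)$ and $P(\cdot|E_2)$ on disjoint half-lines, forcing $\norm{P(\cdot|E_1)-P(\cdot|E_2)}_{TV}$ to be bounded below by a positive constant. Combining with the upper bound from the previous step yields an inequality of the form $\Delta E \cdot \Delta t \ge c$.

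The main obstacle is obtaining the sharp constant $1/4$; the naive combination of steps two and three yields a weaker constant. To sharpen it, I would follow the AMP averaging technique: place a uniform prior on the unknown $E$ over a window of length $W$, compute the Bayesian expected mean deviation, and use the Duhamel bound to translate this into an integral inequality relating $\Delta E$, $\Delta t$, and $W$. Optimizing over $W$ then gives the constant $1/4$, which is essentially tight, as one can verify on single-qubit phase-estimation examples where one application of $e^{-iH\Delta t}$ distinguishes $E=0$ from $E \sim 1/\Delta t$.
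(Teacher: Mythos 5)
Your overall strategy (reduce energy estimation to distinguishing members of a one-parameter family of admissible Hamiltonians, upper-bound the distinguishability of the output distributions by the total interaction time via a Duhamel argument, and lower-bound it via the accuracy guarantee) is the right one and is in the spirit of \cite{AMP02}. It is in fact more self-contained than the paper's own proof, which simply asserts that the argument of \cite{AMP02} goes through unchanged when the eigenvectors are known but the eigenvalues are not, and then adds exactly one substantive point: any scheme that applies the black box to several probes in parallel (e.g.\ on Noon-type entangled states) can be converted, by adding registers and swapping, into an equivalent sequential protocol, which is what justifies reading $\Delta t$ as the \emph{total} time the Hamiltonian is applied. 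Your circuit model with alternating $V_j$ and $e^{-iH(E)t_j}$, $\sum_j t_j=\Delta t$, silently assumes sequential application; you should either include this reduction or argue that your derivative bound still sums to the total applied time under parallel application.

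The genuine gap is quantitative, and the constant $1/4$ is part of the statement. First, your step three is vacuous as written: Markov gives $\Pr\prnt{\abs{E'-E_i}\ge 2\Delta E}\le 1/2$, so with windows of radius $2\Delta E$ separated by more than $4\Delta E$ you only get $P(W_1|E_1)\ge 1/2$ while $P(W_1|E_2)\le 1/2$, hence a total-variation lower bound of $0$, not "a positive constant." You must take radius $k\Delta E$ with $k>2$, and optimizing $k$ against the upper bound $\norm{P(\cdot|E_2)-P(\cdot|E_1)}_{TV}\le 2\abs{E_2-E_1}\Delta t$ yields $\Delta E\,\Delta t\gtrsim 1/16$ to $1/32$ depending on normalization conventions, far from $1/4$. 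You acknowledge this and defer to "the AMP averaging technique," but that technique is precisely the nontrivial content of \cite{AMP02} (in the paper's Appendix it surfaces as the sharper distinguishability bound $P_{err}\ge\frac{1}{2}\prntt{1-\sin\prnt{\varepsilon\Delta t/2}}$, which beats the Duhamel-plus-TV estimate), and it is not executed in your proposal. As it stands, your argument establishes a TEUP with a strictly weaker constant, and therefore does not prove the stated inequality $\Delta E\,\Delta t>1/4$.
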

\begin{proof}
The proof of TEUP for completely unknown Hamiltonians\cite{AMP02} holds for the case in which only eigenvalues are unknown, with $\Delta t$ denoting the duration of the measurement, and the scheme is sequential, i.e.,  only one system is affected by the Hamiltonian at any given time. 
One might be worried however that by applying the Hamiltonian on several 
probes in parallel on entangled states as is done for example 
in the case of Noon 
states (see Section \ref{sec:HeisenbergInMetrology}), 
one might be able to  bypass the bound achieved in 
\cite{AMP02}. 
However, notice that given 
any measurement scheme which applies the Hamiltonian 
on several probes (registers) in parallel,
one can apply standard quantum computation techniques of adding a register and swapping between registers, to arrive at an equivalent 
protocol which only applies the Hamiltonian on one probe (register) 
sequentially. Hence  we take $\Delta t$ to be 
the total time the Hamiltonian was applied. 
\end{proof}

In terms of $\nicefrac{2}{3}$-accuracy, Theorem \ref{thm:AMPteup} takes the following form:
\begin{theorem} \label{thm:TEUP23confidence}
Let $H$ be a Hamiltonian with unknown eigenvalues. The $\nicefrac{2}{3}$-accuracy $\delta E$ of measuring the energy of an eigenstate depends on the time the Hamiltonian was sampled $\Delta t$ by
\begin{equation}
\delta E \Delta t \ge 1/3
\end{equation}
\end{theorem}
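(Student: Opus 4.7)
The plan is to deduce Theorem~\ref{thm:TEUP23confidence} from Theorem~\ref{thm:AMPteup} by translating between the two distinct notions of accuracy. Theorem~\ref{thm:AMPteup} is stated in terms of the mean deviation $\Delta E = \sum_{E'} \Pr(E'|E)|E-E'|$, whereas the target bound is phrased in terms of the $\nicefrac{2}{3}$-accuracy $\delta E$ satisfying $\Pr(|E-E'|\le \delta E)\ge 2/3$. The goal is to show that any measurement scheme achieving the latter must have a mean deviation — or a closely related quantity to which the \cite{AMP02} argument applies — that is comparable to $\delta E$, so that the product bound transfers with the cleaner constant $1/3$.

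First I would observe that one may assume without loss of generality that the measurement outcomes lie in the spectral range of $H$ (contained in $[-\|H\|,\|H\|]$), since clipping outputs outside this interval to the nearest spectral value cannot worsen either the mean deviation or the $\nicefrac{2}{3}$-accuracy. With bounded outputs, I would use the definition of $\nicefrac{2}{3}$-accuracy to split the mean deviation according to whether $|E-E'|\le\delta E$ or not, producing a relation between $\Delta E$, $\delta E$, and the spectral bound, and feed the result into Theorem~\ref{thm:AMPteup}.

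The main obstacle is constants: the naive truncation gives $\Delta E\le \delta E + C$ for a constant $C$ depending on the spectral range, and plugging this into $\Delta E\Delta t>1/4$ only produces a useful lower bound on the product $\delta E\,\Delta t$ in a restricted regime — certainly not the clean $1/3$ claimed. The natural way to cure this is to reinterpret the argument of \cite{AMP02} as a distinguishability bound: a measurement with $\nicefrac{2}{3}$-accuracy $\delta E$ yields, for any pair of eigenvalues with $|E_1-E_2|>2\delta E$, a distinguishing procedure with success probability at least $2/3$. The standard phase-accumulation analysis (starting from the superposition $(|E_1\rangle+|E_2\rangle)/\sqrt 2$ evolved for time $\Delta t$) then gives $|E_1-E_2|\Delta t \gtrsim 2\arcsin(1/3)$, which on taking the infimum $|E_1-E_2|\downarrow 2\delta E$ delivers $\delta E\,\Delta t\ge\arcsin(1/3)>1/3$.

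The hard part of executing the plan is therefore not the high-level structure but the faithful import of the \cite{AMP02} machinery: the proof must cover all legitimate measurement strategies, including those that use ancillas and that switch $H$ on and off adaptively on different parts of the system. As emphasized in the proof of Theorem~\ref{thm:cTEUPUnknown}, sequentializing such schemes and counting the total time for which $H$ is sampled is what allows $\Delta t$ to play its role in the product bound; the same reduction is what justifies the distinguishability argument above. Once that accounting is in place, the conversion from the $\nicefrac{2}{3}$-accuracy hypothesis to a distinguisher, together with a careful optimization of the trigonometric constant, yields $\delta E\,\Delta t\ge 1/3$.
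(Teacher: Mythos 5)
Your final argument coincides with the paper's own proof: both convert a $\nicefrac{2}{3}$-accurate energy measurement into a distinguisher between $H$ and $H+\varepsilon\mathbbm{1}$ (for $\varepsilon$ just above $2\delta E$) with error probability at most $1/3$, and then invoke the \cite{AMP02} distinguishability bound $P_{err}\ge \frac{1}{2}\prntt{1-\sin(\varepsilon\Delta t/2)}$ to force $\delta E\,\Delta t\ge \arcsin(1/3)>1/3$ — the paper merely phrases this as a contradiction with $\varepsilon=2/(3\Delta t)$. The truncation-based comparison of mean deviation to $\nicefrac{2}{3}$-accuracy that you correctly abandon is likewise absent from the paper, and the sequentialization of parallel/adaptive uses of $H$ that you flag is handled there exactly as in Theorem \ref{thm:AMPteup}.
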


\begin{proof}
Assume by contradiction that there exists 
some family of Hamiltonians with unknown eigenvalues (but fixed eigenstates
which are common to all),   
and also that there exists a given eigenstate and constants 
$\delta E$ and $\Delta t$, s.t.
one can perform an energy measurement of the eigenstate with 
$\nicefrac{2}{3}$-accuracy $\delta E$, 
while applying the Hamiltonian for 
$\Delta t$, and yet $\delta E \Delta t<\frac{1}{3}$.  
We will derive a contradiction by showing that this implies a protocol 
which is too strong, for the distinguishability problem studied in 
\cite{AMP02}.  

The distinguishability problem is defined as follows: 
Given access to a Hamiltonian by a black box, determine whether the 
Hamiltonian in the box is 
$H_1$ or $H_2=H_1+\varepsilon\mathbbm{1}$ (it is promised 
that the Hamiltonian in the box is one of the two, and it is assumed that 
there are no computational bounds outside the box, and in particular, 
we can feed the box any eigenstate we want). 
Both Hamiltonians have an a-priory 
probability $1/2$. Define the probability of error for a protocol 
for this task by: 
\begin{equation}
P_{err} = \frac{1}{2} 
\prntt{\Pr(output~2|H_1)+\Pr(output~1|H_2)}. 
\end{equation} 

Using our assumed energy measurement, we can derive a protocol for 
this distinguishability task between two Hamiltonians from the family: 
$H_1$ and $H_2=H_1+\varepsilon\mathbbm{1}$, with 
$\varepsilon=\nicefrac{2}{3\Delta t}$.  
Apply an energy measurement with $\nicefrac{2}{3}$-accuracy 
$\delta E<\nicefrac{1}{3\Delta t}=\nicefrac{\varepsilon}{2}$, 
to an eigenstate of the Hamiltonians (which by assumption 
we can generate). We know the energy of the eigenstate 
is either $E$ or $E+\varepsilon$.    
The procedure outputs $H_1$ if the measurement outcome is closer to $E$ than to $E+\varepsilon$ and outputs $H_2$ otherwise.  
From the definition of $\nicefrac{2}{3}$-accuracy, in this procedure,
\begin{equation} \label{eq:DeltaHcounter}
P_{err} < 1/3.
\end{equation}

However, one of the intermediate results on Hamiltonian 
distinguishability in \cite{AMP02} is the following: 
\begin{lemma} [$H$ distinguishability, adapted from \cite{AMP02} section III.B
]
Any algorithm solving 
the distinguishability problem defined above for distinguishing between 
$H_1$ and $H_2=H_1+\varepsilon\mathbbm{1}$, 
while applying the Hamiltonian in the black box for a total time $\Delta t$, 
satisfies
\begin{equation} \label{eq:DeltaH}
P_{err} \ge 
\frac{1}{2} \prntt{1-\sin\prnt{\frac{\varepsilon \Delta t}{2}}} 
\end{equation} 
if $\varepsilon \Delta t < \pi$. 
\end{lemma}

Combining this lemma and Equation \ref{eq:DeltaHcounter} we have 
\begin{equation}
\frac{1}{3} > P_{err} \ge \frac{1}{2} \prntt{1-\sin\prnt{\frac{1}{3}}}, 
\end{equation}
which is a contradiction. 
\end{proof}

\section{Proofs of claims in Section \ref{sec:quadratic}}
\setcounter{claim}{2}
\begin{claim}
Let $\textbf{a}$ be a
column vector whose $j\th$ coordinate is $a_j$ and let $\textbf{a}^\dagger$ be  a column vector whose $j\th$ coordinate 
is $a_j^\dagger$. The Hamiltonian $H$ can be written as 
\begin{equation}
H=\frac{1}{2}\left(\begin{array}{cc}
\overline{\mathbf{a}^{\dagger}} & \overline{\mathbf{a}}\end{array}\right)\left(\begin{array}{cc}
A & B^{*}\\
B & -A^{*}
\end{array}\right)\left(\begin{array}{c}
{\mathbf{a}}\\
{\mathbf{a}}^{\dagger}
\end{array}\right)+\frac{1}{2}\tr(A)
\end{equation} 
Here, the overline  indicates a matrix transposition, i.e., $\overline{\mathbf{a}^{\dagger}}, \overline{\mathbf{a}} $ are the row vectors corresponding to $\textbf{a}^\dagger, \textbf{a}$ respectively.
\end{claim}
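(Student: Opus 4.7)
My plan is a direct expansion of the right-hand side followed by reduction to the given form of $H$ using the canonical anticommutation relations. I would carry out four steps in this order.

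First, I would explicitly expand the block matrix product to write
\begin{equation}
\frac{1}{2}\left(\begin{array}{cc}\overline{\mathbf{a}^{\dagger}} & \overline{\mathbf{a}}\end{array}\right)\left(\begin{array}{cc}A & B^{*}\\ B & -A^{*}\end{array}\right)\left(\begin{array}{c}\mathbf{a}\\ \mathbf{a}^{\dagger}\end{array}\right) = \frac{1}{2}\Big(\overline{\mathbf{a}^\dagger}A\mathbf{a} + \overline{\mathbf{a}^\dagger}B^*\mathbf{a}^\dagger + \overline{\mathbf{a}}B\mathbf{a} - \overline{\mathbf{a}}A^*\mathbf{a}^\dagger\Big),
\end{equation}
splitting the result into an ``$A$-sector'' consisting of the first and fourth terms, and a ``$B$-sector'' consisting of the second and third terms. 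The entire content of the claim is then that the $A$-sector contributes $\sum_{i,j}A_{i,j}a^\dagger_i a_j - \tfrac{1}{2}\tr(A)$ while the $B$-sector reproduces the off-diagonal pairing terms of $H$; the $+\tfrac{1}{2}\tr(A)$ constant on the right-hand side is there precisely to absorb the shift produced when handling the $A$-sector.

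The main nontrivial step is the $A$-sector, because $\overline{\mathbf{a}}A^*\mathbf{a}^\dagger = \sum_{i,j}A^*_{i,j}a_i a_j^\dagger$ is not yet normal-ordered. I would apply $\{a_i,a_j^\dagger\}=\delta_{i,j}$ to get $a_i a_j^\dagger = \delta_{i,j}-a_j^\dagger a_i$, and then use the Hermiticity assumption $A=A^\dagger$ (so $A^*_{i,j}=A_{j,i}$ and $\tr(A^*)=\tr(A)$) to obtain
\begin{equation}
-\overline{\mathbf{a}}A^*\mathbf{a}^\dagger = -\tr(A) + \sum_{i,j}A_{i,j}a_i^\dagger a_j.
\end{equation}
Combining this with $\overline{\mathbf{a}^\dagger}A\mathbf{a}=\sum_{i,j}A_{i,j}a^\dagger_i a_j$, the $A$-sector contributes $\sum_{i,j}A_{i,j}a^\dagger_i a_j - \tfrac{1}{2}\tr(A)$, which after adding the explicit $+\tfrac{1}{2}\tr(A)$ constant is precisely the quadratic number-conserving term of $H$.

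For the $B$-sector I would simply read off $\overline{\mathbf{a}^\dagger}B^*\mathbf{a}^\dagger=\sum_{i,j}B^*_{i,j}a^\dagger_i a^\dagger_j$ and $\overline{\mathbf{a}}B\mathbf{a}=\sum_{i,j}B_{i,j}a_i a_j$; after dividing by two these reproduce the pairing and anti-pairing terms of $H$ directly, using the fermionic relabelling $a_i^\dagger a_j^\dagger = -a_j^\dagger a_i^\dagger$ (which is the identity the paper has already invoked when writing $H$ in its final displayed form) together with $B=B^\dagger$ to match the index convention used in the statement of the Hamiltonian. Assembling all four pieces and adding the explicit constant $\tfrac{1}{2}\tr(A)$ completes the claim. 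The only step that requires any care is the trace-shift computation in the $A$-sector; the remaining bookkeeping is purely algebraic once the anticommutation relations are in hand.
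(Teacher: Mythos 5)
Your proposal is correct and is essentially the paper's own argument run in reverse: the paper starts from $H$ and uses $\{a_i,a_j^\dagger\}=\delta_{i,j}$ together with $A=A^\dagger$ to split $\sum_{i,j}A_{i,j}a_i^\dagger a_j$ into the symmetrized pair $\frac{1}{2}(A_{i,j}a_i^\dagger a_j - A_{i,j}^*a_i a_j^\dagger)$ plus the $\frac{1}{2}\tr(A)$ shift, whereas you expand the block form and normal-order $-\overline{\mathbf{a}}A^*\mathbf{a}^\dagger$ --- the same identity and the same trace bookkeeping. The $B$-sector is treated at the same level of brevity in both (the paper simply says ``reorganizing $H$ as a block matrix concludes the proof''), so nothing is missing relative to the paper's proof.
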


\begin{proof}
The proof relies on the hermiticity of $A,B$ and the anticommutation relations. For $i\neq j$ 
\begin{equation}
\sum_{i\neq j} A_{i,j}a_i^\dagger a_j= 
\frac{1}{2} \sum_{i\neq j} \prnt{A_{i,j}a_i^\dagger a_j - A_{i,j}a_j a_i^\dagger} =
\frac{1}{2} \prnt{\sum_{i\neq j} A_{i,j}a_i^\dagger a_j - \sum_{j\neq i} A_{j,i}a_i a_j^\dagger}=
 \frac{1}{2} \sum_{i\neq j}(A_{i,j}a_i^\dagger a_j - A^*_{i,j}a_i a_j^\dagger)
\end{equation}
For $i=j$,
\begin{equation}
\sum_i A_{i,i}a_i^\dagger a_i = 
\frac{1}{2}\sum_i A_{i,i}a_i^\dagger a_i + \frac{1}{2} \sum _i A_{i,i}(1-a_i a_i ^\dagger )= \frac{1}{2}\prnt{\sum_i A_{i,i}a_i^\dagger a_i - \sum_i A^*_{i,i}a_i a_i^\dagger  + \tr (A)}
\end{equation}
Reorganizing $H$ as a block matrix concludes the proof. 
\end{proof}

\begin{claim}
The traceless part of the Hamiltonian can be diagonalized: 
\begin{equation}
H=  \frac{1}{2}\left(\begin{array}{cc}
\overline{\mathbf{a}^{\dagger}} & \overline{\mathbf{a}}\end{array}\right)U D U^\dagger \left(\begin{array}{c}
{\mathbf{a}}\\
{\mathbf{a}}^{\dagger}
\end{array}\right)+\frac{1}{2}\tr(A)
\end{equation} 
where $D$ is a real diagonal matrix s.t. $D_{j,j}=-D_{j+n,j+n}$ and $U$ is unitary. Furthermore, there exist matrices $V_1,V_2$ s.t. $U$ is a block matrix in the form $U=\left(\begin{array}{cc}
V_1 & V_2^*\\
V_2 & V_1^*
\end{array}\right) $
\end{claim}
\begin{proof}
Note the following symmetry of the traceless matrix $\mathcal{H}$:
\begin{equation}
\mathcal{H}\equiv \left(\begin{array}{cc}
A & B^{\dagger}\\
B & -A^{*}
\end{array}\right)~~~~\mathcal{H}^\dagger=\mathcal{H}, ~~~~\tau \mathcal{H} \tau = -\mathcal H^*, ~~~~\tau \equiv \left(\begin{array}{cc}
0 & \mathbbm{1}_m\\
\mathbbm{1}_m & 0
\end{array}\right)
\end{equation} 
The symmetry of $\mathcal H$ implies a symmetry on its eigenvectors, which are the column vectors of $U$.
\begin{equation}
\mathcal{H} \left(\begin{array}{c}
{v_1}\\
{v_2}
\end{array}\right) = \lambda \left(\begin{array}{c}
v_1\\
v_2
\end{array}\right) \Rightarrow
\mathcal{H} \left(\begin{array}{c}
{v_2^*}\\
{v_1^*}
\end{array}\right) = -\lambda \left(\begin{array}{c}
v_2^*\\
v_1^*
\end{array}\right)
\end{equation}
Hence if either $v_1\neq v_2^*$ or $v_2\neq v_1^*$, then the eigenvectors come in pairs: $w,\tau w^*$. In the case $v_1=v_2^*$ and $v_2=v_1^*$, we get that $\lambda =0$. Since there is an even number of eigenvectors of the first case, and the dimension of the subspace is $2m$, there's also an even number of eigenvectors of the second case, with eigenvalue $0$. Picking a pair of orthogonal eigenvectors $w_1,w_2$, with eigenvalue 0, we can span the subspace they define using $w_1'=w_1+iw_2$ and $w_2'=w_1-iw_2$, and we get that $w_1'=\tau w_2'^*$. Hence all eigenvectors of $\mathcal H$ come in pairs $w,\tau w^*$. By placing one eigenvector of the $i\th$  pair in column $i$ of $U$ and the other eigenvector in column $m+i$, $U$ takes the desired form. Additionally, the column placement forces  $D_{j,j}=-D_{j+m,j+m}$.
\end{proof}

\begin{claim}
By defining
\begin{equation} \label{eq:DefFermionsBs}
\left(\begin{array}{c}
{\mathbf{b}}\\
{\mathbf{b}}^{\dagger}
\end{array}\right)=
U^\dagger \left(\begin{array}{c}
{\mathbf{a}}\\
{\mathbf{a}}^{\dagger}
\end{array}\right).
\end{equation}
The Hamiltonian takes the form
\begin{equation}  
H= 2\sum_{i=1}^m b_i^\dagger b_i D_{i,i} + \frac{1}{2} \tr (A)
\end{equation}
The new operators obey the anti-commutation relations of fermions. In addition, the eigenvalues of $b_i^\dagger b_i$ either 0 or 1.  
\end{claim}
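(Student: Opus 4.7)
The plan is to substitute $\begin{pmatrix}\mathbf{a}\\ \mathbf{a}^\dagger\end{pmatrix}=U\begin{pmatrix}\mathbf{b}\\ \mathbf{b}^\dagger\end{pmatrix}$ into the block form of the Hamiltonian from the previous claim, and then exploit the special structure of $U$ to collapse the resulting sandwich into a diagonal form in the number operators. Rewriting the row-vector on the left as $(\overline{\mathbf{a}^\dagger},\overline{\mathbf{a}})=\begin{pmatrix}\mathbf{a}\\ \mathbf{a}^\dagger\end{pmatrix}^{T}\!\tau$, where $\tau$ is the block-swap from the previous claim, the bilinear part becomes $\tfrac{1}{2}(\overline{\mathbf{b}},\overline{\mathbf{b}^\dagger})\,U^{T}\tau U\,D\begin{pmatrix}\mathbf{b}\\ \mathbf{b}^\dagger\end{pmatrix}$. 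The key algebraic identity I will prove is $U^{T}\tau U=\tau$, which follows from two facts already available: the block-structure symmetry $\tau U^{*}=U\tau$ (established in the preceding claim) and the scalar identity $U^{T}U^{*}=(U^{\dagger}U)^{*}=I$. Using $U^{T}\tau U=\tau$ collapses the expression to $\tfrac{1}{2}(\overline{\mathbf{b}},\overline{\mathbf{b}^\dagger})\,\tau D\begin{pmatrix}\mathbf{b}\\ \mathbf{b}^\dagger\end{pmatrix}$. Since $\tau D$ places the lower $-D_1$ block opposite $\mathbf{b}^\dagger$ and the upper $D_1$ block opposite $\mathbf{b}$, expanding yields $\sum_i D_{ii}(b_i^\dagger b_i-b_i b_i^\dagger)$; applying $\{b_i,b_i^\dagger\}=1$ (proved next) then gives $\sum_i D_{ii}(2b_i^\dagger b_i-1)$, producing the claimed diagonal form up to absorption of an additive constant into the trace term.

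For the fermionic anti-commutation relations, I will write $b_i,b_i^\dagger$ explicitly as linear combinations of $a_j,a_j^\dagger$ through the blocks $V_1,V_2$ read off from $U^\dagger$. Expanding $\{b_i,b_j^\dagger\}$ and invoking $\{a_k,a_l\}=\{a_k^\dagger,a_l^\dagger\}=0$ and $\{a_k,a_l^\dagger\}=\delta_{kl}$ leaves precisely $(V_1^\dagger V_1+V_2^\dagger V_2)_{ij}$, which equals $\delta_{ij}$ by the top-left block of $U^\dagger U=I$. A parallel calculation for $\{b_i,b_j\}$ reduces to $(V_1^{T}V_2+V_2^{T}V_1)_{ij}$, which vanishes by the off-diagonal block of $U^\dagger U=I$; the relation $\{b_i^\dagger,b_j^\dagger\}=0$ then follows by Hermitian conjugation.

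For the spectrum of $b_i^\dagger b_i$, the anti-commutation $\{b_i,b_i\}=0$ forces $b_i^{2}=0$, and hence $(b_i^\dagger)^{2}=0$. A short computation then gives $(b_i^\dagger b_i)^{2}=b_i^\dagger(\mathbbm{1}-b_i^\dagger b_i)b_i=b_i^\dagger b_i$, so $b_i^\dagger b_i$ is a Hermitian idempotent, i.e., an orthogonal projection with eigenvalues in $\{0,1\}$. The main obstacle is the verification of the identity $U^{T}\tau U=\tau$, which is precisely where the special ``particle-hole'' block symmetry of $U$ from the previous claim must combine with ordinary unitarity; everything after that is routine bookkeeping with block-matrix products and the $a$-anti-commutation relations.
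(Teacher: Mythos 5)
Your proof is correct and follows essentially the same route as the paper's: the identity $U^{T}\tau U=\tau$ that you isolate is exactly the paper's observation that $\left(\overline{\mathbf{b}^{\dagger}}\ \overline{\mathbf{b}}\right)=\left(\overline{\mathbf{a}^{\dagger}}\ \overline{\mathbf{a}}\right)U$ (both follow from $\tau U^{*}=U\tau$ together with unitarity), and your anticommutation and idempotence arguments coincide with the paper's. One remark: carrying your computation through with the overall $\tfrac{1}{2}$ kept in place yields $H=\sum_{i}D_{i,i}\,b_i^{\dagger}b_i-\tfrac{1}{2}\sum_{i=1}^{m}D_{i,i}+\tfrac{1}{2}\tr(A)$ rather than the claim's $2\sum_{i}D_{i,i}\,b_i^{\dagger}b_i+\tfrac{1}{2}\tr(A)$; this normalization/constant discrepancy lies in the paper's own statement (whose appendix proof never actually derives the displayed formula for $H$) and is immaterial to the fast-forwarding application, which only needs $H$ to be a constant plus a sum of commuting number operators with classically computable coefficients.
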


\begin{proof}
From equations \ref{eq:DefFermionsBs} and the definition of $U$, 
\begin{gather}
b_i=\sum_j (V_1^\dagger)_{i,j} a_j + (V_2^\dagger)_{i,j} {a}_j^\dagger \\
{b}_i^\dagger=\sum_j(\overline V_2)_{i,j} {a}_j + (\overline V_1)_{i,j}{a}^\dagger_j
\end{gather}
Hence $b_i^\dagger$ is indeed the hermitian conjugate of $b_i$. Additionally, one can see that $\left(\begin{array}{cc}
\overline{\mathbf{b}^{\dagger}} & \overline{\mathbf{b}}\end{array}\right)=\left(\begin{array}{cc}
\overline{\mathbf{a}^{\dagger}} & \overline{\mathbf{a}}\end{array}\right)U$.
Next we show that the transformation is canonical, i.e., the anticommutation relations are preserved: 
\begin{equation}
\begin{split}
\prnttt{b_i,b_j}&=\prnttt{\sum_{k=1}^m U_{k,i}^* a_k + \sum_{k=1}^{m} U_{k+m,i}^* a_{k}^\dagger, \sum_{\ell=1}^m U_{\ell,j}^* a_\ell + \sum_{\ell=1}^{m} U_{\ell+m,j}^* a_{\ell}^\dagger}\\
&=\sum_{k=1}^m \prnt{U_{k,i}^* U_{k+m,j}^*+  U_{k+m,i}^* U_{k,j}^*}=\sum_{k=1}^m \prnt{U_{k,i}^* U_{k,j+m}+  U_{k+m,i}^* U_{k+m,j+m}}=0
\end{split}
\end{equation}
(we used the structure of $U$ for the transition of the second line)

\begin{equation}
\begin{split}
\prnttt{b_i,b_j^\dagger}=\prnttt{\sum_{k=1}^m U_{k,i}^* a_k + \sum_{k=1}^{m} U_{k+m,i}^* a_{k}^\dagger, \sum_{\ell=1}^m U_{\ell,j} a_\ell^\dagger + \sum_{\ell=1}^{m} U_{\ell+m,j} a_{\ell}}
=\sum_{k=1}^m \prnt{U_{k,i}^* U_{k,j}+  U_{k+m,i}^* U_{k+m,j}}=\delta_{i,j}
\end{split}
\end{equation}
Finally, the eigenvalues of $b_j^\dagger b_j$ are 0 and 1, because $(b_j^\dagger b_j)^2=b_j^\dagger b_jb_j^\dagger b_j=(1-b_j b_j^\dagger) b_j^\dagger b_j=b_j^\dagger b_j$,
similarly to the number operator $a_j^\dagger a_j$.
\end{proof}

\bibliography{bib}
\bibliographystyle{unsrt}

\end{document}